\newtheorem{observation}{Observation}
\newcommand{\notshow}[1]{}
\newcommand\floor[1]{\lfloor#1\rfloor}
\newcommand\ceil[1]{\lceil#1\rceil}
\newtheorem{strategy}{Strategy}
\title{Simple Compact Monotone Tree Drawings
%\thanks{Submitted to the editors XX/XX/2017}
\thanks{A preliminary version of this paper which included the one-quadrant algorithm for non-convex monotone tree drawings was presented in~\cite{DBLP:conf/gd/OikonomouS17}.}
}
\author{
Argyris Oikonomou\thanks{Department of Computer Science, Yale University, U.S.A. (\email{argyris.oikonomou@yale.edu}) }
\and 
Antonios~Symvonis\thanks{School of Applied Mathematical \& Physical Sciences, National Technical University of Athens, Greece. (\email{symvonis@math.ntua.gr}) }
}
\date{}
\begin{document}
\maketitle

\begin{abstract}
A monotone drawing of a graph G is a straight-line  drawing of G such that every pair of vertices is connected by a path that is monotone with respect to some direction. 

Trees, as a special class of graphs, have been the focus of several research papers. He and He~\cite{mt:4} were the first to achieve quadratic drawing area by 
showing how to produce a   monotone drawing of an arbitrary $n$-vertex tree that is contained in a $12n \times 12n$ grid.

All monotone tree drawing algorithms that have appeared in the literature consider rooted ordered trees and they draw them so that (i) the root of the tree is drawn at the origin of the drawing, (ii) the drawing is confined in the first quadrant, and (iii) the ordering/embedding of the tree is respected. In this paper, we provide a simple algorithm that has the exact same characteristics and, given an  $n$-vertex rooted tree $T$, it outputs a monotone drawing of $T$ that fits in a $n \times n$ grid.  
%\rg{
Furthermore, we extend our approach to generate drawings that are both monotone and convex, within a $n\times n$ grid.
%}

For unrooted ordered trees, we present an algorithm that produces monotone drawings that respect the ordering and  fit in an $(n+1) \times (\frac{n}{2} +1)$ grid, while, for unrooted non-ordered trees we produce monotone drawings of good aspect ratio which fit on a grid of size at most $\floor{\frac{3}{4} \left(n+2\right)} \times \floor{\frac{3}{4} \left(n+2\right)}$.
%\todo{Add convex drawing result \rg{I think it's adrressed.}}
\end{abstract}

\let\labelitemi\labelitemii
% DOTS INSTEAD OF DASHES IN ITEMIZE CUSTOM
\begin{keywords}
Monotone tree drawing, graph drawing, grid drawing, area of drawing, algorithm.
\end{keywords}

%\begin{AMS}
%XXXX
%\end{AMS}

\section{Introduction}
\label{sect:introduction}

A \emph{straight-line drawing} $\Gamma$ of a graph G is a mapping of each vertex to a distinct point on the plane and of each edge to a straight-line segment between the vertices.
A path $P=\{p_0,p_1,\ldots,p_n\}$ is \emph{monotone} if there exists a line $l$ such that the projections of the vertices of $P$ on $l$ appear on  $l$ in the same order as on $P$.
A straight-line drawing $\Gamma$ of a graph G is \emph{monotone}, if a \emph{monotone} path connects every pair of vertices.

\emph{Monotone graph drawing} has gained the recent attention of researchers and   several interesting results have appeared.
Given a planar fixed embedding of a planar graph $G$,  a planar monotone drawing of $G$ can be constructed,
but at the cost of some bends on some edges (thus no longer a straight-line drawing)~\cite{m:2}.
In the variable embedding setting, there exists a planar monotone drawing of any planar graph without any bends~\cite{m:4}.

One way to find a monotone drawing of a graph is to simply find a monotone drawing of one of its spanning trees. For that reason, the problem of finding  monotone drawings of trees has been the subject of several recent papers, starting from the work by  {Angelini et al.}~\cite{m:1} which introduced  monotone graph drawings. 
{Angelini et al.}~\cite{m:1} provided two algorithms that used ideas from number theory and more specifically Stern-Brocot trees~\cite{Stern1858,Brocot1860},~\cite[Sect. 4.5]{Graham:1994}.
The first algorithm used a grid of size $O(n^{1.6}) \times O(n^{1.6})$ (BFS-based algorithm) while the second one used a grid of size $O(n) \times O(n^2)$ (DFS-based algorithm).
Later, {Kindermann et al.}~\cite{mt:1} provided an algorithm based on Farey sequence (see~\cite[Sect. 4.5]{Graham:1994}) that used a grid of size $O(n^{1.5}) \times O(n^{1.5})$.
He and He~\cite{mt:2} gave an algorithm based on Farey sequence and  reduced the required grid size   to $O(n^{1.205}) \times O(n^{1.205})$, which was the first result that used  less than $O(n^3)$ area.
Recently, He and He~\cite{mt:3}  firstly reduced the grid size for a monotone tree drawing to $O(n  \log(n)) \times O(n  \log(n))$ and,  in a sequel paper, to   $O(n) \times O(n)$~\cite{mt:4}. Their monotone tree drawing uses a grid of size at most  $12n \times 12n$ which turns out to be asymptotically optimal as there exist trees which require at least ${n\over 9} \times {n \over 9}$ area~\cite{mt:4}.\\

%\todo[inline]{Ant: I think that we need a lemma at the preliminaries or in 3.1 stating that each convex drawing is monotone. If this does not hold, we should be always using the term "convex monotone". \rg{I think this doesn't hold in general. It holds under weak assumptions. I added a paragraph talking abou it.}}
Convex tree drawings \cite{CE:2006} is a special subclass of monotone tree drawings \cite{ArkinCM89}. A tree drawing is \emph{convex} if after extending each edge incident to a leaf to an infinite ray (originating at the leaf's parent and passing through the leaf), we obtain a drawing such that: (i) no two of the rays intersect, and (ii) the rays together with the original tree edges partition the plane into convex regions.

It is well known that a strictly convex tree drawing, obtained by strengthening condition~(ii) so that every edge of each convex face meets its neighbors at strictly convex angles, is monotone (Property~7 in \cite{m:1}). Even without this stronger requirement, convex tree drawings remain monotone under mild assumptions (Lemma~3 in \cite{m:1}). For counterexamples of monotone but non-convex drawings see \cref{fig:non_convex_by_1quad,fig:bad_case_one,fig:bad_case_two} in \Cref{sec:convex_drawings}. Although the drawings we produce in \Cref{sec:convex_drawings} are only weakly convex, so existing monotonicity theorems do not apply directly, we nonetheless prove in \cref{thm:convex_drawing} that they, too, are monotone.

\textbf{Our Contribution:}

All monotone tree drawing algorithms that have appeared in the literature consider rooted ordered trees and they draw them so that (i) the root of the tree is drawn at the origin of the drawing, (ii) the drawing is confined in the first quadrant, and (iii) the embedding of the tree is respected. In this paper, we provide a simple algorithm that has the exact same characteristics and, given an  $n$-vertex rooted tree $T$, it outputs a monotone drawing of $T$ that fits on a $n \times n$ grid. Despite its simplicity, our algorithm improves the $12n \times 12n$  result of  He and He~\cite{mt:4}. %\rg
{In addition, we show how our method can be adapted to produce convex monotone tree drawings that are embedded on an $n \times n$ grid.%\footnote{A tree drawing is convex if, by extending each edge incident to a leaf into a ray originating from its parent and passing through the leaf, the resulting configuration forms a planar convex partition of the plane \cite{CE:2006}.}
}

By relaxing the drawing restrictions we can achieve smaller drawing area. More specifically, by carefully selecting a new root for the tree, which we draw it at the origin, we can produce a ``two-quadrants'' drawing that fits in an $(n+1) \times (\frac{n}{2} +1)$ grid. We note that the produced drawing respects the given embedding of the tree. By further relaxing this requirement, i.e., by allowing to change the order of the neighbors  of a tree vertex around it, we can achieve a drawing of  better aspect ratio and smaller area (compared to our $n \times n$ algorithm). More specifically, we describe a ``four-quandrants'' algorithm that draws an $n$-vertex tree on a grid of size at most $\floor{\frac{3}{4} \left(n+2\right)} \times \floor{\frac{3}{4} \left(n+2\right)}$.
%\todo[inline]{Convex result... \rg{I think it's addressed.}}

The paper is organized as follows:
\Cref{sect:preliminaries} provides definitions and preliminary results.
In \Cref{sect:1quad_drawing} and \Cref{sec:convex_drawings} we present algorithms for one-quadrant non-convex and convex drawings, respectively. In \Cref{sect:2quad_drawing} and~\Cref{sect:4quad_drawing} we present our algorithms for  two- and four-quadrants algorithms, respectively. 
We conclude in \Cref{sect:conclusions}. 

A preliminary version of this paper which included the one-quadrant algorithm for non-convex monotone tree drawings was presented in~\cite{DBLP:conf/gd/OikonomouS17}.   

\begin{comment}
\rg{
A preliminary version of this work appeared in \cite{OikonomouS17}, in which only the one-quadrant algorithm was presented (see \Cref{sect:1quad_drawing}).  In this extended version, we introduce three novel algorithms:
\begin{itemize}
    \item \textbf{Convex monotone tree drawing} (\Cref{sec:convex_drawings}),
    \item \textbf{Two-quadrant monotone drawing} (\Cref{sect:2quad_drawing}),
    \item \textbf{Four-quadrant drawing} (\Cref{sect:4quad_drawing}).
\end{itemize}
Building on the one-quadrant foundation, these novel algorithms deliver substantial technical and conceptual innovations, producing either more compact layouts or drawings with additional properties (e.g., convex drawing property).
}
\end{comment}

\section{Definitions and Preliminaries}
\label{sect:preliminaries}\label{sec:preliminaries}

Let $\Gamma$ be a drawing of a graph $G$ and $(u,v)$ be an edge from vertex $u$ to vertex $v$ in $G$.
The slope of edge $(u,v)$, denoted by $slope(u,v)$, is the angle spanned by a counter-clockwise rotation that brings a horizontal half-line starting at $u$ and directed towards increasing $x$-coordinates to coincide with the half-line starting at $u$ and passing through $v$.
We consider slopes that are equivalent modulo $2\pi$ as the same slope.
Observe that $slope(u,v)=slope(v,u) - \pi$.

%\rg{Throughout this paper, we primarily focus on planar monotone drawings of trees. Additionally, we extend certain techniques to convex tree drawings \cite{CE:2006}, a special subclass of monotone tree drawings \cite{ArkinCM89}. A tree drawing is convex if extending each edge incident to a leaf vertex beyond the leaf yields a ray that originates from the leaf's parent vertex and passes through the leaf vertex, such that: (i) no two of these rays intersect, and (ii) the rays partition the plane into convex regions.
%We further note that monotone tree drawings, and as an extension convex tree drawings, are planar \cite{m:1}.
%We only deal with planar monotone drawings of trees, as it was proved by {Angelini et al.} that every monotone drawing of tree is planar \cite{m:1}.}

Let $T$ be a tree rooted at a vertex $r$.
Denote by $T_v$ the subtree of $T$ rooted at a vertex $v$. By $|T_v|$ we denote the number of vertices of $T_v$.
Let $v$ be a child of $u$. By 
{\color{blue}  $T^u_v$  }
we denote the tree that consists of edge $(u,v)$ and $T_v$. In the rest of the paper, we assume that all tree edges are directed away from the root. A rooted tree is said to be \emph{ordered} if there is an order imposed on the children of each vertex. A drawing is said to  \emph{ respect the ordering} of the tree (or the embedding) if the children of a vertex are drawn around it in the specified order.

When producing  a grid drawing, it is common to refer to the \emph{side-length} of the required grid   and to its  \emph{dimensions}. 
We  emphasize that we measure length (width/height)  in units of distance, but when we denote the dimensions of a grid we use the number of grid points in each dimension. So, a grid of width $w$ 
and height $h$ fits in a $(w+1) \times (h+1)$ grid.

\subsection{Slope-disjoint Tree Drawings}

Angelini et al.~\cite{m:1} defined the notion of slope-disjoint tree drawings.  Let $\Gamma$ be a drawing of a rooted tree $T$. $\Gamma$ is called a \emph{slope-disjoint} drawing of  $T$ if the following properties are satisfied:
\begin{enumerate}
\item For every vertex $u \in T$, there exist two angles $a_1(u)$ and $a_2(u)$, with $0<a_1(u)<a_2(u)<\pi$ such that for every edge $e$ that is either in $T_u$ or that enters $u$ from its parent, it holds that $a_1(u)<slope(e)<a_2(u)$.

\item For every two vertices $u, v \in T$ such that  $v$ is a child of $u$, it holds that $a_1(u) < a_1(v) < a_2(v) < a_2(u)$.
\item For every two vertices $v_1, v_2$ having  the same parent, it holds that either $a_1(v_1) < a_2(v_1) < a_1(v_2) < a_2(v_2)$ or $a_1(v_2) < a_2(v_2) < a_1(v_1) < a_2(v_1)$.
\end{enumerate}

The idea behind the definition of  slope-disjoint tree drawings is that all edges in the subtree $T_u$  as well as the edge entering $u$ from its parent have slopes that \emph{strictly} fall within  the angle-range $\left< a_1(u), a_2(u)\right>$ defined for vertex $u$.
$\left< a_1(u), a_2(u)\right>$ is called the \emph{angle-range of}  $u$ with   $a_1(u)$ and $a_2(u)$ being its \emph{boundaries}.
The convex angle formed between two half-lines with slopes $a_1(u)$ and $a_2(u)$ is denoted by $\phi_u = a_2(u)-a_1(u)$ and is called \emph{angle-range length} of $u$.

Angelini et al.~\cite{m:1} proved the following  theorems:

\begin{theorem}[Angelini et al.\cite{m:1}]\label{t:b}
Every monotone drawing of a tree is planar.
\end{theorem}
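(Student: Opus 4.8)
The plan is to argue by contradiction. Suppose the monotone straight-line drawing $\Gamma$ of the tree $T$ is not planar, so two of its edges cross. Since $\Gamma$ is a straight-line drawing in which distinct vertices map to distinct points, two edges sharing an endpoint can meet only at that shared point; hence a genuine crossing must occur between two edges $e_1=(u_1,v_1)$ and $e_2=(u_2,v_2)$ with four \emph{distinct} endpoints, at a point interior to both segments. My goal is to exhibit a single monotone path of $T$ that traverses both $e_1$ and $e_2$ as non-adjacent edges, and then to show that such a path cannot cross itself.

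First I would establish the purely combinatorial fact that any two distinct edges of a tree lie on a common simple path. Removing $e_1$ splits $T$ into two subtrees; because $e_2\neq e_1$ is still an edge, its two endpoints remain in the same component, so $e_2$ lies entirely within one of them, say the component containing $v_1$. Inside that component, one endpoint of $e_2$ lies on the tree path from $v_1$ to the other; after relabelling call this endpoint $u_2$ and the farther one $v_2$. Concatenating the tree path $u_1\to v_1\to\cdots\to u_2$ with the edge $(u_2,v_2)$ then yields the unique simple path $\rho$ of $T$ from $u_1$ to $v_2$, and by construction $\rho$ traverses both $e_1$ and $e_2$. Since $e_1$ and $e_2$ share no vertex, the subpath of $\rho$ joining $v_1$ to $u_2$ has positive length, so $e_1$ and $e_2$ are non-adjacent along $\rho$.

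Next I would invoke the monotonicity hypothesis. By assumption the vertex pair $u_1,v_2$ is joined by a monotone path, and since $T$ is a tree this path is forced to coincide with $\rho$; hence $\rho$ is monotone with respect to some direction $d$. The crux is then the elementary geometric observation that a monotone path is simple: along $\rho$ the orthogonal projection onto $d$ is a continuous, strictly increasing function of the path parameter, because each edge of $\rho$ has a strictly positive component along $d$ (exactly the requirement that the projections of consecutive vertices onto the defining line appear in path order). A strictly increasing projection is injective along the whole arc, so $\rho$ never revisits a point.

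Finally, two non-adjacent segments of a simple polygonal arc cannot meet at an interior point, for such a meeting would force a self-intersection. This contradicts the assumed interior crossing of $e_1$ and $e_2$, and the theorem follows. I expect the main obstacle to be the monotone-implies-simple step: one must argue carefully that strict monotonicity of the projection on every individual edge upgrades to global injectivity of the entire arc, so that \emph{any} two edges of $\rho$ — not merely consecutive ones — are prevented from crossing. The tree-structure lemma of the second paragraph, while needed to bring both edges onto one path, is routine by comparison.
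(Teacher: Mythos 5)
The paper never proves this statement: Theorem~\ref{t:b} is imported verbatim from Angelini et al.~\cite{m:1}, so there is no in-paper proof to compare against. Your argument is correct, and it is essentially the argument of that reference: any two edges of a tree lie on a common simple path, that path is forced to be monotone because it is the unique path joining its endpoints, and a monotone path is simple since its projection onto the defining direction is strictly increasing in the path parameter, which makes the arc injective and rules out any crossing between its edges.

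One small imprecision is worth fixing. Your opening claim that two edges sharing an endpoint ``can meet only at that shared point'' does not follow merely from distinct vertices being drawn at distinct points: the two segments could be collinear and overlap along a subsegment, which is also a planarity violation. This degenerate case is not a real obstacle --- the two adjacent edges still lie on the unique tree path through the shared vertex, that path must be monotone, and injectivity of a monotone arc forbids the overlap --- but as written your proof silently discards it. It would be cleaner to treat every intersecting pair of edges uniformly: put both edges on their common tree path, invoke monotonicity of that path, and conclude from injectivity that the edges meet at most at a shared endpoint, whether or not they are adjacent.
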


\begin{theorem}[Angelini et al.\cite{m:1}]\label{t:a}
Every slope-disjoint drawing of a tree is monotone.
\end{theorem}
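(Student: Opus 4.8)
The plan is to use that in a tree there is a \emph{unique} path between any two vertices, and to show that this unique path is always monotone. I would first record the elementary criterion that a directed path $p_0, p_1, \ldots, p_k$ is monotone whenever there is a direction $d$ onto which every step $p_{i-1}\to p_i$ projects positively; a line $l$ parallel to $d$ then witnesses the required ordering of the projections. Such a $d$ exists as soon as the slopes of the directed edges, viewed as points on the circle $\mathbb{R}/2\pi\mathbb{Z}$, all lie inside a single open arc of length less than $\pi$ (i.e.\ inside an open half-plane). So the whole task reduces to bounding, for each tree path, the angular spread of its directed edges by $\pi$.

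Fix two vertices $u,v$ and let $w$ be their lowest common ancestor, so that the unique $u$--$v$ path ascends from $u$ to $w$ and then descends from $w$ to $v$. I would treat two cases. If $w\in\{u,v\}$, say $u=w$ is an ancestor of $v$, then every edge of the path lies in $T_u$, so by Property~1 every edge slope lies strictly in $\left<a_1(u),a_2(u)\right>$, an interval of length $\phi_u=a_2(u)-a_1(u)<\pi$ since $0<a_1(u)<a_2(u)<\pi$. Thus the (monotone-descending) path already fits in an arc shorter than $\pi$ and is monotone; the case where $v$ is the ancestor is identical up to orientation, which does not affect monotonicity.

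The substantive case is $w\notin\{u,v\}$. Let $p$ and $q$ be the children of $w$ on the path towards $u$ and towards $v$ respectively; they are distinct siblings. Every tree edge on the $w$-to-$u$ portion either lies in $T_p$ or is the edge $(w,p)$ entering $p$ from its parent, so by Property~1 all of them have slope in $\left<a_1(p),a_2(p)\right>$; likewise all tree edges on the $w$-to-$v$ portion have slope in $\left<a_1(q),a_2(q)\right>$. On the actual path, however, the first portion is traversed \emph{against} the edge directions, so these directed edges have slopes shifted by $-\pi$ and lie in $(a_1(p)-\pi,\,a_2(p)-\pi)$, while the descending portion keeps its slopes in $(a_1(q),a_2(q))$. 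By Property~3 assume without loss of generality $a_2(p)<a_1(q)$ (the reverse is symmetric). The main obstacle, which I expect to be the only delicate point, is to verify with the mod-$2\pi$ wrap-around handled correctly that these two intervals jointly fit in an arc of length below $\pi$. The two intervals split the circle into two complementary gaps; the gap running from the top $a_2(p)-\pi$ of the ascending interval through slope $0$ to the bottom $a_1(q)$ of the descending interval has length $a_1(q)-a_2(p)+\pi$, which exceeds $\pi$ precisely because $a_1(q)>a_2(p)$. Hence the complementary arc, which contains both intervals and therefore every directed edge of the path and which passes through the antipode $\pi\equiv-\pi$, has length $2\pi-\big(a_1(q)-a_2(p)+\pi\big)=\pi-\big(a_1(q)-a_2(p)\big)<\pi$, using $0<a_1(p)$ and $a_2(q)<\pi$ from Property~1 to keep the endpoints in range. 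Thus all directed edges lie in a common open arc shorter than $\pi$, so the path is monotone; since $u,v$ were arbitrary, $\Gamma$ is monotone.
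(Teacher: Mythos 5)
Your proof is correct. Note that the paper itself offers no proof of this statement—it is imported verbatim from Angelini et al.~\cite{m:1}—so there is nothing in-paper to compare against; your argument (reduce monotonicity to confining the directed edge slopes of the unique $u$--$v$ path to an open arc of length less than $\pi$, split the path at the lowest common ancestor, use Property~1 to confine each half, Property~3 to separate the two siblings' angle-ranges, and handle the $-\pi$ shift of the reversed ascending edges) is essentially the original argument of Angelini et al., carried out with the wrap-around bookkeeping done correctly.
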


In order to simplify the description of our algorithm, we extend the definition of slope-disjoint tree drawings to allow for angle-ranges of adjacent vertices (parent-child relationship) or sibling vertices (children of the same parent)  to share angle-range boundaries. 

\begin{definition}
\label{def:nonStrictlySlopeDisjoint}
A tree drawing $\Gamma$  of a rooted tree $T$ is called a \emph{non-strictly slope-disjoint} drawing if the following properties are satisfied:

\begin{enumerate}

\item
For every vertex $u \in T$, there exist two angles $a_1(u)$ and $a_2(u)$, with 
$0 \boldsymbol{\leq} a_1(u)<a_2(u)\boldsymbol{\leq} \pi$ 
such that for every edge $e$ that is either in $T_u$ or enters $u$ from its parent, it holds that $a_1(u)<slope(e)<a_2(u)$.

\item 
For every two vertices $u, v \in T$ such that $v$ is a child of $u$, it holds that $a_1(u) \boldsymbol{\leq} a_1(v) < a_2(v) \boldsymbol{\leq} a_2(u)$.
  
\item 
For every two vertices $v_1, v_2$ with the same parent, it holds that either $a_1(v_1) < a_2(v_1) \boldsymbol{\leq} a_1(v_2) < a_2(v_2)$ or $a_1(v_2) < a_2(v_2) \boldsymbol{\leq} a_1(v_1) < a_2(v_1)$.
\end{enumerate}
\end{definition}

In our extended definition, we allow for angle-ranges of adjacent vertices (parent-child relationship) or sibling vertices (children of the same parent)  to share angle-range boundaries. Note that replacing the ``$\leq$'' symbols in our definition by the ``$<$'' symbol gives us the original definition of Angelini et al.~\cite{m:1} for the slope disjoint tree drawings.

\begin{lemma} \label{l:sd}
Every non-strictly slope-disjoint drawing of a tree $T$ is also  a slope-disjoint drawing.
\end{lemma}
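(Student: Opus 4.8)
My plan is to keep the drawing $\Gamma$ itself completely unchanged and merely exhibit a \emph{new} set of angle boundaries that satisfies the strict inequalities of the original Angelini et al. definition. The crucial observation is that slope-disjointness is an existential statement about certifying boundaries $a_1(\cdot),a_2(\cdot)$, while the actual edge slopes $slope(e)$ are fixed numbers determined by $\Gamma$. Moreover, the defining property~1 is \emph{already strict} in both definitions, so in a non-strictly slope-disjoint drawing every relevant edge slope lies strictly inside its vertex's range. The only inequalities that are allowed to degenerate to equalities are the nesting bounds between a parent and a child ($a_1(u)\le a_1(v)$, $a_2(v)\le a_2(u)$), the sibling separation ($a_2(v_1)\le a_1(v_2)$), and the endpoints ($0\le a_1(r)$, $a_2(r)\le\pi$). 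The plan is to open all of these up by a tiny depth-dependent perturbation.

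Concretely, let $d(u)$ denote the depth of $u$ (with $d(r)=0$) and let $D=\max_u d(u)$. Since $T$ is finite, the quantity
\[
g=\min\Bigl\{\ \min_{u,\,e}\bigl(slope(e)-a_1(u)\bigr),\ \min_{u,\,e}\bigl(a_2(u)-slope(e)\bigr),\ \min_u\phi_u\ \Bigr\}
\]
(where $e$ ranges over the edges of $T_u$ together with the edge entering $u$) is strictly positive, because each of these gaps is strictly positive by property~1 and by $a_1(u)<a_2(u)$. I would then fix any $\delta$ with $0<\delta<\tfrac{g}{2(D+1)}$ and define the new boundaries
\[
a_1'(u)=a_1(u)+(d(u)+1)\,\delta,\qquad a_2'(u)=a_2(u)-(d(u)+1)\,\delta .
\]
The point of the factor $(d(u)+1)$ is that deeper vertices are pulled inward by strictly more, which forces nested ranges apart even when they originally shared a boundary, while the $+1$ term also pushes the root strictly off the endpoints $0$ and $\pi$.

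Finally I would verify the three strict properties one by one. For the endpoint and width conditions, $a_1'(u)<a_2'(u)$ reduces to $\phi_u>2(d(u)+1)\delta$, which holds since $\phi_u\ge g>2(D+1)\delta$; likewise $a_1'(r)=a_1(r)+\delta>0$ and $a_2'(r)=a_2(r)-\delta<\pi$. For property~1, the inequality $a_1'(u)<slope(e)$ follows from $slope(e)-a_1(u)\ge g>(d(u)+1)\delta$, and symmetrically for the upper bound, so no actual edge is pushed outside. For property~2 I would compute $a_1'(v)-a_1'(u)=(a_1(v)-a_1(u))+\delta\ge\delta>0$ and $a_2'(u)-a_2'(v)=(a_2(u)-a_2(v))+\delta\ge\delta>0$; for property~3 the two siblings share a depth but their relevant boundaries move in opposite directions, giving $a_1'(v_2)-a_2'(v_1)=(a_1(v_2)-a_2(v_1))+2(d+1)\delta\ge 2(d+1)\delta>0$. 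The main obstacle to guard against is that a single, global choice of $\delta$ must simultaneously (i) be large enough to separate every originally-touching pair of boundaries and (ii) be small enough never to expel an edge or collapse a range; this is exactly what the uniform bound $\delta<g/(2(D+1))$ secures, and finiteness of $T$ is what makes $g$ strictly positive. Hence the same drawing $\Gamma$, with boundaries $a_1',a_2'$, is a slope-disjoint drawing.
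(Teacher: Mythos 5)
Your proposal is correct and follows essentially the same route as the paper: keep the drawing $\Gamma$ fixed and shift every vertex's angle-range inward by an amount that grows with its depth, with the shift controlled by a single global positive gap whose positivity comes from the strictness of Property~1 and the finiteness of $T$. The only differences are cosmetic: you also shift the root (which in fact secures the strict endpoint condition $0<a_1<a_2<\pi$ even when the root's original range is $\langle 0,\pi\rangle$, a degenerate case the paper's construction leaves untouched), and you fold the range-widths $\phi_u$ directly into your gap $g$, whereas the paper derives the needed bound $\phi_u\ge 2\delta$ from the parent-edge gaps alone.
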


\begin{proof}
Intuitively, the theorem holds since we can always adjust (by a tiny amount) the angle-ranges of vertices that share an angle-range boundary   so that, after the adjustment  no two   vertices share an   angle-range boundary. Note that the actual drawing of the tree does not change. Only the angle-ranges are adjusted.
 
More formally, let $\Gamma$ be a \emph{non-strictly slope-disjoint} drawing of a tree $T$ rooted at $r$.
We show how to compute for every vertex $u$ a new angle-range $\left< b_1(u),b_2(u)\right>$ such that the current drawing of $T$ with the new angle-range is \emph{slope-disjoint}.

Let $e(u)$ be the edge that connects the parent of $u$ to $u$ in $T$, for $u \in T \backslash r$.

We make use of the following definitions:
\begin{align*}
\delta_1 & =  min_{u \in T \backslash r }\left(slope(e(u)) - a_1(u)\right)\\
\delta_2 & =  min_{u \in T \backslash r }\left(a_2(u) - slope(e(u)\right))\\
\delta   & =  min(\delta_1, \delta_2)
\end{align*}

For any vertex $u \in T \backslash r$ 
and for any edge $e$ that belongs in $T_u$ or enters $u$ from its parent,
it holds that:
\begin{gather}
	slope(e) - a_1(u) \geq \delta \label{eq:d1} \\
	a_2(u) - slope(e) \geq \delta \label{eq:d2}
\end{gather}

By Property-1 of the non-strictly slope-disjoint drawing, we have that $\delta_1, \delta_2 > 0$ and, therefore, $\delta > 0$.
By adding the two previous inequalities we get that,
\begin{equation}
\cref{eq:d1} + \cref{eq:d2} \Rightarrow  a_2(u)-a_1(u) \geq 2\delta  \text{~where~} u \in T \backslash r \label{eq:i1}
\end{equation}

For any descendant $v$ of the root $r$ of  $T$,  by
inductive use of Property-2 of the non-strictly slope-disjoint drawings, it holds that:
\begin{align}
	a_1(r) & \leq a_1(v) \label{eq:a1} \\
	a_2(r) & \geq a_2(v) \label{eq:a2}
\end{align}

By subtracting $\cref{eq:a1}$ from $\cref{eq:a2}$ we get
\begin{equation}
\cref{eq:a2} - \cref{eq:a1} \Rightarrow  a_2(r)-a_1(r) \geq a_2(v) - a_1(v) \overset{\cref{eq:i1}}{\geq} 2\delta  \label{eq:root}\\ 
\end{equation}
Therefore, by \cref{eq:i1}  and \cref{eq:root},  for any vertex $u \in T$ it holds: 
\begin{equation}\label{eq:2d}
a_2(u) - a_1(u) \geq 2\delta 
\end{equation}

Let the root $r$ of $T$ be at level-0, let $u$ be a vertex in level-$i, ~i>0$, and let $h$ be the  height of  tree $T$.  Define the  slope-disjoint angle-ranges $\left< b_1(u), b_2(u)\right>$ for each vertex $u \in T$ as follows:

\begin{minipage}{.45\textwidth}
\begin{equation*}
b_1(u) =
\begin{cases}
a_1(r) & \text{if } u=r
\\
a_1(u) + \delta \cdot \frac{i}{h+1} & \text{if } u \neq r
\end{cases}
\end{equation*}
\end{minipage}
\hfill
\begin{minipage}{.45\textwidth}
\begin{equation*}
b_2(u) =
\begin{cases}
a_2(r) & \text{if } u=r
\\
a_2(u) - \delta \cdot \frac{i}{h+1} & \text{if } u \neq r
\end{cases}
\end{equation*}
\end{minipage}
\linebreak

Firstly, we show that the new angle-range boundaries $b_1(\cdot), ~b_2(\cdot)$ satisfy Property-2 of slope-disjoint drawings.  
Let $u$ be a level-$i$ vertex $\in T$ and  $v$ be its child. By the non-strictly slope-disjoint Property-2, it holds that:
\begin{align*}
a_1(u)\leq a_1(v) & \Rightarrow  a_1(u) + \delta \cdot \frac{i}{h+1} < a_1(v) + \delta \cdot \frac{i+1}{h+1}\\
& \Leftrightarrow b_1(u)<b_1(v)
\end{align*}

Similarly, we have that $b_2(v)<b_2(u)$.
We also have,
\begin{align*}
b_2(v)-b_1(v) & =  a_2(v) - \delta \cdot \frac{i+1}{h+1} - \left( a_1(v) + \delta \cdot \frac{i+1}{h+1} \right)\\
&  =   (a_2(v) - a_1(v)) -2\delta \cdot \frac{i+1}{h+1}\\
& \overset{\cref{eq:2d}} {\geq}   2\delta - 2\delta \cdot \frac{i+1}{h+1}\\ 
&  = 2\delta \cdot \left( 1-\frac{i+1}{h+1}\right)\\
&  =  2\delta \cdot \frac{h-i}{h+1}\\
& >    0\\
& \Rightarrow b_1(v)< b_2(v)
\end{align*}

The last inequality holds since vertex $u$ has a child and, thus, $u$  is at a level $i$ such that  $i<h$.
Thus, Property-2 of slope-disjoint drawings holds.

Secondly, we show that the new angle-range boundaries $b_1(\cdot), ~b_2(\cdot)$ satisfy Property-3 of  slope-disjoint drawings. Let $v_1,$ and $v_2$ be two level-$i$ vertices having the same parent. Then, by Property-3 of the non-strictly slope disjoint drawings we have that $a_1(v_1) < a_2(v_1) \leq a_1(v_2) < a_2(v_2)$ or $a_1(v_2) < a_2(v_2) \leq a_1(v_1) < a_2(v_1)$.
The two cases are symmetric, so we only prove that $b_1(v_1) < b_2(v_1) < b_1(v_2) < b_2(v_2)$ when $a_1(v_1) < a_2(v_1) \leq a_1(v_2) < a_2(v_2)$.
As proved for the case of Property-2, $b_1(v_1) < b_2(v_1)$ and $b_1(v_2) < b_2(v_2)$ and thus, it remains to prove that $b_2(v_1) < b_1(v_2)$. But we have that,
\begin{align*}
a_2(v_1)\leq  a_1(v_2) &  \Rightarrow   a_2(v_1) - \delta \cdot \frac{i}{h+1} < a_1(v_2) + \delta \cdot \frac{i}{h+1}\\
& \Leftrightarrow   b_2(v_1) < b_1(v_2)
\end{align*}

Finally, we turn our attention to Property-1 of slope-disjoint drawings.
Angle-range boundaries $a_1(\cdot)$ and $ a_2(\cdot)$ satisfy Property-1 of non-strictly slope-disjoint drawings and thus, for every  vertex $u$ at level $i$ and  for every edge $e$ that belongs in $T_u$ or that enters $u$ from its parent inequality $\cref{eq:d1}$ holds.
By definition, we have that $b_1(u)  = a_1(u) +  \delta \cdot \frac{i}{h+1}$ which implies
\begin{equation} \label{eq:2}
a_1(u) = b_1(u) - \delta \cdot \frac{i}{h+1}
\end{equation}
\begin{align*}
\cref{eq:d1} & \overset{\cref{eq:2}}{\Leftrightarrow}  slope(e) - \left(b_1(u) - \delta \cdot \frac{i}{h+1}\right)  \geq \delta\\ 
& \Leftrightarrow  slope(e) - b_1(u) \geq \delta \cdot \left(1-\frac{i}{h+1} \right)\\
& \Leftrightarrow  slope(e) - b_1(u) \geq \delta \cdot \left(\frac{h+1-i}{h+1}\right)\\
& \Rightarrow  slope(e) - b_1(u) > 0\\
\end{align*}

The last inequality holds since  $\delta >0$ and  $ i < h+1$. The later is true since  $u$ is a level-$i$ vertex where  
$i \leq h$.

In a similar way, we show that  $b_2(u) - slope(e)>0$ and  we conclude that $b_1(u) < slope(e) < b_2(u)$. 
Thus,  Property-1 of slope-disjoint drawing is also satisfied.
%\qed
\end{proof}

\begin{theorem}\label{thm:NSslopeDis_monotonePlanar}
Every non-strictly slope-disjoint drawing of a tree is monotone and planar.
\end{theorem}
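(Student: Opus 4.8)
The plan is to chain together the three results already established in the excerpt, applying them in the correct order. Given a non-strictly slope-disjoint drawing $\Gamma$ of a tree $T$, the first step is to invoke Lemma~\ref{l:sd}, which guarantees that $\Gamma$ is in fact a (strictly) slope-disjoint drawing of $T$. The second step applies Theorem~\ref{t:a} of Angelini et al., which states that every slope-disjoint drawing of a tree is monotone; this immediately yields that $\Gamma$ is monotone. The third and final step applies Theorem~\ref{t:b}, which states that every monotone drawing of a tree is planar, and hence $\Gamma$ is planar as well. Both conclusions of the statement then follow.

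I expect no genuine obstacle here, since the entire substance of the argument has already been absorbed into Lemma~\ref{l:sd}. Its proof exhibits an explicit perturbation of the non-strict angle-range boundaries $\left<a_1(u),a_2(u)\right>$ into strict ones $\left<b_1(u),b_2(u)\right>$, leaving the actual placement of vertices and edges untouched; once that lemma is in hand, monotonicity and planarity are purely a matter of quoting the two cited theorems. The one point worth stating carefully is the direction of the implication: Theorem~\ref{t:b} gives monotone $\Rightarrow$ planar and not the reverse, so the argument must first establish monotonicity via Theorem~\ref{t:a} before it is entitled to conclude planarity via Theorem~\ref{t:b}.
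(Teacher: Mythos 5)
Your proposal is correct and follows exactly the paper's own argument: invoke Lemma~\ref{l:sd} to pass from non-strictly slope-disjoint to slope-disjoint, then Theorem~\ref{t:a} for monotonicity and Theorem~\ref{t:b} for planarity. Your remark about the order of application (monotonicity before planarity, since Theorem~\ref{t:b} goes from monotone to planar) is a valid point of care that the paper's one-line proof glosses over by citing the two theorems together.
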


\begin{proof}
By \Cref{l:sd} every non-strictly slope-disjoint drawing of a tree $T$ is slope-disjoint and by \Cref{t:b} and \Cref{t:a} it is monotone and planar.
%\qed
\end{proof}

\subsection{Locating Points on the Grid}

Based on geometry, we now prove that it is always possible to identify points on a grid that satisfy several properties with respect to their location.

\begin{lemma}\label{lemma:smallAngle}
Consider two angles $\theta_1$, $\theta_2$ with $0 \leq \theta_1 < \theta_2 \leq \frac{\pi}{4}$ and let $d=\ceil{\frac{1}{\theta_2-\theta_1}}$. 
Then, edge $e$ connecting the origin $(0,0)$ to point $p=(d, \floor{tan(\theta_1) \cdot d +1})$ satisfies 
$\theta_1 < slope(e) < \theta_2$.
\end{lemma}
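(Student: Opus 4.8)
The plan is to translate the statement about the slope angle of $e$ into an elementary statement about the ratio $y_p/d$, where I write $p=(d,y_p)$ with $y_p=\floor{\tan(\theta_1)\cdot d+1}$. Since $d>0$ and (as will be confirmed below) $y_p>0$, the point $p$ lies in the open first quadrant, so $slope(e)=\arctan(y_p/d)\in(0,\tfrac{\pi}{2})$. Because $\theta_1,\theta_2\in[0,\tfrac{\pi}{4}]\subset[0,\tfrac{\pi}{2})$ and $\arctan$ (equivalently $\tan$) is strictly increasing on this range, the target $\theta_1<slope(e)<\theta_2$ is equivalent to the pair of numeric inequalities $\tan(\theta_1)<\dfrac{y_p}{d}<\tan(\theta_2)$. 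The whole proof then reduces to verifying these two inequalities.

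For the lower bound I would use only the basic property of the floor function. Writing $t=\tan(\theta_1)\cdot d$, we have $y_p=\floor{t+1}=\floor{t}+1$, and since $\floor{t}>t-1$ this gives $y_p>t=\tan(\theta_1)\cdot d$, i.e. $y_p/d>\tan(\theta_1)$ (and in particular $y_p>0$, which justifies the first-quadrant claim above). Hence $slope(e)>\theta_1$.

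The upper bound is the crux, and it is where the choice $d=\ceil{1/(\theta_2-\theta_1)}$ enters. From $\floor{t}\le t$ I get the opposite one-sided bound $y_p=\floor{t}+1\le t+1=\tan(\theta_1)\cdot d+1$, so it suffices to show $\tan(\theta_1)\cdot d+1<\tan(\theta_2)\cdot d$, i.e. $d>\dfrac{1}{\tan(\theta_2)-\tan(\theta_1)}$. Now $d\ge\dfrac{1}{\theta_2-\theta_1}$ by definition of the ceiling, so it is enough to know that $\dfrac{1}{\theta_2-\theta_1}>\dfrac{1}{\tan(\theta_2)-\tan(\theta_1)}$, equivalently $\tan(\theta_2)-\tan(\theta_1)>\theta_2-\theta_1$. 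This last fact is the key analytic ingredient: since $\tan$ has derivative $\sec^2\ge 1$ on $[0,\tfrac{\pi}{4}]$ with $\sec^2>1$ away from $0$, the Mean Value Theorem yields $\tan(\theta_2)-\tan(\theta_1)=\sec^2(\xi)\,(\theta_2-\theta_1)>\theta_2-\theta_1$ for some $\xi\in(\theta_1,\theta_2)\subseteq(0,\tfrac{\pi}{4})$, the strictness coming from $\xi>\theta_1\ge 0$. Chaining the inequalities gives $d>1/(\tan(\theta_2)-\tan(\theta_1))$, hence $y_p/d<\tan(\theta_2)$ and $slope(e)<\theta_2$.

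The only genuinely non-trivial step is the comparison $\tan(\theta_2)-\tan(\theta_1)>\theta_2-\theta_1$, which is exactly what lets me pass from the angular gap $\theta_2-\theta_1$ used to define $d$ to the tangent gap that controls the $y$-coordinate; everything else is bookkeeping with the floor function and monotonicity of $\arctan$. I would also check the degenerate case $\theta_1=0$, where $\tan(\theta_1)=0$ and $y_p=\floor{1}=1$, to confirm the argument and its strictness still go through, and note that the hypothesis $\theta_2\le\tfrac{\pi}{4}$ keeps all slopes inside the first octant so that the identity $slope(e)=\arctan(y_p/d)$ and its monotonicity in $y_p/d$ remain valid.
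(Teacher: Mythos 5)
Your proof is correct and follows essentially the same route as the paper's: both arguments reduce the claim to the inequality $(\tan(\theta_2)-\tan(\theta_1))\cdot d > 1$, which forces the lattice point $p=(d,\floor{\tan(\theta_1)\cdot d+1})$ to lie strictly between the two rays on the vertical line $x=d$. The only real difference is how the pivotal inequality $\tan(\theta_2)-\tan(\theta_1)>\theta_2-\theta_1$ is justified: you invoke the Mean Value Theorem with $\sec^2(\xi)>1$ for $\xi>0$, while the paper derives it from the tangent subtraction identity combined with $\tan(x)>x$ on $(0,\tfrac{\pi}{2})$; your explicit floor-function bookkeeping also spells out what the paper asserts geometrically, but the mathematical content is identical.
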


\begin{proof}
Let $l_1$ and $l_2$ be the half-lines from origin with slopes $\theta_1$ and $\theta_2$, respectively. 
Let $p_1$ and $p_2$ be the intersection points of $l_1$ and $l_2$ with line $x=d$, respectively. 
We  prove that $|p_1 p_2|>1$, so a grid point must lie between $p_1$ and $p_2$, since the $x$-coordinate is integer and line segment $p_1 p_2$ is parallel to $y$-axis as seen in \Cref{fig:d2}.

From trigonometry, we know  identities :
\begin{equation}
	tan(a - b)  = \frac{tan(a)-tan(b)}{1+tan(a) \cdot tan(b)} \label{eq:t1}
\end{equation}
and
\begin{equation}
	tan(a)  > a \text{ , when } 0 < a < \frac{\pi}{2} \label{eq:t2}
\end{equation}

By $\cref{eq:t1}$, it holds that  $tan(a)-tan(b)=tan(a-b) \cdot (1+tan(a)tan(b))$ and thus,  
for $0 \leq a,b \leq \frac{\pi}{2}$ it holds: %$tan(a)-tan(b) > tan(a-b) $.
\begin{equation}
tan(a)-tan(b) > tan(a-b) \text{, when } 0 \leq a,b \leq \frac{\pi}{2}  \label{eq:i2}
\end{equation}

The coordinates of point $p_1$ are $(d,tan(\theta_1) \cdot d)$ while  the coordinates of point $p_2$ are $(d,tan(\theta_2) \cdot d)$. Therefore, 
\begin{align*}
|p_1 p_2|& =  tan(\theta_2) \cdot d - tan(\theta_1) \cdot d\\
	& =  (tan(\theta_2)-tan(\theta_1)) \cdot d\\
	& \overset{\cref{eq:i2}} {>}  tan(\theta_2-\theta_1) \cdot d\\
	& \overset{\cref{eq:t2}} {\geq}   (\theta_2-\theta_1) \cdot d\\
	& = (\theta_2-\theta_1) \cdot \ceil{\frac{1}{\theta_2-\theta_1}}\\
	& \geq  (\theta_2-\theta_1) \cdot \frac{1}{\theta_2-\theta_1}\\
	& =  1
\end{align*} 

Given that $|p_1 p_2|>1$, the grid point $p=(d, \floor{tan(\theta_1) \cdot d +1})$ falls within the angular sector defined by half-lines $l_1$ and $l_2$ and satisfies the lemma. 
%\qed
\end{proof}

\begin{figure}[tb]
	\centering
	\begin{minipage}{0.45\textwidth}
		\centering
		\includegraphics[scale=0.3]{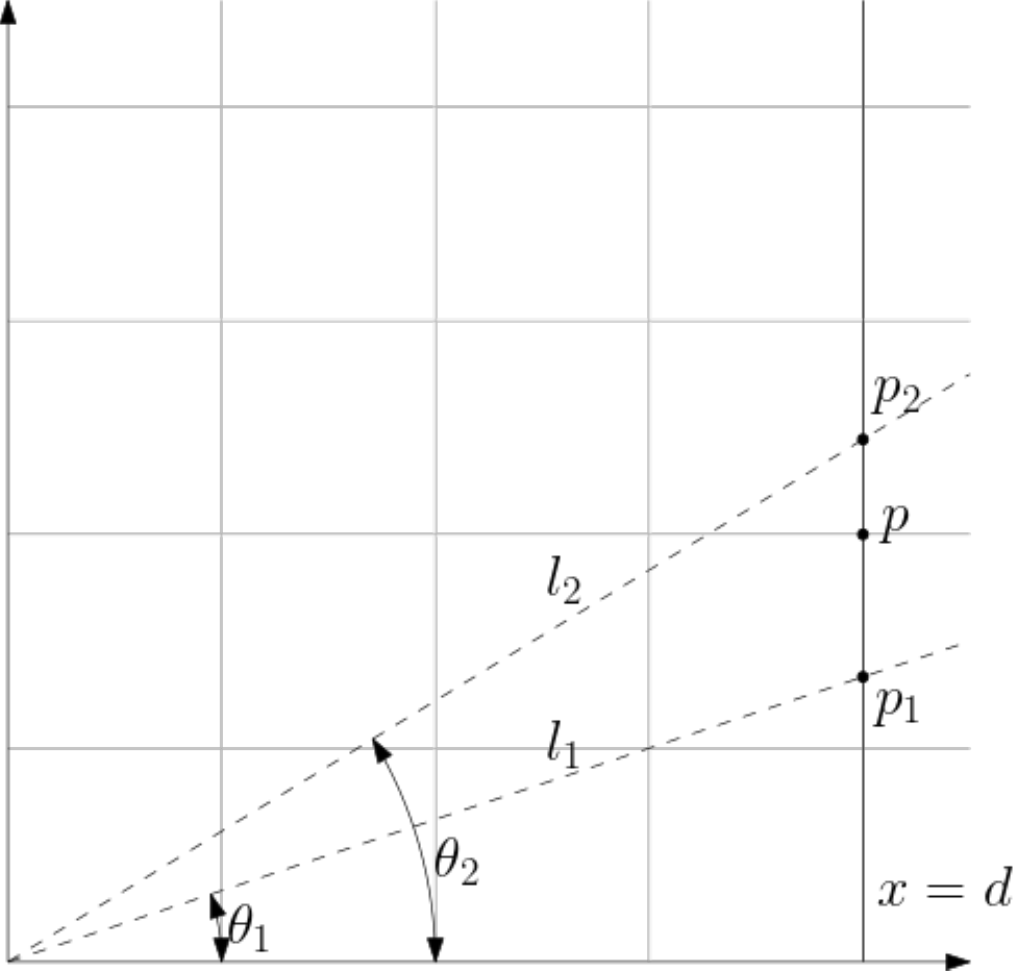}
		\caption{Geometric representation of \Cref{lemma:smallAngle}.}
		\label{fig:d2}
	\end{minipage}
\hfill
	\begin{minipage}{0.45\textwidth}
		\centering
		\includegraphics[scale=0.3]{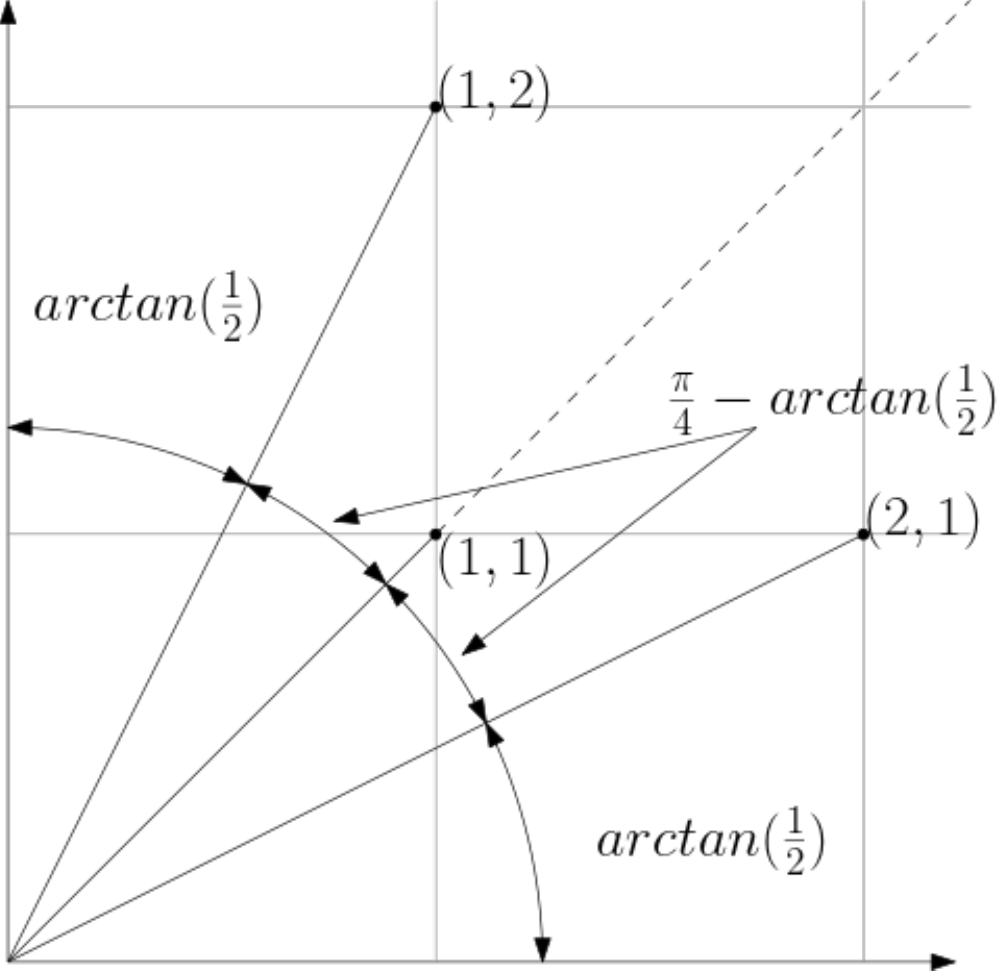}
		\caption{Point, slopes angular sectors used in  \Cref{lemma:distance}.}
		\label{fig:d1}
	\end{minipage}
\end{figure}

\begin{lemma}\label{lemma:distance}
Consider angles $\theta_1$, $\theta_2$ with $0 \leq \theta_1 < \theta_2 \leq \frac{\pi}{2}$ and let $d=\ceil{\frac{1}{\theta_2-\theta_1}}$. Then, a grid point $p$  such that  the edge $e$ that connects the origin $(0,0)$  to $p$ satisfies $\theta_1 < slope(e) < \theta_2$, can be identified as follows:\\

\begin{align*}
\theta_2 - \theta_1 > \frac{\pi}{4}   & \text{\rm ~:~}\quad    p=(1,1)
\\\\
\frac{\pi}{4} \geq  \theta_2 - \theta_1 > arctan(\frac{1}{2}) & \text{\rm ~:~} 
\begin{cases}
p=(1,2) & \text{~if~ } \theta_1 \geq \frac{\pi}{4}\\
p=(1,1) & \text{~if~ } \frac{\pi}{4}> \theta_1 \geq arctan(\frac{1}{2})\\
p=(2,1) & \text{~if~ } arctan(\frac{1}{2}) > \theta_1 
\end{cases}
\\\\
arctan(\frac{1}{2})  \geq  \theta_2 - \theta_1 & \text{\rm ~:~}   
\begin{cases}
p=(d, \floor{tan(\theta_1) \cdot d +1}) & \text{~if~ } \frac{\pi}{4} \geq \theta_2 > \theta_1 \geq 0 \\
p=(1,1) & \text{~if~ } \theta_2 > \frac{\pi}{4}> \theta_1\\
p=(\floor{tan(\frac{\pi}{2} - \theta_2) \cdot d +1},d) & \text{~if~ } \theta_2 > \theta_1 \geq  \frac{\pi}{4}
\end{cases}
\end{align*}

Moreover, if $p=(x,y)$ is the identified point, it also holds that:
\begin{equation*}
\max (x, y)  \leq \frac{\pi}{2} \cdot \frac{1}{\theta_2 - \theta_1}
\end{equation*}
\end{lemma}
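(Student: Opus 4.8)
The plan is to reduce the whole statement to Lemma~\ref{lemma:smallAngle} (which already handles the regime $0 \leq \theta_1 < \theta_2 \leq \frac{\pi}{4}$) together with a reflection across the line $y=x$, and to dispose of the remaining configurations using the three fixed candidate points $(1,1)$, $(1,2)$, $(2,1)$, whose slopes are exactly the reference angles $\frac{\pi}{4}$, $arctan(2)$, and $arctan(\frac{1}{2})$. I would first record the identity $arctan(2)=\frac{\pi}{2}-arctan(\frac{1}{2})$ and the two numeric facts $arctan(\frac{1}{2})>\frac{\pi}{8}$ and $arctan(\frac{1}{2})<\frac{\pi}{2}-1$, which are the only quantitative inputs the argument needs. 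The reflection fact I would state once: for a first-quadrant point $(x,y)$ the point $(y,x)$ has slope $\frac{\pi}{2}-slope(x,y)$, so an angle-window $(\theta_1,\theta_2)$ lying above $\frac{\pi}{4}$ is carried to the window $(\frac{\pi}{2}-\theta_2,\frac{\pi}{2}-\theta_1)$ lying below $\frac{\pi}{4}$, of the same width $t=\theta_2-\theta_1$ and hence with the same value of $d$.

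For the large-window case $\theta_2-\theta_1>\frac{\pi}{4}$, I would observe that $\theta_2 \leq \frac{\pi}{2}$ forces $\theta_1<\frac{\pi}{4}$ while $\theta_1 \geq 0$ forces $\theta_2>\frac{\pi}{4}$, so the slope $\frac{\pi}{4}$ of $(1,1)$ lies strictly inside the window. The smallest-window case $arctan(\frac{1}{2}) \geq \theta_2-\theta_1$ splits three ways according to how the window sits relative to $\frac{\pi}{4}$: if it lies entirely in $[0,\frac{\pi}{4}]$ I apply Lemma~\ref{lemma:smallAngle} verbatim, obtaining $p=(d,\floor{tan(\theta_1)\cdot d+1})$; if it straddles $\frac{\pi}{4}$, the point $(1,1)$ works exactly as above; and if it lies entirely in $[\frac{\pi}{4},\frac{\pi}{2}]$, I apply Lemma~\ref{lemma:smallAngle} to the reflected window and read off the reflected point $p=(\floor{tan(\frac{\pi}{2}-\theta_2)\cdot d+1},d)$.

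The delicate part is the intermediate case $\frac{\pi}{4} \geq \theta_2-\theta_1>arctan(\frac{1}{2})$, where $d$ would be too large and the three fixed points must be shown to land strictly inside the window. Here I would use the reference angles as thresholds on $\theta_1$ and verify each sub-case by the same kind of estimate. For instance, when $\theta_1 \geq \frac{\pi}{4}$ and $p=(1,2)$, the bound $\theta_2 \leq \frac{\pi}{2}$ gives $\theta_1<\theta_2-arctan(\frac{1}{2}) \leq arctan(2)$, while $\theta_2>\theta_1+arctan(\frac{1}{2}) \geq \frac{\pi}{4}+arctan(\frac{1}{2})>arctan(2)$, the last step being precisely $2\,arctan(\frac{1}{2})>\frac{\pi}{4}$. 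The sub-cases with $p=(1,1)$ and $p=(2,1)$ are entirely analogous and each again reduces to $2\,arctan(\frac{1}{2})>\frac{\pi}{4}$. This single numeric inequality is the real crux of the lemma, and I expect verifying the intermediate case cleanly to be the main obstacle.

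Finally, for the area bound $\max(x,y) \leq \frac{\pi}{2} \cdot \frac{1}{\theta_2-\theta_1}$ I would treat each regime by its value of $\max(x,y)$. It equals $1$ in the large-window case, where $\frac{\pi}{2t} \geq 1$ since $t \leq \frac{\pi}{2}$; it equals $2$ in the intermediate case, where $\frac{\pi}{2t} \geq 2$ since there $t \leq \frac{\pi}{4}$; and it equals $d$ in the small-window cases, after checking that $\floor{tan(\theta_1)\cdot d+1} \leq d$ because $tan(\theta_1)<1$ (and symmetrically for the reflected point). For this last regime the estimate $d=\ceil{\frac{1}{t}}<\frac{1}{t}+1$ combined with $t \leq arctan(\frac{1}{2})<\frac{\pi}{2}-1$ yields $d<\frac{1}{t}+1 \leq \frac{\pi}{2t}$, which completes the bound.
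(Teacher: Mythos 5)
Your proposal is correct and follows essentially the same route as the paper's proof: the same three-regime case analysis, the same reduction of the small-window regime to Lemma~\ref{lemma:smallAngle} (directly below $\frac{\pi}{4}$, via the $y=x$ reflection above it, and via $(1,1)$ when straddling), the same key numeric facts $arctan(2)=\frac{\pi}{2}-arctan(\frac{1}{2})$, $2\,arctan(\frac{1}{2})>\frac{\pi}{4}$, and $arctan(\frac{1}{2})<\frac{\pi}{2}-1$, and the same per-regime verification of the length bound. The only differences are cosmetic: you spell out the reflection and the $\theta_1\geq\frac{\pi}{4}$ intermediate subcase that the paper leaves as ``easily proved,'' and your remark that the $(2,1)$ subcase also needs $2\,arctan(\frac{1}{2})>\frac{\pi}{4}$ slightly overstates it (there $\theta_2>\theta_1+arctan(\frac{1}{2})\geq arctan(\frac{1}{2})$ suffices on its own).
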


\begin{proof}
See \Cref{fig:d1} for points, slopes and angular sectors relevant to \Cref{lemma:distance}.
For each case, we show that the identified points in the statement of the lemma satisfy the ``slope'' (``$\theta_1 < slope(e) < \theta_2$'') and the ``length'' 
(``$\max (x, y) < \ldots $'') conditions.

\begin{description}
\item[Case-1: $\theta_2 - \theta_1 > \frac{\pi}{4}$.]
Point $(1,1)$ is the identified point. In this case, the edge $e$ from the origin $(0,0)$ to $(1,1)$ has slope $\frac{\pi}{4}$.
For the ``slope'' condition, given that $0 \leq \theta_1 < \theta_2 \leq \frac{\pi}{2}$ and $\theta_2 - \theta_1 > \frac{\pi}{4}$, it is enough to show  that
$\theta_1 < \frac{\pi}{4} < \theta_2$ which implies that  $\theta_1 < slope(e) < \theta_2$.
If $\theta_1 > \frac{\pi}{4}$ we have that,  
\begin{align*}
 	\theta_2 - \theta_1 > \frac{\pi}{4} \Leftrightarrow \theta_2 & > \theta_1 + \frac{\pi}{4} \\
	&> \frac{\pi}{4} + \frac{\pi}{4}\\
	&= \frac{\pi}{2}
\end{align*}

A  clear contradiction. So, $\theta_1 < \frac{\pi}{4}$.
In a similar way we can show that $\theta_2 > \frac{\pi}{4}$.

For the ``length'' condition, 
we have to show that, 
\begin{equation*}
max(1,1) = 1 \leq \frac{\pi}{2} \cdot \frac{1}{\theta_2 - \theta_1}
\end{equation*}
This is true since, 
\begin{align*} 
	0 \leq \theta_1 < \theta_2 \leq \frac{\pi}{2} & \Rightarrow  \theta_2-\theta_1  \leq \frac{\pi}{2} \\
	& \Leftrightarrow 	 \frac{1}{\theta_2 - \theta_1}  \geq \frac{1}{\frac{\pi}{2}} \\
	& \Leftrightarrow	 \frac{\pi}{2}\frac{1}{\theta_2 - \theta_1} \geq \frac{\pi}{2} \cdot \frac{1}{\frac{\pi}{2}} = 1
\end{align*}
	
\item[Case-2: $\frac{\pi}{4} \geq  \theta_2 - \theta_1 > arctan(\frac{1}{2})$.]
We first establish the ``slope'' condition. 

For the case where  $arctan(\frac{1}{2}) > \theta_1$, the identified point is $(2,1)$. We note that the slope of the edge $e$ from the origin $(0,0)$ to $(2,1)$ is  $slope(e) = arctan(\frac{1}{2})$.
Then, by the assumption we have: 
\begin{align*}
	\theta_2 - \theta_1 > arctan\left(\frac{1}{2}\right) \Leftrightarrow \theta_2 & > \theta_1 + arctan\left(\frac{1}{2}\right)\\
	& \geq arctan\left(\frac{1}{2}\right)
\end{align*}

It follows that $\theta_1 < arctan(\frac{1}{2}) < \theta_2 \Rightarrow \theta_1 < slope(e) < \theta_2$. 

For the case where  $\frac{\pi}{4}> \theta_1 \geq arctan(\frac{1}{2})$, the identified point is $(1,1)$. We note that the slope of the edge $e$ from the origin $(0,0)$ to $(1,1)$ is $slope(e) = \frac{\pi}{4}$. 
By the assumption, and by taking into account that  $arctan(\frac{1}{2}) > \frac{\pi}{8}$, we have: 
\begin{align*}
\theta_2 - \theta_1 > arctan\left(\frac{1}{2}\right) \Leftrightarrow \theta_2 & > \theta_1 + arctan\left(\frac{1}{2}\right)\\
& \geq arctan\left( \frac{1}{2} \right) + arctan\left(\frac{1}{2}\right)\\
& = 2 \cdot arctan\left(\frac{1}{2}\right)\\
& > 2 \cdot \frac{\pi}{8}\\
& = \frac{\pi}{4}
\end{align*}

It follows that $\theta_1 < \frac{\pi}{4} < \theta_2 \Rightarrow \theta_1 < slope(e) < \theta_2$.

For the case where  $ \theta_1 \geq \frac{\pi}{4}$, the identified point is $(1,2)$. We note that the slope of the edge $e$ from the origin $(0,0)$ to $(1,2)$ is  $slope(e) = arctan(2)$. We want to establish that  $\theta_1 < arctan(2) < \theta_2$. This can be easily proved by taking into account that $arctan(2)= \frac{\pi}{2} - arctan(\frac{1}{2})$ as well as that $2 \cdot arctan(\frac{1}{2}) > \frac{\pi}{4}$.

For the ``length'' condition, it is enough to show that:
\begin{equation*}
max(max(2,1),max(1,1), max(1,2)) = 2 \leq \frac{\pi}{2} \cdot \frac{1}{\theta_2 - \theta_1}
\end{equation*}

This is true since,
\begin{align*}
 \theta_2 - \theta_1 & \leq \frac{\pi}{4}\\
\Leftrightarrow  \frac{1}{\theta_2 - \theta_1} & \geq \frac{1}{\frac{\pi}{4}}\\
\Leftrightarrow \frac{\pi}{2}\frac{1}{\theta_2 - \theta_1} & \geq \frac{\pi}{2} \cdot \frac{1}{\frac{\pi}{4}}\\
 & =2
\end{align*}

\item[Case-3: $arctan(\frac{1}{2})  \geq  \theta_2 - \theta_1$.]
We first establish the ``slope'' condition.
In the case where $\frac{\pi}{4} \geq \theta_2 > \theta_1 \geq 0$, by \Cref{lemma:smallAngle} the identified point immediately satisfies the ``slope'' condition. The same holds for the symmetric case where  $ \theta_2 > \theta_1 \geq \frac{\pi}{4}$. Finally, in the case where  $ \theta_2 > \frac{\pi}{4} > \theta_1 $ the slope condition trivially holds since the edge from the origin $(0,0)$ to $(1,1)$ has slope $\frac{\pi}{4}$.

For the ``length'' condition, we note that in all three cases we have that,
\begin{align*}
\max(x,y) \leq d & = \ceil{\frac{1}{\theta_2-\theta_1}}  \\
& < \frac{1}{\theta_2-\theta_1} +1 
\end{align*}

But it also holds:
\begin{equation}
\frac{1}{x} +1 \leq \frac{\pi}{2} \cdot \frac{1}{x} \text{ , where } 0 < x \leq \frac{\pi}{2} - 1 \label{eq:aab}
\end{equation}

And since $\theta_2 - \theta_1 \leq arctan \left(\frac{1}{2}\right) < \frac{\pi}{2}-1$ we get that,
\begin{equation*}
max(x,y) < \frac{1}{\theta_2-\theta_1} + 1 \overset{\cref{eq:aab}}{\leq} \frac{\pi}{2} \cdot \frac{1}{\theta_2 - \theta_1}
\end{equation*}
The lemma is now proved.
%\qed
\end{description}
\end{proof}

\section{One-Quadrant ``Traditional'' Monotone  Drawing of Rooted Ordered Trees}
\label{sect:1quad_drawing}

In this Section, we describe an algorithm that builds a monotone drawing of  an $n$-vertex tree  on a grid of size  at most $n \times n$. We refer to this algorithm as ``traditional'' since it satisfies all drawing conventions followed by all algorithms that have appeared in the literature, that is, it takes as input a \emph{rooted ordered tree} $T$ and produces  a monotone drawing of $T$  where (i)~the root of $T$ is drawn at the origin  of the drawing, (ii)~the drawing of $T$ is confined in the first quadrant, and (iii)~the order of the children of each node of $T$  is respected. The algorithm produces a non-strictly slope-disjoint tree drawing which, by \Cref{thm:NSslopeDis_monotonePlanar}, is monotone and planar.

In order to describe a  non-strictly slope-disjoint tree drawing, we need to identify for each vertex $u$ of the tree a grid point to draw $u$ as well as to assign to it two  angles $a_1(u)$, $a_2(u)$, with $a_2(u) > a_1(u)$. For every tree node, the identified grid point and the two angles should be such that the three properties of the non-strictly slope-disjoint drawings are satisfied.

The basic idea behind our algorithm is to split in a balanced way the angle-range $\left< a_1(u), a_2(u)\right>$ of vertex $u$ to its children based on  the size of the subtrees rooted at them. The following strategy formalizes this idea.

\begin{strategy} \label{strategy_main:angle}
Let $u$ be a non-leaf vertex of an $n$-vertex rooted tree $T$ such that we already have  assigned values for $a_1(u)$ and $a_2(u)$, with $a_1(u) < a_2(u)$. Let $v_1,v_2,\ldots,v_m$, $m\geq 1$ be the children of $u$.
We assign angle-ranges for the children of $u$ in the following way:
\begin{align*}
a_1(v_i)  & \text{\rm ~=~}  
\begin{cases}
a_1(u) & \text{~if~ } i=1\\
a_2(v_{i-1})  & \text{~if~ } 1 <i \leq m
\end{cases}
\\
a_2(v_i) & \text{\rm ~=~}    a_1(v_i)+(a_2(u)-a_1(u)) \cdot \frac{|T_{v_i}|}{|T_u|-1},~~   1 \leq i \leq m
\end{align*}
\end{strategy}

The following lemma proves that \Cref{strategy_main:angle} satisfies Property-2 and Property-3 of the non-strictly slope-disjoint drawings.

\begin{lemma} \label{lemma:angle}
Let $u$ be a vertex of the rooted tree $T$ such that we already have  assigned values for $a_1(u)$ and $a_2(u)$, with $a_1(u) < a_2(u)$. Let $v_1, v_2, \ldots, v_m,~m\geq 1$, be the children of $u$ in $T$. If we assign values for angle-ranges of the children of $u$ according to \Cref{strategy_main:angle}, then Property-2 and Property-3 of the non-strictly slope-disjoint drawings are satisfied.
\end{lemma}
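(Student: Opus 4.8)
The plan is to verify Property-2 and Property-3 directly from closed-form expressions for the angle-range boundaries that Strategy~\ref{strategy_main:angle} assigns to the children of $u$. The crucial structural fact I would use throughout is that the subtree sizes of the children partition the non-root vertices of $T_u$, that is, $\sum_{i=1}^{m}|T_{v_i}| = |T_u|-1$, since $T_u$ consists of $u$ together with the vertex-disjoint subtrees $T_{v_1},\dots,T_{v_m}$. This single identity is what makes the whole assignment close up correctly.

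First I would unfold the recurrence $a_1(v_i)=a_2(v_{i-1})$ into a telescoped form. Writing $\phi_u = a_2(u)-a_1(u) > 0$, an easy induction on $i$ gives
\[
a_1(v_i) = a_1(u) + \phi_u \cdot \frac{\sum_{j=1}^{i-1}|T_{v_j}|}{|T_u|-1}, \qquad
a_2(v_i) = a_1(u) + \phi_u \cdot \frac{\sum_{j=1}^{i}|T_{v_j}|}{|T_u|-1},
\]
with the convention that the empty sum is $0$. Each boundary is thus $a_1(u)$ plus a nonnegative multiple of $\phi_u$ governed by a partial sum of subtree sizes.

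Next I would establish Property-2. Since each $|T_{v_j}|\geq 1$ and $|T_u|-1=\sum_j |T_{v_j}|>0$, the coefficient of $\phi_u$ is nondecreasing in the number of terms, ranging from $0$ for $a_1(v_1)$ up to exactly $1$ for $a_2(v_m)$ by the partition identity. Hence $a_1(v_1)=a_1(u)$ and $a_2(v_m)=a_1(u)+\phi_u=a_2(u)$, while all intermediate boundaries lie strictly between these; moreover $a_2(v_i)-a_1(v_i)=\phi_u\cdot |T_{v_i}|/(|T_u|-1)>0$, so $a_1(v_i)<a_2(v_i)$. Together these give $a_1(u)\leq a_1(v_i)<a_2(v_i)\leq a_2(u)$ for every child, which is exactly Property-2.

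For Property-3 the decisive observation is that consecutive children share a boundary: by construction $a_1(v_{i+1})=a_2(v_i)$, so $a_2(v_i)\leq a_1(v_{i+1})$ (with equality). Combined with $a_1(v_i)<a_2(v_i)$ this yields the chain $a_1(v_i)<a_2(v_i)\leq a_1(v_{i+1})<a_2(v_{i+1})$, and for any two children $v_p,v_q$ with $p<q$ the same conclusion $a_2(v_p)\leq a_1(v_q)$ follows because the partial-sum coefficients are monotone in the index. This is precisely one of the two admissible orderings of Property-3. I do not expect a genuine obstacle, as the argument is a direct verification; the one step that must not be skipped is the partition identity $\sum_i |T_{v_i}|=|T_u|-1$, since it is exactly what forces $a_2(v_m)=a_2(u)$ and thereby keeps every child's range inside the parent's. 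The boundary \emph{sharing} produced here (the equalities $a_1(v_1)=a_1(u)$, $a_2(v_m)=a_2(u)$, $a_2(v_i)=a_1(v_{i+1})$) is admissible only because we argue in the relaxed setting of Definition~\ref{def:nonStrictlySlopeDisjoint}.
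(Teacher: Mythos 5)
Your proposal is correct and follows essentially the same route as the paper's proof: both arguments rest on the positivity of subtree sizes (giving $a_1(v_i)<a_2(v_i)$), the shared boundaries $a_2(v_{i-1})=a_1(v_i)$ (giving the Property-3 chain), and the partition identity $\sum_{i=1}^{m}|T_{v_i}|=|T_u|-1$ (forcing $a_2(v_m)=a_2(u)$ for Property-2). Your telescoped closed-form expressions are merely a repackaging of the paper's chained equalities, so there is no substantive difference to report.
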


\begin{proof}
For Property-3, we have to show that for every $k, l$, $1\leq k<l \leq m$, it holds:
$a_1(v_k)<a_2(v_k) \leq a_1(v_l) < a_2(v_l)$.
For any $j,~ 1\leq j \leq m$, we have that, 
\begin{align*}
a_2(v_j)   	& =   a_1(v_j)+ (a_2(u)-a_1(u)) \cdot \frac{|T_{v_j}|}{|T_u|-1} \\
			& >  a_1(v_j)
\end{align*}

The last inequality holds since,  by assumption,  
$a_2(u) - a_1(u) > 0$ and because the size of a rooted tree is always positive.
Therefore,
\begin{align*}
a_1(v_1) 	& < a_2(v_1) \\
			& = a_1(v_2)  \\
			& < a_2(v_2)  \\
			& = a_1(v_3)  \\
			& \vdots      \\
			& <a_2(v_{m-1})\\
			& =a_1(v_m)\\
			& <a_2(v_m)
\end{align*}

%$a_1(v_1)<a_2(v_1)=a_1(v_2)<a_2(v_2)=a_1(v_3)< \ldots <a_2(v_{m-1})=a_1(v_m)<a_2(v_m)$.
So, for any $k,~l$, $1\leq k<l \leq m$, it holds that $a_1(v_k)<a_2(v_k) \leq a_1(v_l) < a_2(v_l)$ and, thus,  Property-3 holds.

For Property-2, since we proved that $a_1(v_1)<a_2(v_1)\leq a_1(v_2) < \ldots < a_2(v_{m-1}) \leq a_1(v_m)<a_2(v_m)$, it is sufficient to show that $a_1(u) \leq a_1(v_1)$ and $a_2(v_m) \leq a_2(u)$.
The first part trivially holds since $a_1(v_1)=a_1(u)$ by definition.
For the  second part, by using repeatedly the assignment for $a_1$ and $a_2$ provided in the statement of the lemma we get that,
\begin{equation*}
a_2(v_m) = a_1(v_1)+ (a_2(u)-a_1(u))\frac{\sum_{i=1}^{m}|T_{v_i}|}{|T_u|-1}
\end{equation*}
Since  the subtree rooted at $u$, consists of the root vertex $u$ and the subtrees rooted at $u$'s  children, it holds that $|T_u| = \sum_{i=1}^{n}|T_{v_i}| + 1$.
It follows that $a_2(v_m) = a_2(u)$ and Property-2 is satisfied.
\end{proof}

\begin{observation}\label{obs:path}
If a vertex $u$ has only one child, say $v_1$, then the angle assignment \Cref{strategy_main:angle} assigns 
$a_1(v_1)=a_1(u)$ and
$a_2(v_1)=a_2(u)$, which means that the child ``inherits'' the angle-range of its parent.
\end{observation}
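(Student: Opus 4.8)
The plan is to specialize Strategy~\ref{strategy_main:angle} to the case $m=1$ and read off both equalities directly. Setting $i=1$ in the rule for $a_1$, the first (and only) child falls into the branch $i=1$, so $a_1(v_1)=a_1(u)$ holds by definition with no further work required.

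For the second equality I would invoke a simple counting identity. Since $v_1$ is the unique child of $u$, the subtree $T_u$ is exactly the vertex $u$ together with $T_{v_1}$, hence $|T_u| = |T_{v_1}| + 1$, equivalently $|T_u| - 1 = |T_{v_1}|$. Substituting this into the assignment $a_2(v_1) = a_1(v_1) + (a_2(u)-a_1(u)) \cdot \frac{|T_{v_1}|}{|T_u|-1}$ makes the normalizing fraction collapse to $1$, so that $a_2(v_1) = a_1(v_1) + (a_2(u)-a_1(u)) = a_1(u) + (a_2(u)-a_1(u)) = a_2(u)$, using the first equality in the final step.

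Alternatively---and this is really the cleanest route---the observation is nothing more than the $m=1$ instance of Lemma~\ref{lemma:angle}: that lemma already establishes $a_2(v_m)=a_2(u)$ for the last child via the telescoping identity $|T_u| = \sum_{i} |T_{v_i}| + 1$, and when there is a single child the last child \emph{is} the only child, so $a_2(v_1)=a_2(u)$ follows immediately, while $a_1(v_1)=a_1(u)$ is again the $i=1$ base case. I would therefore either present the one-line specialization of Lemma~\ref{lemma:angle} or the direct two-line computation above.

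There is no genuine obstacle here: the only thing one must notice is the cardinality relation $|T_u|-1 = |T_{v_1}|$, which forces the fraction $\frac{|T_{v_1}|}{|T_u|-1}$ to equal $1$. Everything else is immediate substitution, and the ``inheritance'' phrasing in the statement is simply a verbal gloss on the resulting pair of equalities $a_1(v_1)=a_1(u)$ and $a_2(v_1)=a_2(u)$.
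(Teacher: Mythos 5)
Your proposal is correct and matches the paper's (implicit) justification: the paper states this observation without a separate proof, precisely because it follows by the direct substitution you give, namely $a_1(v_1)=a_1(u)$ from the $i=1$ case of Strategy~\ref{strategy_main:angle} and $a_2(v_1)=a_2(u)$ from the cardinality identity $|T_u|-1=|T_{v_1}|$ collapsing the fraction to $1$. Your alternative reading of it as the $m=1$ instance of Lemma~\ref{lemma:angle} is equally valid, since that lemma's proof uses the same telescoping identity.
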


\Cref{algo:1_quadrant} describes our monotone tree drawing algorithm. It consists of three steps: Procedure \textsc{AssignAngles} which assigns angle-ranges to the vertices of the tree according to \Cref{strategy_main:angle}, Procedure \textsc{DrawVertices} which assigns each tree vertex to a grid point according to \Cref{lemma:distance} and Procedure \textsc{BalancedTreeMonotoneDraw} which assigns the root to point $(0,0)$ with angle-range $\left<0, \frac{\pi}{2}\right>$ and initiates the drawing of the tree. 

\begin{algorithm}[ht]
\caption{One-Quadrant Monotone Rooted Ordered Tree Drawing}\label{algo:1_quadrant}
\begin{algorithmic}[1]
\Procedure{BalancedTreeMonotoneDraw}{}
\State  \hspace*{-0.5cm}Input: An $n$-vertex  tree $T$ rooted at vertex $r$.
\State  \hspace*{-0.5cm}Output: A monotone drawing of $T$ on a grid of size  at most $n \times n$.
%\State 
\State $a_1(r) \gets 0, ~a_2(r) \gets \frac{\pi}{2}$
\State \Call {AssignAngles}{$r, ~a_1(r), ~a_2(r)$}
%\State
\State Draw $r$ at $(0,0)$
\State \Call {DrawVertices}{$r$}
%\State \hspace*{-0.5cm}{\bf end}
\EndProcedure

\State
\Procedure{AssignAngles}{u, $a_1$, $a_2$}
\State  \hspace*{-0.5cm}Input: A vertex $u$ and the boundaries of the angle-range $\left<a_1, a_2\right>$ assigned to $u$.
\State  \hspace*{-0.5cm}Action: It assigns angle-ranges to the vertices of $T_u$.
%\State 
	\For {each child $v_i$ of $u$}
	\State Assign $a_1(v_i),~a_2(v_i)$ as described in \Cref{strategy_main:angle}.
	\State \Call {AssignAngles}{$v_i, ~a_1(v_i), ~a_2(v_i)$}
	\EndFor
%\State \hspace*{-0.5cm}{\bf end}
\EndProcedure

\State
\Procedure{DrawVertices}{u}
\State \hspace*{-0.5cm}Input: A vertex $u$ where $u$ has already been drawn on the grid and angle-ranges have been defined for all vertices of $T_u$. 
\State \hspace*{-0.5cm}Action: It draws  the vertices of  $T_u$.

	\For {each child $v_i$ of $u$}
	\State Find a valid pair $(x,y)$  as described in \Cref{lemma:distance} where 
	\State $~~~~~\theta_1 \gets a_1(v_i) \text{ and } \theta_2 \gets a_2(v_i)$
	\State If $u$ is drawn at $(u_x, u_y)$, draw $v_i$ at $(u_x+x,u_y+y)$
	\State \Call {DrawVertices}{$v_i$}
	\EndFor
%\State \hspace*{-0.5cm}{\bf end}
\EndProcedure
\end{algorithmic}
\end{algorithm}

\begin{lemma}\label{lemma:algMonotone}
The drawing produced by \Cref{algo:1_quadrant} is monotone and planar.
\end{lemma}

\begin{proof}
The angle-range assignment of \Cref{strategy_main:angle}  satisfies  Property-2 and Property-3 of the non-strictly slope disjoint drawing as proved in \Cref{lemma:angle}.
In addition, the assignment of the vertices to grid points satisfies  Property-1 of   the non-strictly slope disjoint drawing as proved in \Cref{lemma:distance}.
Thus, the produced drawing by  \Cref{algo:1_quadrant} is non-strictly slope disjoint and, by \Cref{thm:NSslopeDis_monotonePlanar}, it is monotone and planar.
\end{proof}

It remains to establish a bound on the grid size required by \Cref{algo:1_quadrant}. Our proof uses induction on the number of tree vertices having at least two children.

\begin{lemma} \label{lemma:bound}
Let $T$ be a rooted tree and  $u \in T$ be a vertex.
Let $\phi_u = a_2(u)-a_1(u)$ where $a_1(u)$ and $a_2(u)$ are  assigned by \Cref{algo:1_quadrant}. Then, the side-length of the grid  which \Cref{algo:1_quadrant} uses for the  drawing of the subtree $T_u$ rooted at $u$  is bounded by: 
\begin{equation*}
(|T_u|-1)  \frac{\pi}{2}  \frac{1}{\phi_u}
\end{equation*}
\end{lemma}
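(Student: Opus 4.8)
The plan is to bound, by induction, the quantity $L(u) := \max\bigl(X(u), Y(u)\bigr)$, where $X(u)$ and $Y(u)$ denote the horizontal and vertical extents of the drawing of $T_u$ measured from the grid point assigned to $u$. These extents are well defined because every displacement $(x,y)$ returned by Lemma~\ref{lemma:distance} is non-negative, so all of $T_u$ is drawn up and to the right of $u$; hence $L(u)$ is exactly the side-length of the smallest axis-aligned grid containing the drawing of $T_u$. I would lean on two facts established earlier. First, by Lemma~\ref{lemma:distance} the offset $(x_i,y_i)$ placing a child $v_i$ relative to $u$ satisfies $\max(x_i,y_i) \le \frac{\pi}{2}\cdot\frac{1}{\phi_{v_i}}$, since edge $(u,v_i)$ is drawn inside the angle-range of $v_i$. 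Second, directly from Strategy~\ref{strategy_main:angle}, $\phi_{v_i} = \phi_u \cdot \frac{|T_{v_i}|}{|T_u|-1}$, equivalently $\frac{1}{\phi_{v_i}} = \frac{|T_u|-1}{\phi_u\,|T_{v_i}|}$; note this also covers the single-child case of Observation~\ref{obs:path}, where it reduces to $\phi_{v_1}=\phi_u$ and $|T_{v_1}| = |T_u|-1$.

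The structural heart of the argument is the recursion
\begin{equation*}
L(u) \le \max_{i}\bigl(\max(x_i,y_i) + L(v_i)\bigr).
\end{equation*}
To justify it I would observe that each coordinate extent of $T_u$ decomposes over the children as $X(u) = \max_i\bigl(x_i + X(v_i)\bigr)$ and $Y(u) = \max_i\bigl(y_i + Y(v_i)\bigr)$, because the extreme $x$- (resp. $y$-) coordinate of $T_u$ is attained inside some child subtree $T_{v_i}$, shifted by the offset of $v_i$; this is pure bounding-box arithmetic and needs no planarity. Taking the larger of the two and applying $\max(a+c,b+d)\le\max(a,b)+\max(c,d)$ yields the stated recursion. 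The essential feature is that siblings contribute through a \emph{maximum}, not a sum, so their extents never accumulate, while along a single-child path the same recursion unrolls additively.

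With these pieces I would induct on $|T_u|$. The base case $|T_u|=1$ is a leaf, for which $L(u)=0$ matches the bound. In the inductive step each child has $|T_{v_i}| < |T_u|$, so the induction hypothesis gives $L(v_i) \le (|T_{v_i}|-1)\frac{\pi}{2}\frac{1}{\phi_{v_i}}$; substituting this and the displacement bound into the recursion yields
\begin{equation*}
L(u) \le \max_i\Bigl(\tfrac{\pi}{2}\tfrac{1}{\phi_{v_i}} + (|T_{v_i}|-1)\tfrac{\pi}{2}\tfrac{1}{\phi_{v_i}}\Bigr) = \max_i\Bigl(|T_{v_i}|\,\tfrac{\pi}{2}\tfrac{1}{\phi_{v_i}}\Bigr).
\end{equation*}
Inserting $\frac{1}{\phi_{v_i}} = \frac{|T_u|-1}{\phi_u\,|T_{v_i}|}$ cancels the factor $|T_{v_i}|$, so every term equals $(|T_u|-1)\frac{\pi}{2}\frac{1}{\phi_u}$ independently of $i$ and the maximum collapses to the claimed bound. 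I expect the crux to lie not in the algebra but in this exact cancellation: the subtree-size weighting of Strategy~\ref{strategy_main:angle} is precisely what makes the per-child bound $i$-independent, so that the interplay between the sibling maximum and the additive accumulation along single-child paths both reconcile with a single linear bound. The one point demanding care is correctly setting up the coordinate-wise extent recursion (max over children, with the subadditivity step) rather than naively summing sibling contributions. Equivalently, the same induction can be phrased on the number of branching vertices, treating a maximal single-child path as the base case via Observation~\ref{obs:path}; the computation is identical.
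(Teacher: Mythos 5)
Your proposal is correct: the displacement bound $\max(x_i,y_i)\le\frac{\pi}{2}\frac{1}{\phi_{v_i}}$ from Lemma~\ref{lemma:distance}, the cancellation via $\phi_{v_i}=\frac{|T_{v_i}|}{|T_u|-1}\phi_u$ from Strategy~\ref{strategy_main:angle}, and the per-child accounting ``edge plus subtree $=|T_{v_i}|\frac{\pi}{2}\frac{1}{\phi_{v_i}}=(|T_u|-1)\frac{\pi}{2}\frac{1}{\phi_u}$'' are exactly the engine of the paper's proof as well. Where you genuinely diverge is in the organization of the induction and in how the max-over-children step is justified. The paper inducts on the number of vertices with at least two children: its base case is a path (handled additively, edge by edge, using Observation~\ref{obs:path}), and its inductive step first reduces a single-child root to the nearest branching vertex before applying the max-over-children argument; that max step is justified geometrically, by the fact that sibling subtrees occupy non-overlapping angular sectors in the first quadrant anchored at $u$. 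You instead induct on $|T_u|$ with a leaf as the trivial base case, so the single-child situation is just a one-term maximum and needs no separate treatment, and you justify the recursion $L(u)\le\max_i\bigl(\max(x_i,y_i)+L(v_i)\bigr)$ by pure bounding-box arithmetic, using only that all offsets produced by Lemma~\ref{lemma:distance} are non-negative. Your route is arguably cleaner and slightly more robust: it makes explicit that the area bound needs no disjointness or planarity of the sibling sectors (the paper's geometric justification is stronger than necessary for a bounding-box claim), at the cost of not highlighting the angular-sector picture that the paper reuses later in the two-quadrant analysis (Lemma~\ref{lemma:bound3}). One small caution: the pseudocode of Algorithm~\ref{algo:1_quadrant} literally reads $\theta_1\gets a_1(u)$, $\theta_2\gets a_2(u)$, but your reading (the edge into $v_i$ is placed inside the angle-range of $v_i$, giving the bound in terms of $\phi_{v_i}$) is the intended one and is the same reading the paper's own proof uses.
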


\begin{proof}
We use induction on the number of vertices having at least two children.
Let $i$ be the number of vertices of $\in T_u$ with at least two children.
\begin{description}
\item[Base Case (i=0):] In this case, $T_u$ is just a path and, by \Cref{obs:path}, \Cref{algo:1_quadrant} assigns to every vertex the same angle-range. 
From this observation, for any vertex $v \in T_u$, it holds that $a_2(v)-a_1(v)=a_2(u)-a_1(u)=\phi_u$ and, therefore, by \Cref{lemma:distance} we have that each edge expands our grid at most by:

\begin{equation*}
\frac{\pi}{2}  \frac{1}{\phi_u}
\end{equation*}

Since the tree has $|T_u|$ vertices, we expand the grid $|T_u|-1$ times, therefore the side-length of the grid required for the drawing of tree $T_u$ is:
\begin{equation*}
(|T_u|-1)  \frac{\pi}{2}  \frac{1}{\phi_u}
\end{equation*}

The base case is now settled.

\item[Induction Step:] We assume that for any rooted subtree which contains at most $i$ vertices  with at least two children each, the statement holds.
We prove that for any  subtree rooted at vertex $u$ with $i+1$ vertices in $T_u$ having at least two children each, the statement also holds.

At first we prove that the only case of interest is when the subtree is rooted at a vertex with at least two children.
Let's assume $T_u$ is the union of a path starting from $u$ and ending at $v$ where each vertex has exactly one child except $v$ and the subtree rooted at $v$.
%, $T_v$ where $v$ has at least two children and the number of vertices in $T_v$ having at least two children is $i+1$ by assumption since the vertices in the path between $u$ and $v$ have exactly one child.
The number of vertices in $T_v$ having at least two children is $i+1$ by assumption since the vertices in the path between $u$ and $v$ have exactly one child.
If the statement holds for $v$ we have, by \Cref{obs:path},  $a_2(v)=a_2(u)$ and $a_1(v)=a_1(u)$, and thus, 
\begin{equation}\label{eq:b1}
\phi_v = a_2(v)-a_1(v)=a_2(u)-a_1(u)=\phi_u
\end{equation}

%$\phi_v = a_2(v)-a_1(v)=a_2(u)-a_1(u)=\phi_u$.
The side-length of the required grid for $T_v$ is,
\begin{equation*}
(|T_v|-1)  \frac{\pi}{2}  \frac{1}{\phi_v} \overset{\left(\ref{eq:b1}\right)} {=}  (|T_v|-1)  \frac{\pi}{2}  \frac{1}{\phi_u}
\end{equation*}

Also,  the side-length of grid  required for the path from $u$ to $v$, having   $|T_u| - |T_v|$ edges, is

\begin{equation*}
(|T_u|-|T_v|)  \frac{\pi}{2}  \frac{1}{\phi_u}
\end{equation*}

So,  the total side-length of the   required grid is:
\begin{equation*}
(|T_u|-1)  \frac{\pi}{2}  \frac{1}{\phi_u}
\end{equation*}
Therefore, it is enough to  only consider the  case where the root $u$ of the subtree $T_u$ has at least two children.

Let $u$ be a vertex $\in T$ such that $u$ has at least two children and $T_u$ has $i+1$ vertices with at least two children each.
Let $v_1,v_2,\ldots,v_m$ be the children of $u$ and observe that 
the largest grid devoted to any of the trees\footnote{Recall that  by $T_v^u$, where $v$ is a child of $u$, we denote the tree that consists of edge $(u,v)$ and $T_v$.}
$T_{v_j}^u, 1\leq j \leq m$, 
determines the side-legth of the grid  drawing of $T_u$ since the subtrees rooted at the children of $u$ are drawn completely inside non-overlapping (but possibly touching) angular sectors. 
The above statement holds because all the grids that are used for the subtrees have the same origin ($u$) and all angular sectors lies in the first quadrant since \Cref{algo:1_quadrant} assigns to the root  angle-range $\left<0, \frac{\pi}{2}\right>$.
Therefore, the   side-length of the grid required in order to  draw $T_u$ is equal to the maximum of the grid side-lengths required to draw any of $T_{v_j}^u$.
For any vertex $v_j$, since vertex $u$ has at least two children, it holds that the number of vertices in $T_{v_j}$ having at least two children each is less or equal to $i$, and therefore the  induction hypothesis applies.
Thus, $T_{v_j}$  is drawn on a grid with side-length  bounded by,

\begin{equation*}
(|T_{v_j}|-1)  \frac{\pi}{2}  \frac{1}{\phi_{v_j}}
\end{equation*}

For the edge connecting $u$ to $v_j$, by \Cref{lemma:distance} we require a grid of side-length  bounded by,

\begin{equation*}
\frac{\pi}{2}  \frac{1}{\phi_{v_j}}
\end{equation*}

Therefore, the required  grid  has total side-length  bounded by:

\begin{equation*}
|T_{v_j}|  \frac{\pi}{2}  \frac{1}{\phi_{v_j}}
\end{equation*}

Since we applied \Cref{strategy_main:angle}, it holds that 
\begin{equation}
\phi_{v_j}= \frac{|T_{v_j}|}{|T_u|-1}  \phi_u \label{eq:ac2}
\end{equation}

Thus, the required total grid side-length required  can  be restated as:
\begin{align*}
|T_{v_j}|  \frac{\pi}{2}  \frac{1}{\phi_{v_j}} &
\overset{\left(\ref{eq:ac2}\right)}{=} |T_{v_j}|  \frac{\pi}{2}  \frac{1}{ \frac{|T_{v_j}|}{|T_u|-1}  \phi_u }\\
& = (|T_u|-1)  \frac{\pi}{2}  \frac{1}{\phi_{u}}
\end{align*}

Therefore, the statement holds for the induction step. This completes the proof of the lemma.
\end{description}
\end{proof}

\begin{theorem} \label{thm:1Q_algo_gridSize}
Given a rooted n-vertex Tree $T$, \Cref{algo:1_quadrant} produces a  monotone planar grid drawing using a   grid of size  at most $n \times n$.
\end{theorem}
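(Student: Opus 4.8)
The plan is to combine the two facts already established about Algorithm~\ref{algo:1_quadrant}: that it produces a monotone and planar drawing (Lemma~\ref{lemma:algMonotone}) and that the side-length of the grid it uses to draw any subtree $T_u$ is bounded by $(|T_u|-1)\frac{\pi}{2}\frac{1}{\phi_u}$ (Lemma~\ref{lemma:bound}). Monotonicity and planarity then require no further argument, so the entire theorem reduces to a single grid-size computation at the root.

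First I would instantiate Lemma~\ref{lemma:bound} at the root $r$. Since Procedure \textsc{BalancedTreeMonotoneDraw} initializes $a_1(r)=0$ and $a_2(r)=\frac{\pi}{2}$, the angle-range length of the root is $\phi_r = a_2(r)-a_1(r) = \frac{\pi}{2}$, and the subtree $T_r$ is the whole tree, so $|T_r|=n$. Substituting these values into the bound of Lemma~\ref{lemma:bound} gives a side-length of at most
\[
(|T_r|-1)\,\frac{\pi}{2}\,\frac{1}{\phi_r} = (n-1)\,\frac{\pi}{2}\,\frac{2}{\pi} = n-1
\]
in each dimension. The same bound applies simultaneously to width and height because, as used in the proof of Lemma~\ref{lemma:bound}, all angular sectors share the origin and lie in the first quadrant since the root is assigned the angle-range $\left<0,\frac{\pi}{2}\right>$.

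Finally, I would translate this side-length into grid dimensions using the convention fixed in Section~\ref{sect:preliminaries}: a drawing of width $w$ and height $h$ occupies a $(w+1)\times(h+1)$ grid of points. The root sits at $(0,0)$ and every other vertex is placed by repeatedly adding the nonnegative offsets returned by Lemma~\ref{lemma:distance}, so the whole drawing lies in the first quadrant with width and height at most $n-1$; hence it fits on a grid with at most $(n-1)+1=n$ points in each dimension, i.e., an $n\times n$ grid.

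I expect no genuine obstacle, since the combinatorial and geometric work is already carried out in Lemmas~\ref{lemma:bound} and~\ref{lemma:distance}. The only point demanding care is the bookkeeping distinction between side-length (a distance) and grid dimensions (a count of grid points): one must use $\phi_r=\frac{\pi}{2}$ so that the factor $\frac{\pi}{2}\frac{1}{\phi_r}$ collapses exactly to $1$, and then apply the $+1$ correction when converting the side-length $n-1$ into the number of grid points $n$.
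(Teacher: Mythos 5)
Your proposal is correct and follows essentially the same route as the paper's own proof: invoke Lemma~\ref{lemma:algMonotone} for monotonicity, instantiate Lemma~\ref{lemma:bound} at the root with $\phi_r=\frac{\pi}{2}$ and $|T_r|=n$ to get side-length $n-1$, and conclude an $n \times n$ grid. Your explicit handling of the side-length versus grid-point-count conversion is a small clarification of a step the paper leaves implicit, but it is not a different argument.
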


\begin{proof}
The monotonicity and  planarity of the drawing follow directly from \Cref{lemma:algMonotone}. 
\Cref{algo:1_quadrant}  assigns to the root $r$ of  tree $T$  the  angle-ranger $< 0, \frac{\pi}{2}>$, i.e., $a_1 (r)= 0$ and $a_2(r) = \frac{\pi}{2}$.
By applying \Cref{lemma:bound} to the root   $r$  we get that, in the worst case, the drawing of $T$ uses a grid of side-length that is smaller  or equal to:
\begin{equation*}
\left( n-1 \right)  \frac{\pi}{2}  \frac{1}{\frac{\pi}{2}} = n-1
\end{equation*}

Therefore, the required grid is of size at most $n \times n$.
\end{proof}

\Crefrange{fig:1Q_algo_binTree}{fig:1Q_algo_worstTree}  present  drawings produced by  \Cref{algo:1_quadrant}. \Cref{fig:1Q_algo_binTree} shows the drawing of a 5-layer complete binary tree (31 vertices). While \Cref{thm:1Q_algo_gridSize} indicates that a grid of size $31 \times 31$ may be required, the binary tree is drawn on a $23 \times 22 $ grid. 
\Cref{fig:1Q_algo_path} shows the drawing of a path (15 vertices). All paths that are rooted at one of their endpoints are drawn along the main diagonal of a grid with side-length matching the bound stated in \Cref{thm:1Q_algo_gridSize}. 
Finally, \Cref{fig:1Q_algo_worstTree} shows a drawing of a tree (out of all 10-vertex rooted trees) that requires maximum area (when produced by \Cref{algo:1_quadrant}). We have drawn all 10-vertex rooted trees and have identified non-path trees that require  a grid  of the dimensions stated in \Cref{thm:1Q_algo_gridSize}.

\begin{figure}[h]
	\centering
	\begin{minipage}[t]{0.30\textwidth}
		\centering
		\includegraphics[scale=0.25]{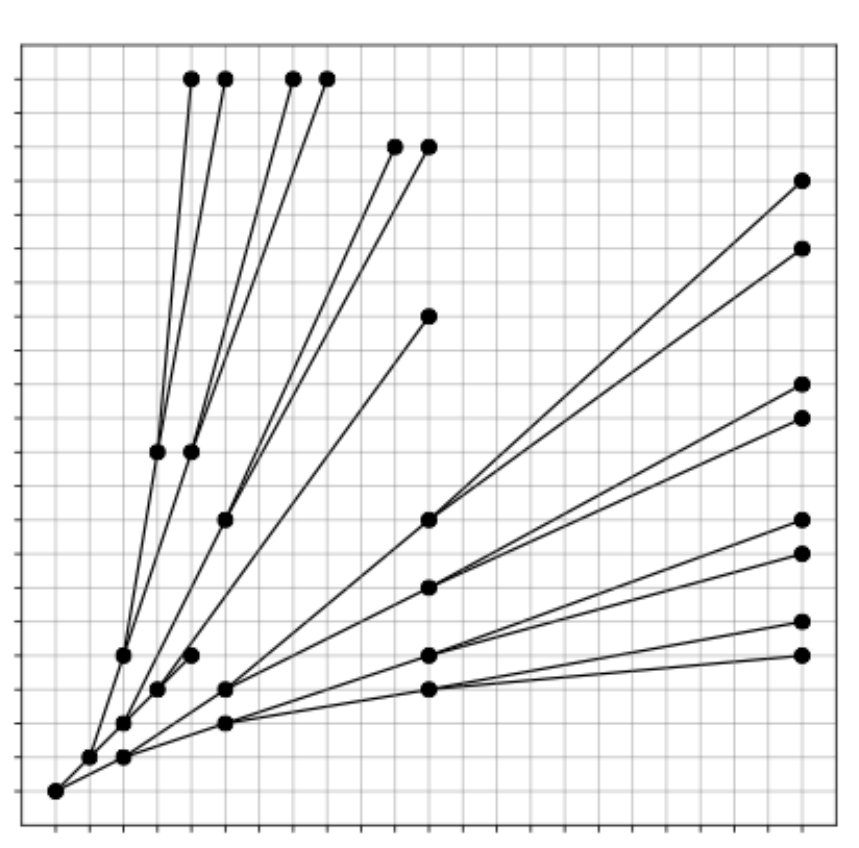}
		\caption{A full binary tree (31 vertices) as drawn by \Cref{algo:1_quadrant}. Grid size: $23 \times 22$.}
		\label{fig:1Q_algo_binTree}
	\end{minipage}
\hfill
	\begin{minipage}[t]{0.30\textwidth}
		\centering
		\includegraphics[scale=0.25]{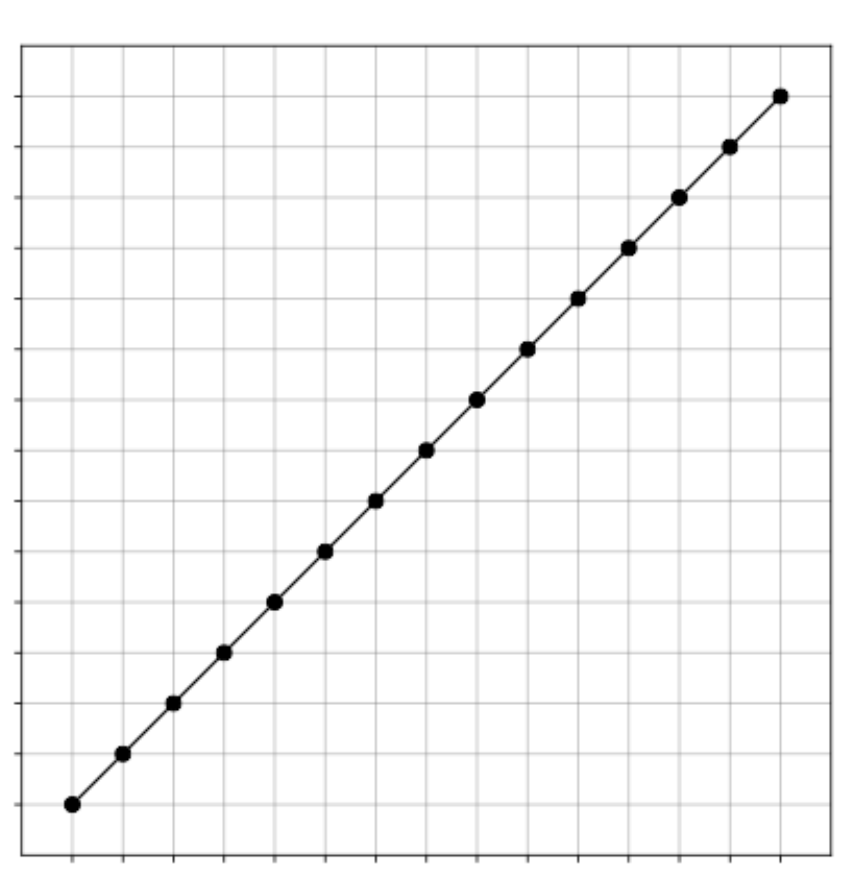}
		\caption{ A path (15 vertices) as drawn by \Cref{algo:1_quadrant}. Grid size: $15 \times 15$.}
		\label{fig:1Q_algo_path}
	\end{minipage}
\hfill
	\begin{minipage}[t]{0.30\textwidth}
		\centering
		\includegraphics[scale=0.25]{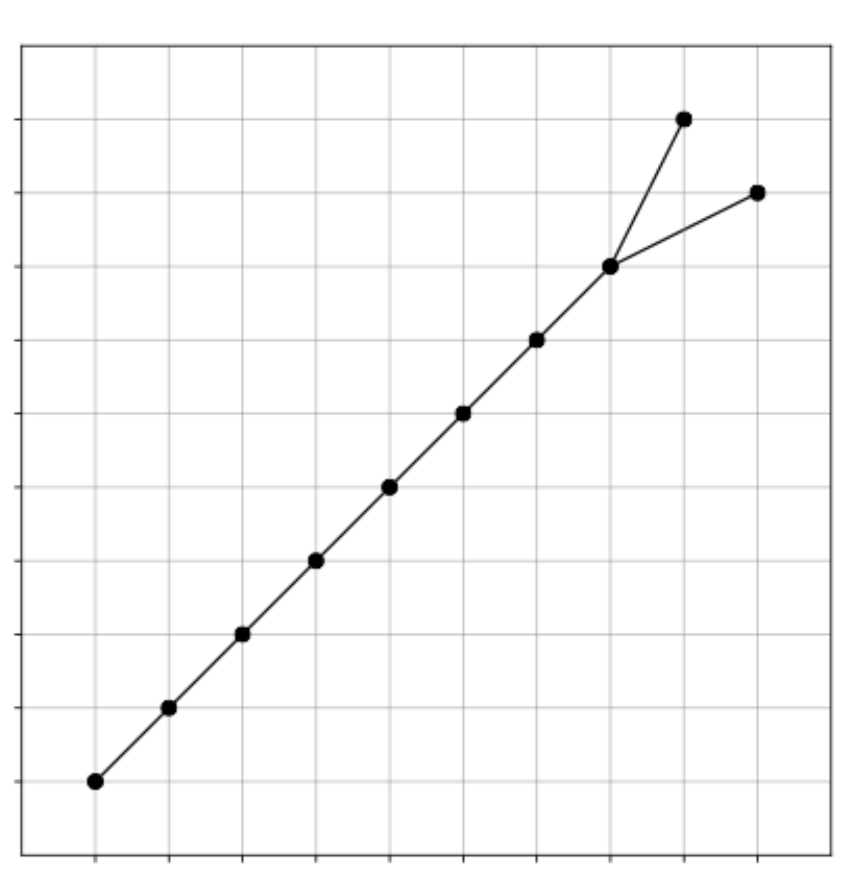}
		\caption{ A non-path tree (10 vertices) with maximum required area when drawn  by \Cref{algo:1_quadrant}. Grid size: $10 \times 10$.}
		\label{fig:1Q_algo_worstTree}
	\end{minipage}
\end{figure}

\section{Convexification of One-Quadrant Monotone Trees}
\label{sec:convex_drawings}
In this section, we focus on convex monotone tree drawings. Recall that a tree drawing is \emph{convex} if, by extending every edge incident to a leaf into an infinite ray (originating at its parent and passing through the leaf), the resulting set of rays do not intersect and they (together with the original tree edges) partition the plane into convex regions. As it can be seen in \cref{fig:non_convex_by_1quad}, there exists ordered trees for which   \cref{algo:1_quadrant} generates non-convex drawings. \cref{fig:convex_by_conv_alg} shows, for the same tree,  the convex monotone drawing generated by \cref{algo:convex} (developed in this section).

\begin{figure}[h]
	\centering
	\begin{minipage}[t]{0.45\textwidth}
		\centering
		\includegraphics[scale=0.2]{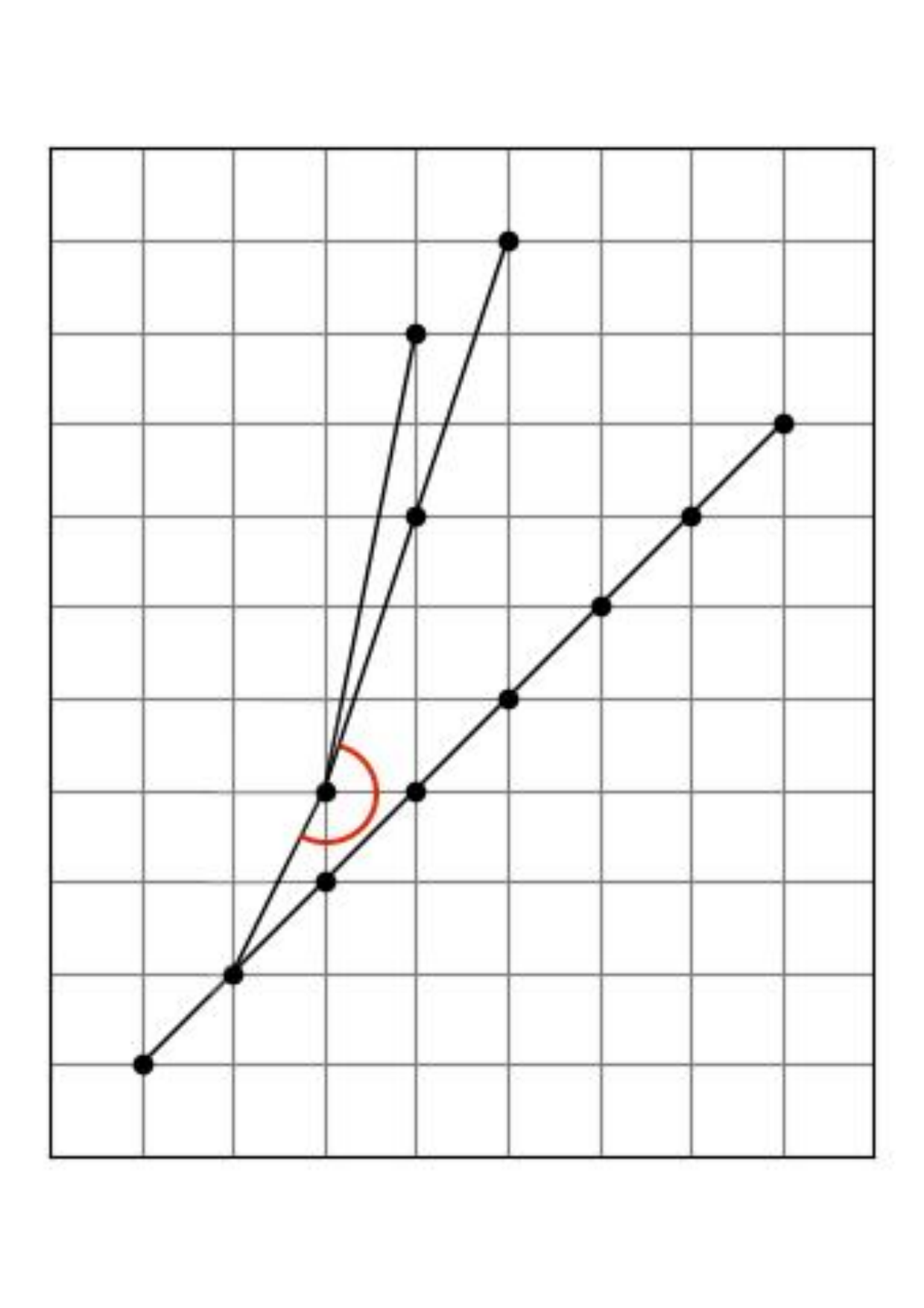}
		\caption{Non-convex monotone drawing of a small tree (12 vertices) by~\cref{algo:1_quadrant}: the indicated angle at the central fork-highlighted via a red arc-is  non-convex.}
		\label{fig:non_convex_by_1quad}
	\end{minipage}
	\hfill
	\begin{minipage}[t]{0.45\textwidth}
		\centering
		\includegraphics[scale=0.2]{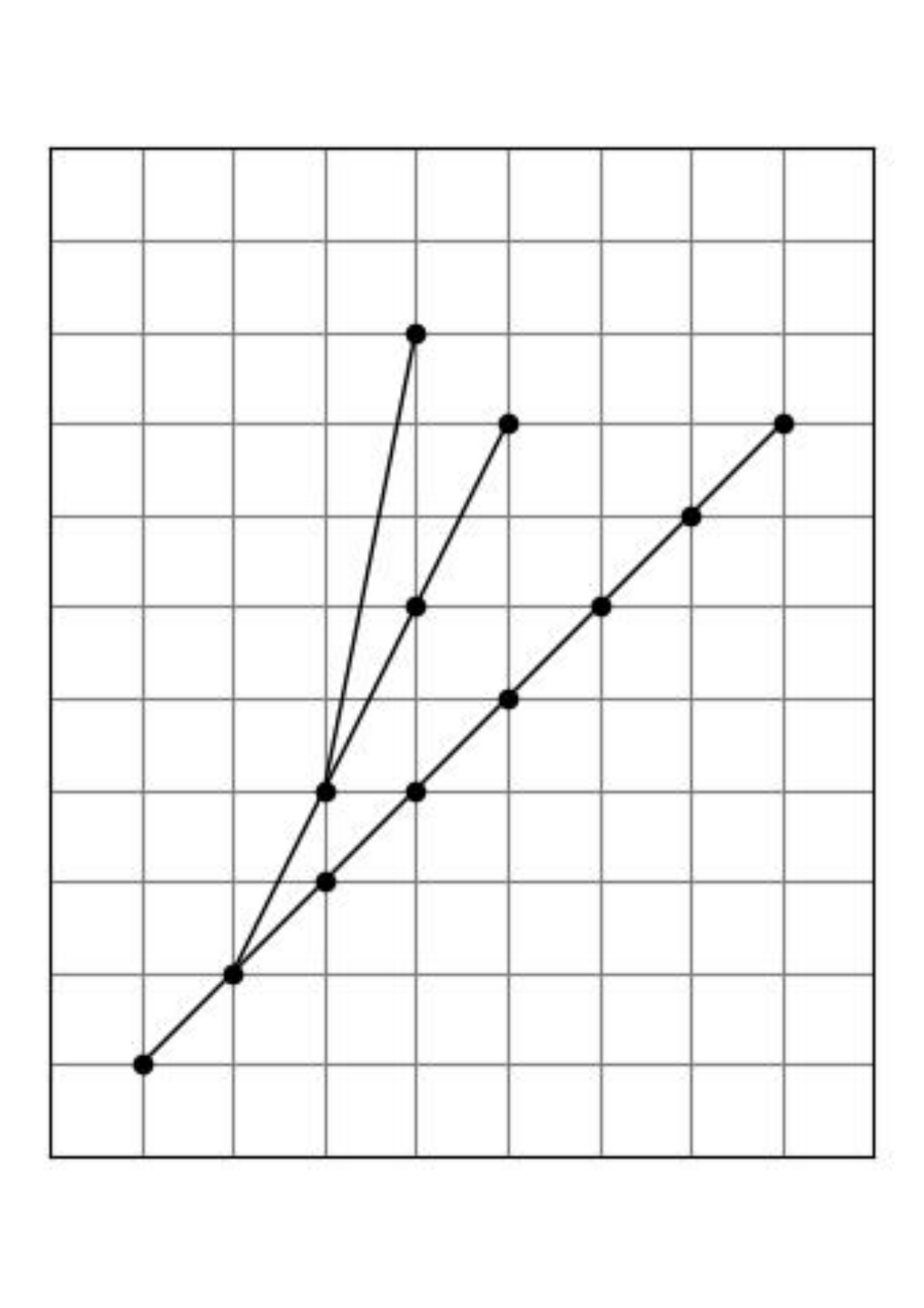}
		\caption{Convex monotone drawing of the tree in \cref{fig:non_convex_by_1quad}  by~\cref{algo:convex}: slope inheritance straightens the central fork angle to exactly~$\pi$.}
		\label{fig:convex_by_conv_alg}
	\end{minipage}
\end{figure}

We first illustrate the two cases in which~\cref{algo:1_quadrant} can produce a non-convex drawing, corresponding to non-convex angles incident to (i) the root (\cref{fig:bad_case_one}) and (ii) a non-root vertex (\cref{fig:bad_case_two}).

\begin{itemize}[align=parleft,left=0pt..1em]
\item\emph{Case-1: Non-convex angle at the root (\cref{fig:bad_case_one})}
Consider the tree consisting of a root $r$ with exactly two children $v_1$ and $v_2$.  \cref{algo:1_quadrant} assigns $r$ the sector $\langle0,\tfrac\pi2\rangle$ and, by \cref{strategy_main:angle}, places $v_1$ and $v_2$ at grid points whose edge slopes satisfy $0 < \alpha_1 < \alpha_2 < \frac\pi2.$
Although both edges lie in the first quadrant, the outer face incident to $r$ spans the complement of that quadrant.  Hence the interior angle at $r$ is $2\pi - (\alpha_2 - \alpha_1) > \pi,$ which is non-convex.

\item \emph{Case-2: Non-convex angle at a non-root vertex (\cref{fig:bad_case_two})}
Now consider a long-path on \(n\) vertices in which the second vertex has an extra child.  When placing these vertices, \cref{lemma:distance} can produce a discontinuous change in the assigned slope whenever $\theta_2 - \theta_1$ lies near an integer multiple of the grid‐alignment increment (cf.\  \cref{lemma:distance}).  In practice, shrinking the angular sector allocated to the extra child by an arbitrarily small amount may force the subsequent path‐edge’s slope to “jump” to a different grid point.  As a result, the angle between that edge and its predecessor along the path exceeds $\pi$, creating a non-convex angle at the intermediate node (see  \cref{fig:bad_case_two}).
\end{itemize}

In both cases, the drawings produced by \cref{algo:1_quadrant}, while monotone, are not convex. These examples motivate the adjustments in  \cref{algo:convex} to remedy these issues by selectively re-using edge slopes when possible and otherwise invoking  \cref{lemma:distance} to place children so that all incident angles remain strictly less than $\pi$. We summarize these key differences in \cref{obs:changes_to_convex}, and prove the main properties about \cref{algo:convex} in \cref{thm:convex_drawing}.

\begin{figure}[h]
	\centering
	\begin{minipage}[t]{0.45\textwidth}
		\centering
		\includegraphics[scale=0.35]{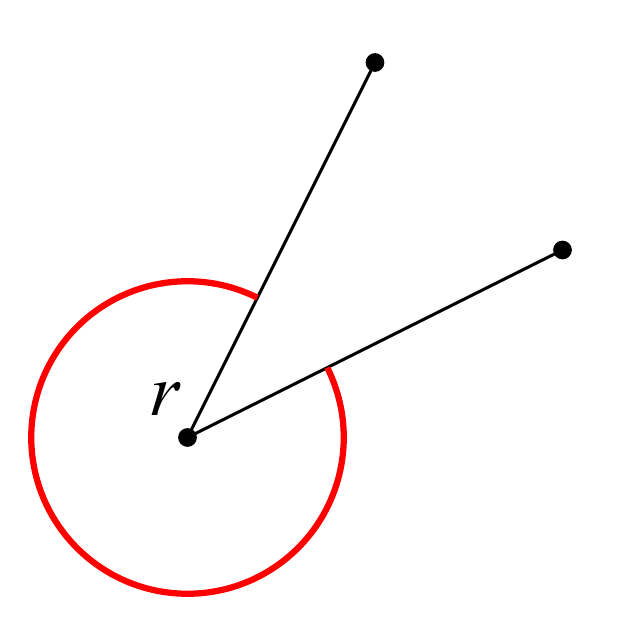}
		\caption{\cref{algo:1_quadrant} places the two leaves in the first quadrant, so the face between them wraps around the origin and forms a non-convex angle at root $r$ (red arc), resulting in a non-convex drawing.}
		\label{fig:bad_case_one}
	\end{minipage}
	\hfill
	\begin{minipage}[t]{0.45\textwidth}
		\centering
		\includegraphics[scale=.35]{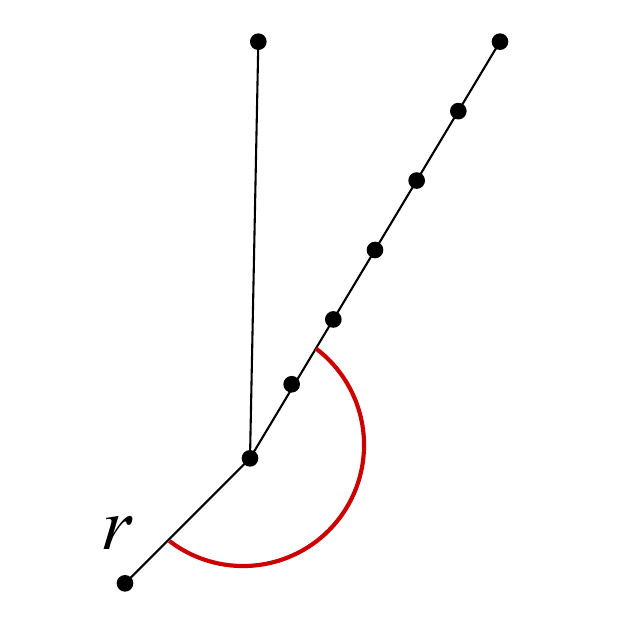}
		\caption{The angle formed at the fork between two consecutive edges of the long path and the short path at the central node (red arc) can be strictly non-convex. %Although this configuration cannot occur (due to the small size of the tree) in a valid output of our drawing algorithm , it is deliberately exaggerated here to clarify the underlying intuition.
		}
		\label{fig:bad_case_two}
	\end{minipage}
\end{figure}

\begin{algorithm}[ht]
\caption{Convex Tree Drawing}\label{algo:convex}
\begin{algorithmic}[1]
\Procedure{ConvexDraw}{}
    \State {Input:} An $n$-vertex tree $T$.
    \State {Output:} A convex drawing of $T$ on a grid of size at most $n \times n$.
    \State Root $T$ at any vertex of degree $1$ (such a vertex always exists in a tree).
    \State Set $a_1(r) \gets 0$ and $a_2(r) \gets \frac{\pi}{2}$.
    \State \Call{AssignAngles}{$r,\, a_1(r),\, a_2(r)$} (see \cref{algo:1_quadrant}).
    \State Draw $r$ at $(0,0)$.
    \State \Call{DrawConvexVertices}{$r,\, \emptyset$}.
\EndProcedure

\State
\Procedure{DrawConvexVertices}{$u,\, (\delta_x,\delta_y)$}
    \State {Input:} A vertex $u$ drawn at $(u_x,u_y)$ and (if $u$ is not the root) the displacement vector $(\delta_x,\delta_y)$ from its parent to $u$.
    \State {Action:} Draw the vertices in the subtree $T_u$.
    \For {each child $v_i$ of $u$}
        \If {$slope((0,0), (\delta_x,\delta_y) ) \in (a_1(v_i),a_2(v_i))$} \Comment{We examine the slope between $u$'s parent and $u$.}
            \State $x \gets \delta_x$, and $y \gets \delta_y$
        \Else
            \State Determine a valid pair $(x,y)$ as specified in \cref{lemma:distance} with $\theta_1 \gets a_1(v_i)$ and $\theta_2 \gets a_2(v_i)$.
        \EndIf
        \State If $u$ is drawn at $(u_x, u_y)$, then draw $v_i$ at $(u_x+x,\,u_y+y)$.
        \State \Call{DrawConvexVertices}{$v_i,(x,y)$}.
    \EndFor
\EndProcedure
\end{algorithmic}
\end{algorithm}

\notshow{
\begin{observation}\label{obs:changes_to_convex}
The modifications in \cref{algo:convex} relative to \cref{algo:1_quadrant} are:

\begin{enumerate}
    \item \textbf{Slope Verification and Adjustment.} For every non-leaf vertex $u$ and each child $v$ of $u$, the algorithm checks whether the direction (slope) of the edge from $u$, as inherited from the displacement vector $(\delta_x,\delta_y)$ 
    of the edge incidental at $u$, lies within the open interval $(a_1(v),a_2(v))$. If this condition holds, the child is placed using the same displacement. Otherwise, a displacement vector is computed (using \cref{lemma:distance}) similar to \cref{algo:1_quadrant}.
    \todo[inline] {Ant. This term "displacement vector" appears for the first time. We should introduce it or use a different description. I have to admit that I made several passes and I did not managed to follow the description. it is not clear from this description that one child inherits the slope of the "from the parent" incoming edge.  Should we give a simple example with say, binary trees, where this problem is created by alg-1?}
    \item \textbf{Rooting at a Leaf.} The tree is rooted at a vertex of degree $1$, which implies that the unique edge incident to the root is drawn with slope $\frac{\pi}{4}$.
\end{enumerate}
Both modifications are illustrated in \cref{fig:convex tree 2}. In that figure, the tree is rooted at a leaf, and for every non-leaf node at least one child is drawn using an edge whose slope is inherited from it's parent.
\end{observation}
}

%\newpage
\begin{observation}\label{obs:changes_to_convex}
The main changes in \cref{algo:convex} relative to \cref{algo:1_quadrant} are twofold: (i) \cref{algo:convex} explicitly verifies whether the angular sector assigned to a child vertex contains the slope of the edge from its parent and adjusts the drawing accordingly to ensure convexity; and (ii) the tree is rooted at a node with degree $1$ rather than at an arbitrary vertex. 
\end{observation}

The  modifications mentioned in \cref{obs:changes_to_convex} are clearly illustrated in \cref{fig:convex tree 2,fig:convex tree 1}.
In each figure, the tree is rooted at a vertex of degree $1$, and for every non-leaf node, at least one child is drawn with an edge that maintains the same slope as that of its parent. 

\begin{figure}[h]
	\centering
	\begin{minipage}[t]{0.45\textwidth}
		\centering
		\includegraphics[scale=0.4]{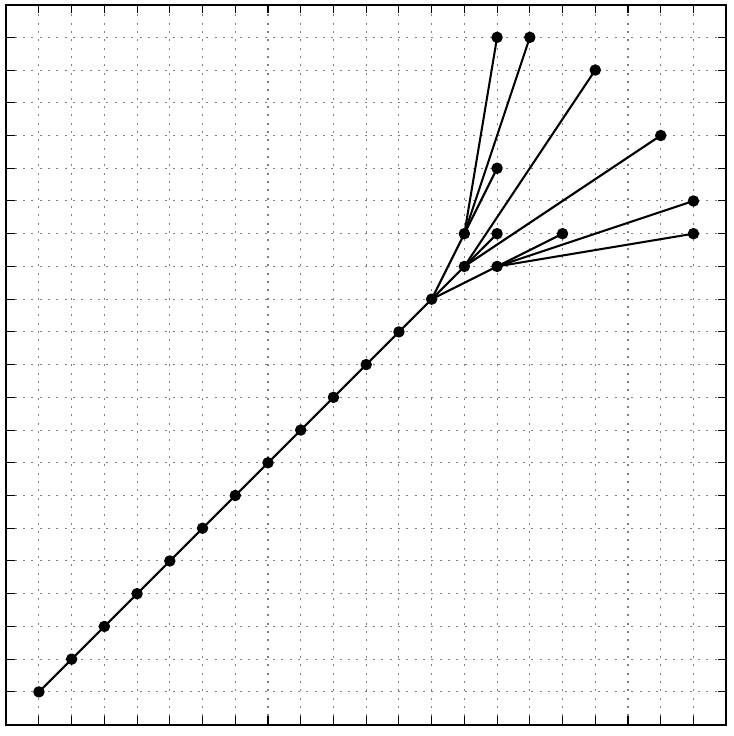}
		\caption{A convex monotone drawing of a broomstick-like tree (25 vertices) as drawn by \cref{algo:convex}. Grid size: $21 \times 21$.}
		\label{fig:convex tree 1}
	\end{minipage}
	\hfill
	\begin{minipage}[t]{0.45\textwidth}
		\centering
		\includegraphics[scale=0.4]{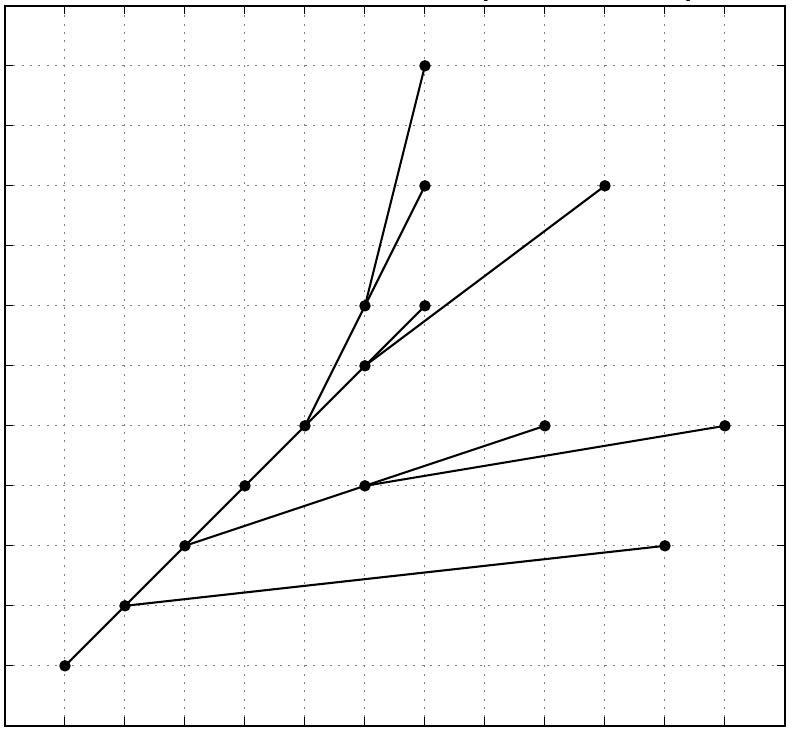}
		\caption{A convex monotone drawing of a tree with a small degree per node (15 vertices) as drawn by \cref{algo:convex}. Grid size: $12 \times 11$.}
		\label{fig:convex tree 2}
	\end{minipage}
\end{figure}

\begin{theorem}\label{thm:convex_drawing}
The tree drawing produced by \cref{algo:convex} is convex monotone and fits in a grid of size at most $n \times n$.
\end{theorem}

\begin{proof}
We prove the theorem in two parts. First, we show that the drawing is planar, non-strictly slope disjoint, and fits in a grid of size at most $n \times n$. Second, we prove that the drawing is convex. For the first part, we provide only a sketch, since the arguments closely mirror those presented in \cref{lemma:algMonotone}, \cref{lemma:bound}, and \cref{thm:1Q_algo_gridSize}.

\textbf{Grid Size \& Monotonicity [Sketch]:} We first sketch that the drawing is non-strictly slope disjoint. \cref{algo:convex} assigns each vertex an angle‐range and places its children so that every edge’s slope lies strictly between the lower and upper bounds of that range. This construction is analogous to the one used in \cref{lemma:algMonotone}. Consequently, the drawing is non‑strictly slope‑disjoint. By \cref{thm:NSslopeDis_monotonePlanar}, any drawing that is non‑strictly slope‑disjoint is guaranteed to be planar and monotone.

In high-level, the proof for the grid size follows because children inherit a sufficiently large angular sector from their parent, the grid size bound proof follows similarly to the previous argument. Once the root is fixed, the analysis for the grid size bound follows essentially as in the one‐quadrant monotone drawing in \cref{thm:1Q_algo_gridSize}. The key observation is that in \cref{algo:convex} the only difference from \cref{algo:1_quadrant} that in some cases a child inherits the displacement (and hence the slope) from its parent. In these cases the child’s angular sector is at most the parent’s, that is, for a parent $u$ and its child $v$ we have
$$a_2(v) - a_1(v) \leq a_2(u) - a_2(u).$$
This means that the expansion factor contributed by the edge from $u$ to $v$ (which is bounded by a term proportional to $1/(a_2(v) - a_2(v))$ as shown in \cref{lemma:bound}) does not worsen. Therefore, by applying an inductive argument similar to that in the proof of \cref{lemma:bound}, and following the technique in \cref{thm:1Q_algo_gridSize} we conclude that the drawing fits in an $n\times n$ grid.

\textbf{Convexity:} Recall that a drawing is convex if (i) at every vertex with degree at least two the interior angles between consecutive edges are convex, and (ii) if each edge incident to a leaf is extended into an infinite ray (originating at its parent and passing through the leaf), then these do not intersect.

We consider an arbitrary vertex $u$ with degree at least two. (The root is excluded since it has degree $1$ by construction.) Let $u$ be the parent of $v$, and let the edge $(u,v)$ have slope $\alpha$. 
We take two cases based on whether there is a child $v'$ of $v$ whose angular sector strictly contains the slope $\alpha$ (i.e.\ $\alpha\in(a_1(v'),a_2(v'))$).  By Properties~2 and 3 of non‑strictly slope‑disjoint drawings, there can be at most one such child.  If no child’s sector contains $\alpha$ in its interior, then again by Properties~2 and 3 the slope $\alpha$ can lie on the boundary of at most two children’s sectors: namely there exist children $v_1$ and $v_2$ of $v$ with $a_2(v_1)=\alpha = a_1(v_2).$ Since \cref{algo:convex} partitions a node’s angular sector into non‑overlapping sectors assigned to its children, it follows that exactly those two children will have $\alpha$ lying on the boundary of their respective sectors.

\textbf{Case 1.} \emph{There exists a child $v'$ of $v$ such that $\alpha \in (a_1(v'),a_2(v'))$.}\\[1mm]
In this case,  \cref{algo:convex} places $v'$ so that 
\[
slope(v,v') = \alpha.
\]
Thus, the edge $(v,v')$ is collinear with $(u,v)$ (or can be regarded as a continuation of it), and the angle between these two edges is $\pi$. Regardless of the other incident edges at $v$, the angles between consecutive edges at $v$ remain convex.

\textbf{Case 2.} \emph{No child $v'$ of $v$ satisfies $\alpha \in (a_1(v'),a_2(v'))$.}\\[1mm]
In this case, the angular range $[a_1(v),a_2(v)]$ at $v$ is partitioned among the children such that there exist exactly two children, say $v_1$ and $v_2$, with $$a_2(v_1) = a_1(v_2)=\alpha$$ and also
\[
[a_1(v_1),a_2(v_1)] \subseteq [a_1(v),\alpha] \quad \text{and} \quad [a_1(v_2),a_2(v_2)] \subseteq [\alpha,a_2(v)].
\]
By the non-strict slope-disjoint property (Property~1), the slopes satisfy
\[
a_1(v) < slope(v,v_1) < \alpha < slope(v,v_2) < a_2(v).
\]
Thus, the angles between $(u,v)$ and $(v,v_1)$ and between $(u,v)$ and $(v,v_2)$ are both strictly less than $\pi$. Moreover, since all slopes lie in the interval $\left[0,\frac{\pi}{2}\right]$, the angle between $(v,v_1)$ and $(v,v_2)$ is also convex. Hence, every angle at $v$ formed by two consecutive incident edges is convex.

\textbf{Non-Intersection of Extended Leaf Rays:}
Next, we show that extending each leaf into an infinite ray results in a set of non-intersecting rays. Since the root $r$ has degree $1$, let $v$ be its unique child. By the initialization in \cref{algo:convex}, we have $a_1(r)=0$ and $a_2(r)=\frac{\pi}{2}$, so by \cref{lemma:distance} the edge $(r,v)$ is drawn with a slope of $\frac{\pi}{4}$ (and consequently, the reverse direction from $v$ to $r$ has slope $-\frac{\pi}{4}$). Moreover, by construction all remaining leaf vertices are placed in the first quadrant, and by Property~1 of non-strictly slope-disjoint drawings, the slopes of the edges incident to any leaf lie within $\left[0,\frac{\pi}{2}\right]$. Hence, it suffices to show that the rays extending from these leaves (other than at the root) do not intersect.

Let $l_1, l_2, \ldots, l_k$ be the leaves of the tree as visited via in-order traversal that visits the children of each vertex in counter-clockwise order\footnote{We remind readers that throughout this section we use the embedding returned by \cref{algo:convex}, in which the children of every internal vertex $u$ is an ordered set and are placed in \textbf{strictly increasing slope order}: the first child $v_1$ lies in the first quadrant, the second child $v_2$ follows it counter-clockwise, and so on. Consequently, when we perform an in‑order traversal, the leaves are encountered in \textbf{counter-clockwise order around the root}. In particular, $l_1$ is the very first leaf met when we traverse the drawn tree.} and denote by $p_1, p_2, \ldots, p_k$ their respective parents. Note that since the drawing is planar, it suffices to show that 
$slope(p_1,l_1)< slope(p_2,l_2)< \ldots <slope(p_k,l_k)$ for the rays to be non-intersections.\footnote{During an in‐order traversal, the leaves appear in a specific order. When you extend each leaf’s edge as a ray, this ordering guarantees that the rays fan out without overlapping.} By the construction of the angular ranges in \cref{algo:convex} and the non-strict slope-disjoint property, the slopes of the edges $(p_i, l_i)$ satisfy
\[
0 \leq slope(p_1,l_1) < slope(p_2,l_2) < \cdots < slope(p_k,l_k) \leq \frac{\pi}{2}.
\]
Because these slopes are strictly increasing and lie within the interval $\left(0,\frac{\pi}{2}\right)$, the infinite rays obtained by extending each edge $(p_i,l_i)$ are pairwise non-intersecting.
\end{proof}

Observe that, since \cref{algo:convex} reassigns the root of the input tree $T$ to a degree‑one node (a leaf), the resulting drawing is non‑traditional in the sense that  the original root is no longer placed at the plane’s origin $(0,0)$.
If we insist on drawing the root of the original tree on the origin $(0,0)$, we can produce a near‑convex drawing of the tree, i.e., a drawing where each pair of consecutive edges is convex, with the only exception occurring at the root as shown in \cref{thm:monotone_near_convex_root_at_origin}. This is easily done as follows: 
Firstly, we   create a new tree $T'$ by adding to $T$ a new  node $r'$ which is adjacent to root  $r$ and appears first in $r$'s adjacent list.  Secondly, we apply  \cref{algo:convex} on tree $T'$ with $r'$ as its degree-1 root. As a result, we get a convex drawing of $T'$ with the old root $r$ drawn at $(1,1)$ and tree $T$ drawn at the first quadrant having $r$ at its origin.
Finaly, we remove $r'$ and we recenter the drawing so that $r$ is drawn at $(0,0)$. The following theorem is immediate.

\begin{theorem}\label{thm:monotone_near_convex_root_at_origin}
Given an ordered rooted n-vertex tree $T$, we can always produce a  monotone and near-convex planar grid drawing of $T$ that has its root drawn at $(0,0)$, it respects its ordering, and  fits in  a   grid of size  at most $n \times n$.
\end{theorem}

\begin{comment}
\begin{proof}
\textbf{Monotonicity:}
Every path in $T$ is a subpath of a path in $T'$, and \cref{algo:convex} guaranteed monotonicity for $T'$.  Hence the drawing of $T$ remains monotone.
\\
\textbf{Near‑convexity:}
Deleting $r'$ may introduce a single non‑convex “gap” between the first and last neighbor of $r$, so the result is not fully convex.  However, all consecutive edges around each vertex (including $r$) still meet at convex angles, satisfying the near‑convexity condition.
\\
\textbf{Size:} \cref{thm:convex_drawing} guarantees that the original drawing of $T'$ is in a grid of size at most $n+1\times n+1$. Let the original root $r$ to be placed in coordinates $(x_r,y_r)$, after removing the node $r'$ and translating the original root to the origin $(0,0)$ the size is going to be of size $n+1-x_r \times n + 1 - y_r$. Then by \cref{lemma:smallAngle}, and $x_r,y_r\geq 1$, and hence, the translated drawing fits in a grid of size at most $n \times n$.
\end{proof}
\end{comment}

\section{Two-Quadrants Monotone Unrooted Ordered  Tree Drawing}
\label{sect:2quad_drawing}

In this Section, we examine monotone drawings for unrooted ordered trees. Our approach is to carefully select a vertex $r$ and designate it as the root of the tree. 
The produced tree drawing occupies the first two quadrants, with respect to the location of its root $r$ which is drawn at the origin.   We note that the drawing respects the initial embedding of the tree, that is, the order of the neighbors of each vertex around it is maintained.

The ability to choose a vertex $r$ and to  designate it as the root of the tree, in addition to the use of the first two quadrants, allows us to reduce the used grid to at most $n \times \frac{n}{2}$.

We first describe how to select  the vertex to be designated as the root of the tree. A desirable property of the root node, given the nature of our algorithm, is that its children  have as much balanced subtrees (with respect to  their number of vertices) as possible.

Let $T$ be an unrooted tree. Let $r \in T$ be a vertex such that if we root $T$ at $r$ then for any child  $v$ of $r$, the size of subtree $T_v$ is $|T_v| \leq \frac{n}{2}$. We refer to  $r$  as a \emph{gravity root} of $T$.
Therefore, if an $n$-vertex tree $T$ is rooted at a \emph{gravity root} vertex $r$ there is no vertex $u \in T \backslash r$ such that $|T_u| > \frac{n}{2}$.
%Also, if $n \geq 3$, $r$ has at least two children.

\begin{observation}\label{obs:root}
Let $T$ be an $n$-vertex tree, $n >2$, rooted at a gravity root $r$. Then,  $r$  has at least two children.
\end{observation}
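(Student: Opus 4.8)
The plan is to argue by contradiction on the number of children of the gravity root $r$. First I would observe that $r$ cannot have zero children: since $n > 2$, the tree has more than one vertex, so the root must have at least one neighbor, all of which become children once $T$ is rooted at $r$. Thus the only case I need to rule out is that $r$ has \emph{exactly} one child.

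Suppose, for contradiction, that $r$ has a single child $v$. The key structural observation is that, since $r$ has no other children, every vertex of $T$ other than $r$ lies in the subtree $T_v$; hence $|T_v| = n - 1$. I would then invoke the defining property of a gravity root, namely that $|T_u| \leq \frac{n}{2}$ for every child $u$ of $r$, applied to the child $v$. This yields $n - 1 = |T_v| \leq \frac{n}{2}$.

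Solving this inequality gives $\frac{n}{2} \leq 1$, i.e.\ $n \leq 2$, which directly contradicts the hypothesis $n > 2$. Therefore $r$ cannot have exactly one child, and combined with the elimination of the zero-child case, $r$ must have at least two children.

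This argument is a short counting/casework contradiction, so I do not anticipate a genuine obstacle; the only point requiring care is the clean case split (zero vs.\ one child) and the justification that a single-child root forces $|T_v| = n - 1$, which follows immediately from the fact that the subtrees rooted at the children of $r$ together with $r$ itself partition the vertex set of $T$.
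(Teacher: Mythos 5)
Your proof is correct and follows essentially the same route as the paper: assume $r$ has a single child $v$, observe $|T_v| = n-1$, and derive a contradiction with the gravity-root bound $|T_v| \leq \frac{n}{2}$ when $n > 2$. The explicit handling of the zero-child case is a minor addition the paper omits as trivial.
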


\begin{proof}
If we assume that $r$ has only one child,  say $u$, then $|T_u| = |T| - 1 > \frac{|T|}{2}$  (for $n>2$), a contradiction since we assumed $r$ is a gravity root.
\end{proof}

\begin{algorithm}[ht]
\caption{Identify a  gravity root} \label{alg:gr}
\begin{algorithmic}[1]
\Procedure{GravityRootFinder}{T}
\State \hspace*{-0.5cm}Input: A unrooted tree $T$. 
\State \hspace*{-0.5cm}Output: A gravity root vertex $r$.
	\State $r \gets \text{ An arbitary vertex~} u \in T$
	\While {$r$ is not a gravity root}
	\State $u \gets$ the vertex connected to $r$ which lies in the largest connected 
	\State $~~~~~$ component of $T \backslash r$.
	\State $r \gets u$
	\EndWhile
%\State \hspace*{-0.5cm}{\bf end}
\EndProcedure

\end{algorithmic}
\end{algorithm}

We now show that every tree has a gravity root.

\begin{lemma}
Let $T$ be an $n$-vertex unrooted tree. \Cref{alg:gr} always succeeds in  identifying  a gravity root of $T$ .
\end{lemma}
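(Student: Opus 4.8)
The plan is to reduce the lemma entirely to a \emph{termination} argument for the while-loop of Algorithm~\ref{alg:gr}. Correctness then comes for free: the loop can only exit when its guard fails, i.e.\ when $r$ is a gravity root, so all that needs proving is that the iteration $r \gets u$ cannot run forever. To this end I would introduce a potential function: for a vertex $r$, let $\Phi(r)$ be the number of vertices in the largest connected component of $T \backslash r$. Since $\Phi(r)$ is a positive integer (for $n \geq 2$), it suffices to show that each loop iteration strictly decreases $\Phi$; a strictly decreasing sequence of positive integers is necessarily finite, which forces termination.

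The structural set-up I would establish first is the shape of $T \backslash r$ when $r$ is not a gravity root. In that case some component of $T \backslash r$ has more than $n/2$ vertices, and because the components of $T \backslash r$ partition the remaining $n-1$ vertices, at most one of them can exceed $n/2$. Hence there is a \emph{unique} largest component $C$ with $|C| > n/2$. As $T$ is a tree, $C$ is joined to $r$ by exactly one edge, and the vertex $u$ selected by the algorithm is precisely the endpoint of that edge lying in $C$.

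The heart of the proof is then the comparison $\Phi(u) < \Phi(r) = |C|$. I would analyse the components of $T \backslash u$. The component containing $r$ consists of $r$ together with all components of $T \backslash r$ other than $C$, so it has exactly $n - |C|$ vertices; since $|C| > n/2$ this is at most $|C|-1 < |C|$. Every other component of $T \backslash u$ is a subtree hanging off $u$ away from $r$ and therefore lies inside $C \backslash \{u\}$, so it has at most $|C|-1$ vertices. Consequently every component of $T \backslash u$ is strictly smaller than $|C|$, which gives $\Phi(u) < \Phi(r)$ and completes the progress step.

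The only genuinely delicate point—and thus the main obstacle—is the component bookkeeping for $T \backslash u$: one must argue carefully that deleting $u$ splits $T$ into exactly one ``$r$-side'' piece of size $n - |C|$ together with several subtrees entirely contained in $C$. This relies essentially on $T$ being acyclic and on $u$ being the unique neighbour of $r$ inside $C$, so that no vertex outside $C$ can reattach to $u$. Once this is pinned down, the remaining ingredients (uniqueness of the big component, and the observation that a strictly decreasing sequence of positive integers is finite) are routine.
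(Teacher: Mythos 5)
Your proposal is correct and follows essentially the same route as the paper: both use the size of the largest connected component of $T \backslash r$ as a strictly decreasing potential, established by showing that after moving to $u$ the component containing $r$ has only $n-|C|<|C|$ vertices while every other component of $T \backslash u$ is a proper subtree of $C$. Your write-up is in fact somewhat more careful than the paper's in the component bookkeeping (uniqueness of the large component and the exact size $n-|C|$ of the $r$-side piece), but the underlying argument is the same.
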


\begin{proof}
At each iteration, \Cref{alg:gr} gets a step closer to finding a gravity root.
Denote by $T_{lcc}(r)$ the largest connected component of $T \backslash r$. 
By definition, vertex $r$ is a gravity root if $|T_{lcc}(r)| \leq \frac{n}{2}$. We show that at 
each iteration of \Cref{alg:gr} the value of $|T_{lcc}(r)| $ decreases; this continues until a gravity root is reached.

Assume that  $r$ is not a gravity root. Then,  $|T_{lcc}(r)| \geq \frac{n+1}{2}$. Let $u$ be the neighbor of $r$ in  $T_{lcc}(r)$.  
Since \Cref{alg:gr} selects vertex $u$  as the root for next iteration, it is enough to show that 
$|T_{lcc}(u)| < |T_{lcc}(r)|$. 

Note that  the  connected  component of $T \backslash u$ that contains $r$  has size  less or equal than $n-\frac{n+1}{2}=\frac{n-1}{2}$. Thus, if $u$ is not a gravity root then the next candidate gravity root will be a neighbor of $u$ in $T_{lcc}(r)$. Thus, $T_{lcc}(u)$ will be a  proper subtree of $T_{lcc}(r)$, and therefore, $|T_{lcc}(u)| < |T_{lcc}(r)|$.

We conclude  that the value of $|T_{lcc}(r)|$, where $r$ is the candidate gravity root in \Cref{alg:gr} decreases with each iteration until a gravity root is selected.
\end{proof}

By rooting a tree at a \emph{gravity root}, we can obtain a monotone drawing with bounded  angle-range length for any subtree rooted at a child of the root. This is formalized in the  Lemma that follows.
Let function $odd(): \mathbb{N} \rightarrow \left\{ 0,1 \right\}$ evaluate to 1  when its parameter is odd, otherwise it evaluates to  0.

\begin{lemma}\label{lem:angleRangebound}
Let $T$ be an $n$-vertex tree rooted at a \emph{gravity root}  $r$. Let $\left< \theta_1, \theta_2 \right>$ be the angle-range of $r$. \Cref{strategy_main:angle} assigns at each vertex $u\in T \backslash r$ angle-range of length at most $\frac{\theta_2 - \theta_1}{2}  \frac{n - odd(n)}{n - 1}$.
\end{lemma}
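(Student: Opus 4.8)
The plan is to exploit two facts: that the angle-range length assigned by Strategy~\ref{strategy_main:angle} can only shrink as we descend the tree, and that the gravity-root property forces the subtrees hanging off the root to be small. Combining these, the largest angle-range length among the non-root vertices is attained (for the purpose of the bound) at a child of $r$, and the gravity-root constraint pins down exactly how small that length must be.

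First I would record the multiplicative recurrence that Strategy~\ref{strategy_main:angle} induces on angle-range lengths: for any vertex $u$ and any child $v$ of $u$,
\begin{equation*}
\phi_v = \phi_u \cdot \frac{|T_v|}{|T_u|-1},
\end{equation*}
which is exactly the relation already noted in \cref{eq:ac2}. Since $T_v$ is a proper subtree of $T_u$ and $T_u$ consists of $u$ together with the subtrees rooted at all its children, we have $|T_v| \leq |T_u|-1$, hence $\frac{|T_v|}{|T_u|-1} \leq 1$ and $\phi_v \leq \phi_u$. Thus angle-range lengths are non-increasing along every root-to-leaf path; in particular, every vertex $u \in T \backslash r$ lies in the subtree $T_{v_i}$ of a unique child $v_i$ of $r$ and satisfies $\phi_u \leq \phi_{v_i}$. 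It therefore suffices to bound $\phi_{v_i}$ for the children $v_i$ of the root.

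Next I would apply the recurrence at the root itself. Because $|T_r| = n$, for each child $v_i$ of $r$ we obtain $\phi_{v_i} = (\theta_2-\theta_1) \cdot \frac{|T_{v_i}|}{n-1}$. The gravity-root property gives $|T_{v_i}| \leq \frac{n}{2}$; since $|T_{v_i}|$ is an integer, this tightens to $|T_{v_i}| \leq \frac{n - odd(n)}{2}$ (when $n$ is odd, $\frac{n}{2}$ is not an integer, so $|T_{v_i}| \leq \frac{n-1}{2}$). Substituting yields
\begin{equation*}
\phi_{v_i} \leq (\theta_2-\theta_1) \cdot \frac{n - odd(n)}{2(n-1)} = \frac{\theta_2-\theta_1}{2} \cdot \frac{n - odd(n)}{n-1},
\end{equation*}
which is the claimed bound. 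Combined with $\phi_u \leq \phi_{v_i}$ from the previous step, this establishes the lemma for every $u \in T \backslash r$.

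I expect the only delicate point to be the integer-rounding step $|T_{v_i}| \leq \frac{n - odd(n)}{2}$: the definition of a gravity root supplies only the weaker real-valued inequality $|T_{v_i}| \leq \frac{n}{2}$, and one must invoke the integrality of subtree sizes to sharpen it. This is precisely what introduces the $odd(n)$ correction and makes the stated bound tight. Everything else is a routine monotonicity argument together with the substitution $|T_r| = n$.
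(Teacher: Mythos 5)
Your proposal is correct and follows essentially the same route as the paper's proof: bound the angle-range length of each child of the gravity root via Strategy~\ref{strategy_main:angle} together with the integrality-sharpened bound $|T_{v_i}| \leq \frac{n - odd(n)}{2}$, then extend to all of $T \backslash r$ by the monotonicity of angle-range lengths down the tree. The only difference is that you justify the monotonicity step explicitly through the recurrence $\phi_v = \phi_u \cdot \frac{|T_v|}{|T_u|-1}$, whereas the paper merely asserts it; this is a welcome elaboration, not a different approach.
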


\begin{proof}
Let $T$ be an $n$-vertex tree rooted at a \emph{gravity root} $r$.
Since $T$ is rooted at a gravity root then, for any child $u$ of $r$ it holds that $|T_u| \leq \frac{n}{2}$.
Furthermore, if $n$ is odd then it holds that $|T_u| \leq \frac{n-1}{2}$ since the size of a subtree must be an integer.
By making use of the $odd()$ function, we have that for any child $u$ of $r$
it holds that  $|T_u| \leq \frac{n - odd(n)}{2}$.

By \Cref{strategy_main:angle}, we assign to each  child $u$ of $r$ an angle-range of length:
\begin{align*}
a_2(u) - a_1(u) & =  \left( \theta_2 - \theta_1 \right)  \frac{|T_u|}{n-1}\\
				& \leq  \left(\theta_2 - \theta_1 \right)  \frac{\frac{n - odd(n)}{2}}{n-1}\\
				& =   \frac{\theta_2 - \theta_1}{2}  \frac{n - odd(n)}{n-1}
\end{align*}

We complete the proof by noticing that the observation holds not only for the children of $r$ but also for any other vertex of $T \backslash r$. This is due to the fact that \Cref{strategy_main:angle}
always assigns to a vertex of $T$ an angle-range of length equal or smaller to that of its parent.
\end{proof}

In our ``two-quadrant'' algorithm we again use  \Cref{strategy_main:angle} for angle assignment but, we now assign the gravity root of the input tree $T$  angle-range  $\left< 0, \pi \right>$ 
instead of $\left< 0, \frac{\pi}{2} \right>$.
Consequently, in order to  assign grid points to tree vertices we need to extend  \Cref{lemma:distance} 
to cover the case where a vertex has angle-range boundary  $\theta_2 >\frac{\pi}{2}$.

\begin{lemma} \label{lemma:distance2}
Consider angles $\beta_1$, $\beta_2$ with $0 \leq \beta_1 < \beta_2 \leq \pi$.
Then, a grid point $p$  such that  the edge $e$ that connects the origin $(0,0)$ to $p$ satisfies $\beta_1 < slope(e) < \beta_2$, can be identified as follows:

\begin{equation*}
p = \begin{cases}
(0,1)	&	\text{\rm if~}  \beta_1 < \frac{\pi}{2} < \beta_2\\
(x,y)	&	\text{\rm if~} \beta_2 \leq \frac{\pi}{2}, \text{\rm ~where~} (x,y) \text{\rm ~is a valid pair according to \Cref{lemma:distance}}\\
& \text{\rm where~} \theta_1 \gets \beta_1 \text{\rm ~and ~} \theta_2 \gets \beta_2 \\
(-x,y)	&	\text{\rm if~} \beta_1 \geq \frac{\pi}{2}, \text{\rm ~where~}  (x,y) \text{\rm ~is a valid pair  according to \Cref{lemma:distance}}\\
& \text{\rm where~} \theta_1 \gets \pi - \beta_2 \text{\rm ~and~} \theta_2 \gets \pi - \beta_1
\end{cases}
\end{equation*}
\end{lemma}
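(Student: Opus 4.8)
The plan is a straightforward case analysis, since the three cases in the statement partition all admissible pairs $(\beta_1,\beta_2)$ according to where $\frac{\pi}{2}$ sits relative to the interval $(\beta_1,\beta_2)$. First I would check that the cases are exhaustive and mutually exclusive: if $\beta_2 \leq \frac{\pi}{2}$ we land in the second case; if $\beta_2 > \frac{\pi}{2}$ while $\beta_1 < \frac{\pi}{2}$ we land in the first; and if $\beta_1 \geq \frac{\pi}{2}$ (which forces $\beta_2 > \frac{\pi}{2}$) we land in the third. Thus every pair with $0 \leq \beta_1 < \beta_2 \leq \pi$ falls into exactly one case, and it suffices to verify the slope condition $\beta_1 < slope(e) < \beta_2$ in each.

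For the first case ($\beta_1 < \frac{\pi}{2} < \beta_2$), the chosen point $p=(0,1)$ lies on the positive $y$-axis, so the edge $e$ from the origin to $p$ has $slope(e)=\frac{\pi}{2}$, and the strict inequalities $\beta_1 < \frac{\pi}{2} < \beta_2$ give the claim immediately. The second case ($\beta_2 \leq \frac{\pi}{2}$) needs no real work: the hypotheses $0 \leq \beta_1 < \beta_2 \leq \frac{\pi}{2}$ are exactly those of Lemma~\ref{lemma:distance} with $\theta_1 = \beta_1$ and $\theta_2 = \beta_2$, so invoking that lemma produces a point satisfying $\beta_1 < slope(e) < \beta_2$ directly.

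The third case ($\beta_1 \geq \frac{\pi}{2}$) is the only one requiring a genuine idea, namely a reflection across the $y$-axis. I would first observe that $\frac{\pi}{2} \leq \beta_1 < \beta_2 \leq \pi$ implies $0 \leq \pi-\beta_2 < \pi-\beta_1 \leq \frac{\pi}{2}$, so the reflected angles $\pi-\beta_2$ and $\pi-\beta_1$ satisfy the hypotheses of Lemma~\ref{lemma:distance}. Applying that lemma with $\theta_1 = \pi-\beta_2$, $\theta_2 = \pi-\beta_1$ yields a first-quadrant point $(x,y)$ (with $x,y \geq 0$) whose edge from the origin has some slope $\alpha$ with $\pi-\beta_2 < \alpha < \pi-\beta_1$. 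The key geometric fact is that reflecting $(x,y)$ to $(-x,y)$ sends a half-line of slope $\alpha$ to one of slope $\pi-\alpha$; hence the edge from the origin to $p=(-x,y)$ has $slope(e)=\pi-\alpha$, and the chain $\pi-\beta_2 < \alpha < \pi-\beta_1$ rearranges to $\beta_1 < \pi-\alpha < \beta_2$, which is precisely $\beta_1 < slope(e) < \beta_2$.

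The only step demanding care is this slope-reflection identity in the third case; everything else reduces to the already-proven Lemma~\ref{lemma:distance} or to reading off the slope of an axis-aligned point. I expect the main obstacle to be making the reflection argument precise — in particular confirming that the point returned by Lemma~\ref{lemma:distance} indeed has nonnegative coordinates, so that $(-x,y)$ behaves as claimed — but since all inequalities remain strict throughout, no degenerate or boundary slope needs separate treatment.
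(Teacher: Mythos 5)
Your proposal is correct and follows essentially the same route as the paper: a three-way case analysis in which the first case reads off the slope $\frac{\pi}{2}$ of the point $(0,1)$, the second case invokes Lemma~\ref{lemma:distance} directly, and the third case reflects a point produced by Lemma~\ref{lemma:distance} (applied to $\theta_1 = \pi-\beta_2$, $\theta_2 = \pi-\beta_1$) across the $y$-axis, using the identity $slope(e) = \pi - slope(e')$. Your additional checks (exhaustiveness of the cases and that the reflected angles satisfy the hypotheses of Lemma~\ref{lemma:distance}) are details the paper leaves implicit, so there is nothing to add.
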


\begin{proof}
We prove the lemma by taking cases depending on the value of $\beta_1$ and $\beta_2$.
\begin{description}
\item[Case-1: $\beta_1 < \frac{\pi}{2} < \beta_2$.]
It is clear that $\beta_1 < \frac{\pi}{2} < \beta_2 \Leftrightarrow \beta_1 < slope(e) < \beta_2$.

\item[Case-2: $\beta_2 \leq \frac{\pi}{2}$.]
From \Cref{lemma:distance} it holds that: 
$$\theta_1 < slope(e) < \theta_2 \Leftrightarrow \beta_1 < slope(e) < \beta_2.$$

\item[Case-3: $\beta_1 \geq \frac{\pi}{2}$.]
Let $e'$ be the edge that connects the origin to $(x,y)$. Note that  $slope(e) = \pi - slope(e')$.
From \Cref{lemma:distance} it holds that:
\begin{align*}
&\theta_1  <  slope(e') < \theta_2 \\
\Leftrightarrow & \pi - \beta_2  <  slope(e') < \pi - \beta_1 \\
\Leftrightarrow & \beta_1  <  \pi - slope(e') < \beta_2 \\
\Leftrightarrow & \beta_1  <  slope(e) < \beta_2
\end{align*}
%$\theta_1 < slope(e') < \theta_2 \Leftrightarrow \pi - \beta_2 < slope(e') < \pi - \beta_1 \Leftrightarrow \beta_1 < \pi - slope(e') < \beta_2 \Leftrightarrow \beta_1 < slope(e) < \beta_2$.
\end{description}

\end{proof}

\Cref{algo:2_quadrants} describes our ``two-quadrants'' balanced monotone unrooted-tree drawing algorithm. It consists of three procedures: Procedure \textsc{AssignAngles} (same as in \Cref{algo:1_quadrant}) which assigns angle-ranges to the vertices of the tree according to \Cref{strategy_main:angle}, Procedure \textsc{ExpandedDrawVertices} which assigns each tree vertex to a grid point according to \Cref{lemma:distance2} and Procedure \textsc{UnrootedTreeMonotoneDraw} which assigns a vertex as the root, draws it to point $(0,0)$ with angle-range $\left<0, \pi\right>$ and initiates the drawing of the tree. 

\begin{algorithm}[ht]
\caption{Two-Quadrants Monotone Tree Drawing algorithm}\label{algo:2_quadrants}
\begin{algorithmic}[1]
\Procedure{UnrootedTreeMonotoneDraw}{}
\State  \hspace*{-0.5cm}Input: An $n$-vertex unrooted tree $T$.
\State  \hspace*{-0.5cm}Output: A monotone drawing of $T$ on a grid of size  at most $n \times \frac{1}{2}n$.
%\State 
\State $r \gets$ \Call {GravityRootFinder}{$T$} (Finds a gravity root as described in
\State  $~~~~~~$ \Cref{alg:gr})
\State $a_1(r) \gets 0, ~a_2(r) \gets \pi$
\State \Call {AssignAngle}{$r, ~a_1(r), ~a_2(r)$}
%\State
\State Draw $r$ at $(0,0)$
\State \Call {ExpandedDrawVertices}{$r$}
%\State \hspace*{-0.5cm}{\bf end}
\EndProcedure

\State
\Procedure{AssignAngles}{u, $a_1$, $a_2$}
\State  \hspace*{-0.5cm}Input: A vertex $u$ and the boundaries of the angle-range $\left<a_1, a_2\right>$ assigned to $u$.
\State  \hspace*{-0.5cm}Action: It assigns angle-ranges to the vertices of $T_u$.
%\State 
	\For {each child $v_i$ of $u$}
	\State Assign $a_1(v_i),~a_2(v_i)$ as described in \Cref{strategy_main:angle}.
	\State \Call {AssignAngles}{$v_i, ~a_1(v_i), ~a_2(v_i)$}
	\EndFor
%\State \hspace*{-0.5cm}{\bf end}
\EndProcedure

\State
\Procedure{ExpandedDrawVertices}{u}
\State \hspace*{-0.5cm}Input: A vertex $u$ where $u$ has already been drawn of the grid and angle-ranges have been defined for all vertices of $T_u$.
\State \hspace*{-0.5cm}Action: It draws  the vertices of  $T_u$.

	\For {each child $v_i$ of $u$}
	\State Find a valid pair $(x,y)$  as described in \Cref{lemma:distance2} where
	\State  $~~~~~~\beta_1 \gets a_1(v_i) \text{ and } \beta_2 \gets a_2(v_i)$
	\State If $u$ is drawn at $(u_x, u_y)$, draw $v_i$ at $(u_x+x,u_y+y)$
	\State \Call {ExpandedDrawVertices}{$v_i$}
	\EndFor
%\State \hspace*{-0.5cm}{\bf end}
\EndProcedure
\end{algorithmic}
\end{algorithm}

The following observation highlights the connection between \Cref{algo:2_quadrants} and \Cref{algo:1_quadrant}.

\begin{observation}\label{obs:alg}
Let  $v$ be a vertex that has been assigned angle-range $\left< a_1(v), a_2(v) \right>$ and let $u$ be its parent which is drawn at grid point $(u_x, u_y)$.   \Cref{algo:2_quadrants} draws $T_v$ in the following way:

\begin{description}
\item [ $a_2(v) \leq \frac{\pi}{2}$:]  \Cref{algo:2_quadrants} draws $T_v$ in the first quadrant, in exactly the same way as \Cref{algo:1_quadrant} does.

\item [$a_1(v) \geq \frac{\pi}{2}$:]   \Cref{algo:2_quadrants} draws $T_v$ in the second quadrant as the reflex drawing (with respect to line $l: x=u_x$) of the drawing \Cref{algo:1_quadrant} produces for $T_u$ if we reverse the order of the children for each vertex $x\in T_v$.

\item [$a_1(v) < \frac{\pi}{2} < a_2(v)$:]  \Cref{algo:2_quadrants} draws $v$ on the $Y$ axis.
Since all children are assigned non-overlapping angle-ranges, at most one child includes $\frac{\pi}{2}$ in its angle-range and, according to the two previous points, the other children are either drawn at the first or second quadrant.
\end{description}
\end{observation}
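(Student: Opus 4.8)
The plan is to use that Algorithm~\ref{algo:2_quadrants} coincides with Algorithm~\ref{algo:1_quadrant} in every respect except the grid-point rule, where it invokes Lemma~\ref{lemma:distance2} instead of Lemma~\ref{lemma:distance}, and that the three cases of Lemma~\ref{lemma:distance2} are exactly the three cases of the observation. The first step is an inheritance fact: both algorithms assign angle-ranges through Strategy~\ref{strategy_main:angle}, so Property-2 (proved in Lemma~\ref{lemma:angle}) gives $a_1(v) \leq a_1(w) < a_2(w) \leq a_2(v)$ for every descendant $w$ of $v$. Consequently $a_2(v) \leq \frac{\pi}{2}$ forces $a_2(w) \leq \frac{\pi}{2}$ throughout $T_v$, while $a_1(v) \geq \frac{\pi}{2}$ forces $a_1(w) \geq \frac{\pi}{2}$ throughout $T_v$. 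Since, when Algorithm~\ref{algo:2_quadrants} draws a vertex $w$ as a child, the slope of the connecting edge must fall in $w$'s own range, the range fed to Lemma~\ref{lemma:distance2} at each placement inside $T_v$ is $\langle a_1(w),a_2(w)\rangle$, and in each hypothesis a single branch of Lemma~\ref{lemma:distance2} is therefore taken everywhere in $T_v$.

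For the first bullet, every such range has $\beta_2 = a_2(w) \leq \frac{\pi}{2}$, so Lemma~\ref{lemma:distance2} always selects its Case~2, which applies Lemma~\ref{lemma:distance} with the same arguments Algorithm~\ref{algo:1_quadrant} would use. As the angle assignment is identical in both algorithms, the offset chosen for every edge of $T_v$ is the one Algorithm~\ref{algo:1_quadrant} chooses, so the two drawings of $T_v$ agree up to the translation fixing $v$; since all these slopes lie in $(0,\frac{\pi}{2})$ the offsets are strictly positive and $T_v$ extends to the upper right of $v$, in the first quadrant.

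The second bullet is the substantive case. Now every placement in $T_v$ uses Case~3 of Lemma~\ref{lemma:distance2}: the offset is $(-x,y)$, where $(x,y)$ is the point Lemma~\ref{lemma:distance} returns for the reflected range $\langle \pi-a_2,\pi-a_1\rangle$. I would prove by induction over $T^u_v$ that the reflection $\rho\colon(x,y)\mapsto(-x,y)$ about the line $x=u_x$ carries Algorithm~\ref{algo:2_quadrants}'s drawing of $T^u_v$ onto Algorithm~\ref{algo:1_quadrant}'s drawing of the tree obtained from $T^u_v$ by replacing each range $\langle a_1,a_2\rangle$ with $\langle \pi-a_2,\pi-a_1\rangle$ and reversing the order of the children at every vertex. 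The crux, and the main obstacle, is to check that Strategy~\ref{strategy_main:angle} commutes with this reflect-and-reverse operation: if a vertex with range $\langle A_1,A_2\rangle$ has children $v_1,\dots,v_m$, then the reflected range $\langle \pi-a_2(v_j),\pi-a_1(v_j)\rangle$ of $v_j$ equals the range Strategy~\ref{strategy_main:angle} assigns to the $(m{+}1{-}j)$-th child when the parent has range $\langle \pi-A_2,\pi-A_1\rangle$ and the children are listed in reverse order. This is a one-line prefix-sum computation that uses only that the subtree sizes $|T_{v_j}|$ are invariant under reversal. With this identity the induction closes, because $\rho$ sends each Algorithm~\ref{algo:2_quadrants} offset $(-x_j,y_j)$ to precisely the offset $(x_j,y_j)$ that Lemma~\ref{lemma:distance}, hence Algorithm~\ref{algo:1_quadrant}, assigns to the reflected child, and $\rho$ fixes $u$; thus Algorithm~\ref{algo:2_quadrants}'s drawing of $T^u_v$ is the $\rho$-image of a first-quadrant drawing and lies in the second quadrant relative to $u$.

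For the third bullet, $a_1(v) < \frac{\pi}{2} < a_2(v)$, so Lemma~\ref{lemma:distance2} uses Case~1 and places $v$ at offset $(0,1)$ from $u$, on the vertical line through $u$; since the root is drawn at the origin with range $\langle 0,\pi\rangle$ and the straddling vertices chain upward from it, this line is the $Y$ axis. By Property-3 (Lemma~\ref{lemma:angle}) the children of $v$ receive pairwise disjoint, consecutive sub-ranges of $\langle a_1(v),a_2(v)\rangle$, so $\frac{\pi}{2}$ lies strictly inside at most one of them; every remaining child satisfies either $a_2 \leq \frac{\pi}{2}$ or $a_1 \geq \frac{\pi}{2}$ and is therefore drawn in the first or the second quadrant by the two cases already settled. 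Assembling the three cases yields the observation.
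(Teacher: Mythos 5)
Your proposal is correct, and there is nothing in the paper to compare it against line by line: the paper states this as an \emph{observation} with no proof whatsoever, treating it as immediate from the construction of Algorithm~\ref{algo:2_quadrants}. Your write-up supplies exactly the three facts that this implicit claim rests on. First, the inheritance step: Property-2 of Strategy~\ref{strategy_main:angle} (Lemma~\ref{lemma:angle}) propagates $a_2\leq\frac{\pi}{2}$, respectively $a_1\geq\frac{\pi}{2}$, to every vertex of $T_v$, so a single branch of Lemma~\ref{lemma:distance2} is used throughout $T_v$. Second, the commutation identity behind the second bullet, which is the only point with real mathematical content, and which you state correctly: with $N=|T_u|-1$, $S_j=\sum_{i\leq j}|T_{v_i}|$ and parent range of length $\Phi$, running Strategy~\ref{strategy_main:angle} on the reflected parent range $\left<\pi-A_2,\pi-A_1\right>$ with the children reversed assigns the $(m{+}1{-}j)$-th child the range $\left<\pi-A_2+\Phi\frac{N-S_j}{N},\,\pi-A_2+\Phi\frac{N-S_{j-1}}{N}\right>=\left<\pi-a_2(v_j),\pi-a_1(v_j)\right>$, so reflecting offsets commutes with the recursion, which is what closes your induction. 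Third, the upward propagation for the third bullet: a range straddling $\frac{\pi}{2}$ forces every ancestor's range to straddle $\frac{\pi}{2}$, so the entire ancestor chain receives offset $(0,1)$ and $v$ lies on the vertical line through the root, i.e., the $Y$-axis. You also resolve two loose points of the paper in the right way: the observation's phrase ``the drawing Algorithm~\ref{algo:1_quadrant} produces for $T_u$'' must be read, as you read it, as the placement procedure applied to $T^u_v$ with the corresponding reflected angle-ranges (a standalone run of Algorithm~\ref{algo:1_quadrant} with root range $\left<0,\frac{\pi}{2}\right>$ would not reproduce the reflection); and the angle-range fed to Lemma~\ref{lemma:distance2} when placing a vertex is that vertex's own range, which is what the paper intends (cf.\ the use of $\phi_{v_j}$ in Lemmas~\ref{lemma:bound} and~\ref{lemma:bound2}) even though the pseudocode literally passes the parent's.
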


By combining \Cref{lem:angleRangebound} with \Cref{lemma:distance2}, we  obtain an upper bound on the length of an edge in the drawing produced by \Cref{algo:2_quadrants}.

\begin{lemma} \label{lemma:bound2}
Let $T$ be an $n$-vertex tree rooted at a \emph{gravity root} $r$. Let $v$ be a vertex in $T \backslash r$ with angle-range $\left< \theta_1, \theta_2 \right>$ and let $u$ be its parent. For the vector $e=(x,y)$ that connects  $u$ to $v$, as drawn by \Cref{algo:2_quadrants}, it holds:
\begin{equation*}
max(|x|,y) \leq \frac{\pi}{2} \frac{1}{\theta_2 - \theta_2}  \frac{n - odd(n)}{n-1}
\end{equation*}
\end{lemma}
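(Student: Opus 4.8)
The plan is to read the vector $(x,y)$ directly off Lemma~\ref{lemma:distance2} and then to absorb the residual slack using the gravity-root angle-range bound of Lemma~\ref{lem:angleRangebound}. The edge $e$ entering $v$ from its parent $u$ must fall strictly inside the angle-range $\left< \theta_1, \theta_2 \right>$ assigned to $v$, so $(x,y)$ is precisely the point returned by Lemma~\ref{lemma:distance2} with $\beta_1 \gets \theta_1$ and $\beta_2 \gets \theta_2$. I would therefore split the argument into the three cases of Lemma~\ref{lemma:distance2}, according to where $\frac{\pi}{2}$ lies relative to $\left< \theta_1, \theta_2 \right>$. (Throughout I read the stated bound with denominator $\theta_2-\theta_1$, i.e.\ the angle-range length $\phi_v$ of $v$.)

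First I would dispose of the two ``non-straddling'' cases $\theta_2 \leq \frac{\pi}{2}$ and $\theta_1 \geq \frac{\pi}{2}$. In each, Lemma~\ref{lemma:distance2} defers to Lemma~\ref{lemma:distance} -- directly in the former, and after the reflection $\theta \mapsto \pi - \theta$ that sends $(x,y)$ to $(-x,y)$ in the latter -- so the ``length'' guarantee of Lemma~\ref{lemma:distance} yields $\max(|x|,y) \leq \frac{\pi}{2}\frac{1}{\theta_2-\theta_1}$ outright; here one only needs to note that the reflection leaves $\max(|x|,y)$ unchanged. Since $\frac{n-odd(n)}{n-1}\geq 1$ for every $n$ (it equals $1$ for odd $n$ and exceeds $1$ for even $n$), inserting this factor on the right only weakens the inequality, and the claim follows immediately.

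The main case, and the only one where the correction factor does real work, is the straddling case $\theta_1 < \frac{\pi}{2} < \theta_2$. Here Lemma~\ref{lemma:distance2} returns $(0,1)$, so $\max(|x|,y)=1$, and the bare estimate $\frac{\pi}{2}\frac{1}{\theta_2-\theta_1}$ is no longer safe: once $\theta_2-\theta_1$ exceeds $\frac{\pi}{2}$ -- which can occur for even $n$ -- it falls below $1$. The resolution is to apply Lemma~\ref{lem:angleRangebound} with the root's range $\left< 0, \pi \right>$: because $v \in T\setminus r$, its angle-range length obeys $\theta_2-\theta_1 \leq \frac{\pi}{2}\frac{n-odd(n)}{n-1}$. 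This single inequality rearranges to $\frac{\pi}{2}\frac{1}{\theta_2-\theta_1}\frac{n-odd(n)}{n-1}\geq 1 = \max(|x|,y)$, which is exactly the desired bound.

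Overall the proof is a short case split, and I expect the only conceptual point to be recognizing that the straddling case is where the gravity-root hypothesis is indispensable: the factor $\frac{n-odd(n)}{n-1}$ is exactly the slack built into Lemma~\ref{lem:angleRangebound}, and it is needed precisely so that the unavoidable unit step $(0,1)$ across the $y$-axis cannot violate the bound. The remaining steps are routine once one is careful that the reflected point $(-x,y)$ of Lemma~\ref{lemma:distance2} inherits the length guarantee of Lemma~\ref{lemma:distance} verbatim.
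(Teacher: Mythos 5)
Your proof is correct and follows essentially the same route as the paper's: the identical case split on where $\frac{\pi}{2}$ lies relative to $\left< \theta_1, \theta_2 \right>$, with Lemma~\ref{lem:angleRangebound} supplying the inequality $\theta_2-\theta_1 \leq \frac{\pi}{2}\frac{n-odd(n)}{n-1}$ in the straddling case (where the edge is $(0,1)$ and $\max(|x|,y)=1$), and Lemma~\ref{lemma:distance}'s length guarantee combined with $\frac{n-odd(n)}{n-1}\geq 1$ disposing of the two non-straddling cases. Your extra observations---that the reflection $(x,y)\mapsto(-x,y)$ leaves $\max(|x|,y)$ unchanged, and that the factor $\frac{n-odd(n)}{n-1}$ is genuinely needed because for even $n$ the angle-range length of $v$ can exceed $\frac{\pi}{2}$---are accurate refinements of the same argument.
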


\begin{proof}
First we note that, according to \Cref{lemma:distance2}, the $y$-coordinate is always positive but the sign of the  $x$-coordinate depends on the angle-range of $v$, as noted in \Cref{obs:alg}.

%Because \Cref{algo:2_quadrants} assigns root with $\left< 0 , \pi \right>$, \Cref{obs:anglebound} bounds the angle-range length of each vertex other than the root by $\frac{\pi}{2}$. We prove the lemma by taking cases depending on the value of $\left<\theta_1, \theta_2 \right>$.
\begin{description}
\item[$\theta_1 < \frac{\pi}{2} < \theta_2$:] 
The vector that connects $u$ to $v$ is $e=(0,1)$. Therefore,  $max(|x|,y)=1$.
By \Cref{lem:angleRangebound}, and since $v$ is not the tree root,   $v$ has  angle-range length at most $\frac{\pi}{2} \cdot \frac{n-odd(n)}{n-1}$. Therefore:

\begin{align*}
~&\frac{\pi}{2}  \frac{n-odd(n)}{n-1} \geq  \theta_2 - \theta_1 \\ 
\Rightarrow & \frac{\pi}{2}  \frac{n-odd(n)}{n-1}  \frac{1}{\theta_2 - \theta_1}  \geq  1 \\
\Rightarrow & \frac{\pi}{2}  \frac{n-odd(n)}{n-1}  \frac{1}{\theta_2 - \theta_1}  \geq  max(x,y)
\end{align*}

\item[Otherwise:] 
When  $\theta_2 \leq \frac{\pi}{2}$ or $\theta_1 \geq \frac{\pi}{2}$, the grid point assignment is made according to \Cref{lemma:distance2} which, in turn, makes use of \Cref{lemma:distance}. By  applying \Cref{lemma:distance} and by noticing that $\frac{n-odd(n)}{n-1} \geq 1$, the bound is guaranteed.
\end{description}

\end{proof}

\begin{lemma}\label{lemma:algMonotone2}
The drawing produced by \Cref{algo:2_quadrants} is monotone and planar.
\end{lemma}

\begin{proof}
The angle-range assignment of \Cref{strategy_main:angle}  satisfies  Property-2 and Property-3 of the non-strictly slope disjoint drawing as proved in \Cref{lemma:angle}.
In addition, the assignment of the vertices to grid points satisfies  Property-1 of   the non-strictly slope disjoint drawing as proved in \Cref{lemma:distance2}.
Thus, the produced drawing by  \Cref{algo:2_quadrants} is non-strictly slope disjoint and, by \Cref{thm:NSslopeDis_monotonePlanar}, it is monotone and planar.
\end{proof}

It remains to establish a bound on the grid size required by \Cref{algo:2_quadrants}. We consentrate on trees of at least 3 vertices, since it is trivial to draw a tree with two vertices. Our proof uses induction on the number of tree vertices having at least two children.

\begin{lemma} \label{lemma:bound3}
Let $T$ be an $n$-vertex tree, $n >2$,  rooted at a \emph{gravity root} $r$ and $\Gamma$ be the drawing of $T$ produced by \Cref{algo:2_quadrants}. Let $u \in T$ be a vertex  which, in $\Gamma$, is drawn  on the $Y$-axis and consider
 $\phi_u= a_2(u) -a_1(u)$ as assigned by \Cref{algo:2_quadrants}.
Let $\Gamma_u^R$ and $\Gamma_u^L$ be the partial drawings  of $T_u$ that lie in the first and second 
$~~$quadrant\footnote{$~$The positive $Y$-axis in considered to be part of both the first and the second quadrant. So, vertices that are drawn on the $Y$-axis appear in both $\Gamma_u^R$ and $\Gamma_u^L$}, respectively.
Then, each of $\Gamma_u^R$ and $\Gamma_u^L$ uses a grid of side-length  bounded by:
\begin{equation*}
(|T_u|-1)  \frac{\pi}{2}  \frac{n-odd(n)}{n-1}  \frac{1}{\phi_u}
\end{equation*}
\end{lemma}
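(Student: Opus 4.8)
The plan is to mirror the inductive argument of Lemma~\ref{lemma:bound}, now accounting for the fact that the subtree hanging off a $Y$-axis vertex is split between the two quadrants. By the reflection described in Observation~\ref{obs:alg}, the analysis of $\Gamma_u^L$ is word-for-word that of $\Gamma_u^R$ (interchanging ``right'' with ``left'' and $x$ with $-x$), so I will bound only $\Gamma_u^R$ and invoke symmetry for $\Gamma_u^L$. I will induct on the number $i$ of vertices of $T_u$ having at least two children, and I will bound the width and the height of $\Gamma_u^R$ separately. Since every edge of the drawing has a strictly positive $y$-component (Lemma~\ref{lemma:distance2}) and $u$ lies on the $Y$-axis, every piece of $\Gamma_u^R$ is anchored at $u$ with $x\ge 0$ and $y\ge u_y$; hence the union's width and height are each the maximum of the corresponding quantity over the pieces, and it suffices to bound each piece by the target value $D=(|T_u|-1)\frac{\pi}{2}\frac{n-odd(n)}{n-1}\frac{1}{\phi_u}$.

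For the base case $i=0$, the tree $T_u$ is a path; because $u$ is drawn on the $Y$-axis (so $\frac{\pi}{2}$ lies strictly inside its angle-range) and Observation~\ref{obs:path} makes each vertex inherit its parent's range, the whole path is drawn on the $Y$-axis and $\Gamma_u^R$ is a vertical segment of height $|T_u|-1$. The required bound $|T_u|-1\le D$ then reduces to $\phi_u\le\frac{\pi}{2}\frac{n-odd(n)}{n-1}$, which holds by Lemma~\ref{lem:angleRangebound}; here it is essential that $u\ne r$, which is guaranteed because a path forces $u$ to have at most one child whereas Observation~\ref{obs:root} gives the gravity root at least two children.

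For the induction step I first reduce, exactly as in Lemma~\ref{lemma:bound}, to the case where $u$ itself has at least two children: if $u$ has a single child, the $Y$-axis path from $u$ down to the first branching vertex $v$ satisfies $\phi_v=\phi_u$ (Observation~\ref{obs:path}), and its vertical edges contribute $|T_u|-|T_v|$ units of height, which are absorbed into $D$ using $\phi_u\le\frac{\pi}{2}\frac{n-odd(n)}{n-1}$ once more. When $u$ has at least two children I split its children, according to Observation~\ref{obs:alg}, into right children (range in $[0,\frac{\pi}{2}]$), left children (range in $[\frac{\pi}{2},\pi]$), and at most one \emph{axis child} $w$ whose range contains $\frac{\pi}{2}$; only the right children and the right part of $w$ contribute to $\Gamma_u^R$. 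Each right child's tree $T_v^u$ is drawn precisely as Algorithm~\ref{algo:1_quadrant} draws it, so Lemma~\ref{lemma:bound} bounds $T_v$ while Lemma~\ref{lemma:bound2} bounds the connecting edge; combining them and applying the Strategy~\ref{strategy_main:angle} identity $\phi_v=\frac{|T_v|}{|T_u|-1}\phi_u$ yields a box of side at most $D$.

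The main obstacle is the axis child $w$: it is the only place where the inductive hypothesis (rather than the one-quadrant Lemma~\ref{lemma:bound}) must be invoked, and the only place where the extra factor $\frac{n-odd(n)}{n-1}$ is genuinely needed. Since $u$ has at least two children, $T_w$ contains at most $i$ branching vertices and $w$ lies on the $Y$-axis, so induction bounds the side of $\Gamma_w^R$ by $(|T_w|-1)\frac{\pi}{2}\frac{n-odd(n)}{n-1}\frac{1}{\phi_w}$; the vertical edge $u\to w=(0,1)$ then adds a single unit of height, which is absorbed precisely because Lemma~\ref{lem:angleRangebound} gives $\phi_w\le\frac{\pi}{2}\frac{n-odd(n)}{n-1}$, i.e. $1\le\frac{\pi}{2}\frac{n-odd(n)}{n-1}\frac{1}{\phi_w}$, and $\phi_w=\frac{|T_w|}{|T_u|-1}\phi_u$ again turns this into a box of side at most $D$. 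Taking the union of these pieces, all anchored at $u$ inside the first quadrant, gives $\mathrm{width}(\Gamma_u^R)\le D$ and $\mathrm{height}(\Gamma_u^R)\le D$, hence side-length at most $D$; the symmetric argument settles $\Gamma_u^L$ and completes the induction.
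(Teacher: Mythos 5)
Your proof is correct and follows essentially the same route as the paper's: induction on the number of vertices with at least two children, reduction (via Observation~\ref{obs:path} and inequality $\phi_u\le\frac{\pi}{2}\frac{n-odd(n)}{n-1}$ from Lemma~\ref{lem:angleRangebound}) to the case where $u$ has at least two children, splitting the children by quadrant so that fully-right children are handled by the one-quadrant Lemma~\ref{lemma:bound}, the unique axis child by the induction hypothesis, connecting edges by Lemma~\ref{lemma:bound2}, and the Strategy~\ref{strategy_main:angle} identity $\phi_{v}=\frac{|T_{v}|}{|T_u|-1}\phi_u$ to reach the target bound, with symmetry disposing of $\Gamma_u^L$. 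The only differences are cosmetic: you collapse the whole single-child path at once rather than edge-by-edge, and you absorb the unit-length axis edge directly instead of citing Lemma~\ref{lemma:bound2}.
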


\begin{proof}
We firstly observe a property that  plays a key role in the proof. All vertices $u \in T$ that are drawn by \Cref{algo:2_quadrants} on the $Y$-axis satisfy, by construction, that $a_1(u) < \frac{\pi}{2} < a_2(u)$. This is due to the fact that a vertex is drawn on the $Y$-axis only if its placement was determined based on the first case of \Cref{lemma:distance2}. 

Secondly, we establish an inequality that holds for any vertex $u \in T\backslash r$.  
Given that \Cref{algo:2_quadrants} assigns 
to the gravity root $r$ angle-range $\left< 0 , \pi \right>$ and since   $u$ is not the gravity root, 
by \Cref{lem:angleRangebound} the angle range $\phi(u)$ of $u$ satisfies 
$\phi(u) \leq \frac{\pi}{2}  \frac{n-odd(n)}{n-1}$. Thus, 
\begin{equation}
1 \leq \frac{\pi}{2}  \frac{n-odd(n)}{n-1}  \frac{1}{\phi(u)}  \label{eq:greatThan1}
\end{equation}

Similar to the proof of \Cref{lemma:bound}, we employ induction on the number of vertices having at least two children. We also make use of the ``edge-length bound'' provided by   \Cref{lemma:bound2}. 
Let $i$ be the number of vertices in $ T_u$ with at least two children.
\begin{description}
\item[Base Case (i=0):] 
In this case,  $T_u$ is just a path and, by \Cref{obs:path}, \Cref{algo:2_quadrants} assigns to every vertex of $T_u$ the same angle-range.
Since  
$a_1(u) < \frac{\pi}{2} < a_2(u)$, for any vertex $v\in T_u$ the vector that connects $v$ to its parent is $e=(0,1)$.
Therefore, by \Cref{algo:2_quadrants},  $T_u$ is drawn on the $Y$-axis and has length $|T_u|-1$. Thus,  both $\Gamma_u^R$ and $\Gamma_u^L$ consist  of only a path of length $|T_u|-1$ which is drawn on the $Y$-axis. 
By \Cref{obs:root}, $u$ is not the gravity root and, thus, we can make use of (\ref{eq:greatThan1}). 
It  immediately follows that  each of $\Gamma_u^R$ and $\Gamma_u^L$ uses a grid of side-length  bounded by:
\begin{equation*}
(|T_u|-1)  \frac{\pi}{2}  \frac{n-odd(n)}{n-1}  \frac{1}{\phi_u}
\end{equation*}

The base case is now settled.

\item[Induction Step:]
We prove the bound only  for the grid side-length of $\Gamma_u^R$ as the case for $\Gamma_u^L$  is symmetric.

We first establish that the only case of interest is when $u$ has two or more  children. 
If $u$ has only one child,  say $v$, then, by \Cref{obs:root} $u$ is not the gravity root. 
By \Cref{obs:path}, $v$ inherits the angle range of its parent and, thus,   $\frac{\pi}{2}$ is contained within $v$'s angle-range. Moreover, $\phi(u)= \phi(v)$. By \Cref{algo:2_quadrants}, the vector that connects $u$ to $v$ is $e=(0,1)$. 
If we assume that the induction hypothesis holds for $v$, then the grid side-length of the $\Gamma_u^R$ is bounded by the grid side-length of $\Gamma_v^R$ plus the length of the vector that connects $u$ to $v$. Therefore, the grid side-length of $\Gamma_u^R$ is bounded by:

\begin{align*}
 & (|T_v|-1)  \frac{\pi}{2}  \frac{n-odd(n)}{n-1}  \frac{1}{\phi_v} + 1 \\
=& (|T_u|-2)  \frac{\pi}{2}  \frac{n-odd(n)}{n-1}  \frac{1}{\phi_u} + 1\\
\overset{\cref{eq:greatThan1}} {\leq}& (|T_u| - 2) \frac{\pi}{2}  \frac{n-odd(n)}{n-1}  \frac{1}{\phi_u} + \frac{\pi}{2}  \frac{n-odd(n)}{n-1}  \frac{1}{\phi_u} \\
=	 & (|T_u|-1)  \frac{\pi}{2}  \frac{n-odd(n)}{n-1}  \frac{1}{\phi_u}
\end{align*}

Therefore, the only case of interest is when $u$ has at least two children.

Let $u \in T$ be a vertex such that $u$ is drawn by \Cref{algo:2_quadrants} on the $Y$-axis,  $u$ has at least two children, and $T_u$ has $i+1$ vertices with at least two children each.
Let $v_1,v_2,\ldots,v_m$ be the children of $u$ such that the drawing of $T_{v_j},~1\leq j\leq m,$  lies on the first quadrant.
By \Cref{obs:alg}, the angle-range of any $v_j$ must be in the form of $\left< a_1(v_j), a_2(v_j) \right>$ where $a_2(v_j) \leq \frac{\pi}{2}$ or $a_1(v_j) < \frac{\pi}{2} < a_2(v_j)$.
We note that the largest grid (wrt its side-length) on the first quadrant devoted to any 
tree\footnote{Recall that  by $T_v^u$ when $v$ is a child of $u$, we denote the tree that consists of edge $(u,v)$ and $T_v$.} 
$T_{v_j}^u, 1\leq j \leq m$, determines the grid side-length of $\Gamma_u^R$  since the subtrees rooted at children of $u$ are drawn completely inside non-overlapping (but possibly touching) angular sectors. 
The above statement holds because all the grids that are used for the subtrees share as common origin vertex $u$ and  we only care about all angular sectors that at least partially lie in the first quadrant.
Therefore, the grid size required to draw $T_u$ is the maximum of the grid sizes required to draw any of $T_{v_j}^u$.

For any vertex $v_j$ with angle-range $\left< a_1(v_j), a_2(v_j) \right>$, if $a_2(v_j) \leq \frac{\pi}{2}$, i.e., $T_{v_j}$ lies entirely in the first quadrant, then, the statement holds from \Cref{lemma:bound} and by noticing that $\frac{n-odd(n)}{n-1} \geq 1$.
For the vertex $v_j$ (there exists at most one such vertex) that $a_1(v_j) < \frac{\pi}{2} < a_2(v_j)$, the number of vertices in $T_{v_j}$ with at least two children is less or equal to $i$, therefore  the induction hypothesis holds for $\Gamma_{v_j}^R$.
Therefore, the statement holds for the first quadrant for any $T_{v_j}$ which is drawn on a grid with grid-length side bounded by,

\begin{equation*}
(|T_{v_j}|-1)  \frac{\pi}{2}  \frac{n-odd(n)}{n-1}  \frac{1}{\phi_{v_j}}
\end{equation*}

For the edge connecting $u$ to $v_j$, by \Cref{lemma:bound2} we require a grid of side-length bounded by,

\begin{equation*}
\frac{\pi}{2}  \frac{n-odd(n)}{n-1} \frac{1}{\phi_{v_j}}
\end{equation*}

Therefore, the total required grid  has side-length bounded by:

\begin{equation*}
|T_{v_j}|  \frac{\pi}{2}  \frac{n-odd(n)}{n-1}  \frac{1}{\phi_{v_j}}
\end{equation*}

Since we employ \Cref{strategy_main:angle}, it holds that:
\begin{equation}
\phi_{v_j}= \frac{|T_{v_j}|}{|T_u|-1}  \phi_u
\label{eq:ac1}
\end{equation}

Thus, the bound on the side-length of the total required grid can be restated as:
\begin{align*}
|T_{v_j}| \frac{\pi}{2} \frac{n-odd(n)}{n-1}  \frac{1}{\phi_{v_j}} 
&\overset{\left(\ref{eq:ac1}\right)}{=} |T_{v_j}|  \frac{\pi}{2}  \frac{n-odd(n)}{n-1}  \frac{1}{ \frac{|T_{v_j}|}{|T_u|-1}  \phi_u }\\
&= (|T_u|-1)  \frac{\pi}{2}  \frac{n-odd(n)}{n-1}  \frac{1}{\phi_{u}}
\end{align*}

Therefore, the statement holds for the induction step. The proof of the lemma is complete.
\end{description}
\end{proof}

We can now state our main result regarding ``two-quadrant" drawings.

\begin{theorem} \label{thm:2Q_algo_gridSize}
Given a rooted $n$-vertex tree $T$, \Cref{algo:2_quadrants} produces a  monotone planar grid drawing using a grid of size  at most:
\begin{align*}
n  \times  \left(\frac{n+1}{2}\right) &\qquad \text{when n is odd}\\
\left( n+1 \right)  \times \left( \frac{n}{2} + 1\right)  &\qquad \text{when n is even}
\end{align*}
\end{theorem}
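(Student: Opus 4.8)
The monotonicity and planarity of the drawing are already established in Lemma~\ref{lemma:algMonotone2}, so the plan is to bound only the two grid dimensions (the cases $n\le 2$ being trivial). Since Algorithm~\ref{algo:2_quadrants} places the gravity root $r$ at the origin and, by Lemma~\ref{lemma:distance2}, assigns every vertex a non-negative $y$-coordinate, the height of the drawing is simply the largest $y$-coordinate, while the width equals the sum of the \emph{right extent} (the largest $x$-coordinate, attained in the first quadrant) and the \emph{left extent} (the largest $|x|$-coordinate, attained in the second quadrant). The root itself cannot be fed directly into Lemma~\ref{lemma:bound3}, because that lemma's proof relies on inequality (\ref{eq:greatThan1}), which is valid only for non-root vertices, whereas $r$ has $\phi_r=\pi$. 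Hence I would bound each dimension by decomposing the tree at the root and controlling the contribution of every subtree $T_{v_j}$ hanging off a child $v_j$ of $r$ separately.

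For the width, I would fix a child $v_j$ of $r$ and bound the horizontal extent of the drawing of $T^r_{v_j}$. The edge $(r,v_j)$ contributes an $x$-displacement of at most $\frac{\pi}{2}\frac{1}{\phi_{v_j}}\frac{n-odd(n)}{n-1}$ by Lemma~\ref{lemma:bound2}. For the subtree $T_{v_j}$ itself there are two sub-cases dictated by Observation~\ref{obs:alg}: if $T_{v_j}$ lies entirely in one quadrant its side-length is bounded by Lemma~\ref{lemma:bound}, which together with $\frac{n-odd(n)}{n-1}\geq 1$ gives the bound $(|T_{v_j}|-1)\frac{\pi}{2}\frac{n-odd(n)}{n-1}\frac{1}{\phi_{v_j}}$; if $T_{v_j}$ straddles the $Y$-axis then $v_j$ is drawn at $(0,1)$, hence on the $Y$-axis, and Lemma~\ref{lemma:bound3} applies directly to $v_j$ and yields the same bound for each of $\Gamma^R_{v_j}$ and $\Gamma^L_{v_j}$. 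Adding the edge term to the subtree term gives a horizontal extent of at most $|T_{v_j}|\frac{\pi}{2}\frac{n-odd(n)}{n-1}\frac{1}{\phi_{v_j}}$. Now I would substitute the angle-range length assigned by Strategy~\ref{strategy_main:angle} at the root, namely $\phi_{v_j}=\frac{|T_{v_j}|}{n-1}\pi$; the factor $|T_{v_j}|$ cancels and the extent through $v_j$ is bounded by $\frac{n-odd(n)}{2}$, independently of $j$.

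Since the subtrees of the children of $r$ are drawn inside pairwise non-overlapping angular sectors sharing the common origin $r$, the right extent is the maximum over the first-quadrant children of their horizontal extent, and likewise for the left extent; both are therefore bounded by $\frac{n-odd(n)}{2}$. Consequently the total width is at most $n-odd(n)$. The height argument is the same but one-sided: the largest $y$-coordinate reached through any child $v_j$ is again at most $|T_{v_j}|\frac{\pi}{2}\frac{n-odd(n)}{n-1}\frac{1}{\phi_{v_j}}=\frac{n-odd(n)}{2}$ (the $y$-part of the edge bound of Lemma~\ref{lemma:bound2} and the side-length bounds of Lemmas~\ref{lemma:bound} and~\ref{lemma:bound3} already control the vertical direction), so the height is bounded by $\frac{n-odd(n)}{2}$.

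It remains to translate these length bounds into numbers of grid points, recalling the convention that a drawing of width $w$ and height $h$ occupies a $(w+1)\times(h+1)$ grid. For odd $n$ this yields width $n-1$ and height $\frac{n-1}{2}$, i.e. an $n\times\frac{n+1}{2}$ grid, and for even $n$ it yields width $n$ and height $\frac{n}{2}$, i.e. an $(n+1)\times(\frac{n}{2}+1)$ grid, matching the statement. The main obstacle is precisely that the root is exempt from Lemma~\ref{lemma:bound3}: one must not try to bound $\Gamma^R_r$ in one shot but instead assemble the bound from the children, and it is the gravity-root balance --- encoded through the factor $\frac{n-odd(n)}{n-1}$ in Lemmas~\ref{lemma:bound2} and~\ref{lemma:bound3} and through the cancellation $\phi_{v_j}=\frac{|T_{v_j}|}{n-1}\pi$ --- that makes the per-child contributions collapse to the clean value $\frac{n-odd(n)}{2}$ needed to halve the height relative to the one-quadrant algorithm.
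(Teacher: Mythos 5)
Your proof is correct and arrives at the paper's bound through the paper's own quantities: each of the two quadrants gets drawing side-length at most $\frac{n-odd(n)}{2}$, the width is the sum and the height the maximum of the two half-drawings, and the $+1$ conversion from lengths to grid points yields exactly the stated dimensions. The one place you diverge from the paper is the treatment of the root, and there your premise is mistaken: the paper \emph{does} feed $r$ directly into Lemma~\ref{lemma:bound3} (the root is drawn at the origin, hence on the $Y$-axis, and the lemma's statement does not exclude it). This is legitimate because inequality (\ref{eq:greatThan1}) is invoked in that lemma's proof only in the base case ($T_u$ a path) and in the one-child case, and by Observation~\ref{obs:root} a gravity root has at least two children when $n>2$, so neither of those cases can occur for $u=r$; in the multi-child induction step the inequality is needed only for the children (through Lemma~\ref{lemma:bound2}), never for $u$ itself. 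Your root-level decomposition --- edge bound plus subtree bound for each child $v_j$, followed by the cancellation $\phi_{v_j}=\frac{|T_{v_j}|}{n-1}\pi$ --- is precisely that multi-child step of the induction specialized to $u=r$, so in effect you have re-proved the root instance of Lemma~\ref{lemma:bound3} instead of citing it. The detour is harmless, and it does make explicit why the lemma is safe at the root, but it buys nothing beyond what the paper's one-line application already gives.
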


\begin{proof}
The monotonicity and the planarity of the produced drawing follows directly from \Cref{lemma:algMonotone2}.
By applying \Cref{lemma:bound3} with  gravity root $r$, where \Cref{algo:2_quadrants} assigns $a_1(r) = 0$ and $a_2(r) = \pi$, we get that in the worst case the drawing of $T$ that consists of $\Gamma_r^R$ on the first quadrant and $\Gamma_r^L$ on the second quadrant, uses for each one a grid of side-length that is smaller  or equal to:
\begin{equation*}
\left( n-1 \right) \frac{\pi}{2}  \frac{n- odd(n)}{n-1}  \frac{1}{\pi} = \frac{n - odd(n)}{2}
\end{equation*}

The total width of the grid that \Cref{algo:2_quadrants} draws $T$ is the sum of the width of $\Gamma_r^R$ and $\Gamma_r^L$.
The total height of the grid that \Cref{algo:2_quadrants} draws $T$ is the maximum height of $\Gamma_r^R$ and $\Gamma_r^L$.
Given that a grid of  width $w$ and height $h$ is an $(w+1) \times (h+1)$ grid\footnote{Recall that we measure length (width/height)  in units of distance but, when we denote the dimensions of a grid we use the number of grid points in each dimension.},  the size of the  total grid  used by \Cref{algo:2_quadrants} is bounded by:

\begin{align*}
&\left( 2\left( \frac{n - odd(n)}{2} \right) +1 \right) \times \left( \frac{n - odd(n)}{2} + 1 \right)\\
=&\left( n + 1 - odd(n)\right) \times \left( \frac{n - odd(n)}{2} + 1 \right)
\end{align*}

Therefore, when $n$ is odd the grid size is bounded by $n \times \frac{n+1}{2}$ while, when $n$ is even it is bounded by 
is $(n+1) \times \left( \frac{n}{2}+1\right)$.
\end{proof}

\Crefrange{fig:2Q_algo_binTree}{fig:2Q_algo_worstTree}  present  drawings produced by  \Cref{algo:2_quadrants}. Compare \cref{fig:1Q_algo_binTree,fig:1Q_algo_path}  to \cref{fig:2Q_algo_binTree,fig:2Q_algo_path}, respectively, as they depict drawings of the same trees.  \Cref{fig:2Q_algo_binTree} shows the drawing of a 5-layer complete binary tree (31 vertices). While \Cref{thm:2Q_algo_gridSize} indicates that a grid of size $31 \times 16$ may be required, the binary tree is drawn on a $23 \times 12 $ grid. 
\Cref{fig:2Q_algo_path} shows the drawing of a path (15 vertices). The drawing matches the bound stated in \Cref{thm:2Q_algo_gridSize}.  
Finally, \Cref{fig:2Q_algo_worstTree} shows a drawing of a non-path tree (out of all possible10-vertex rooted trees) that requires maximum area (when produced by \Cref{algo:2_quadrants}).

\begin{figure}[h]
	\centering
	\begin{minipage}[t]{0.30\textwidth}
		\centering
		\includegraphics[scale=0.25]{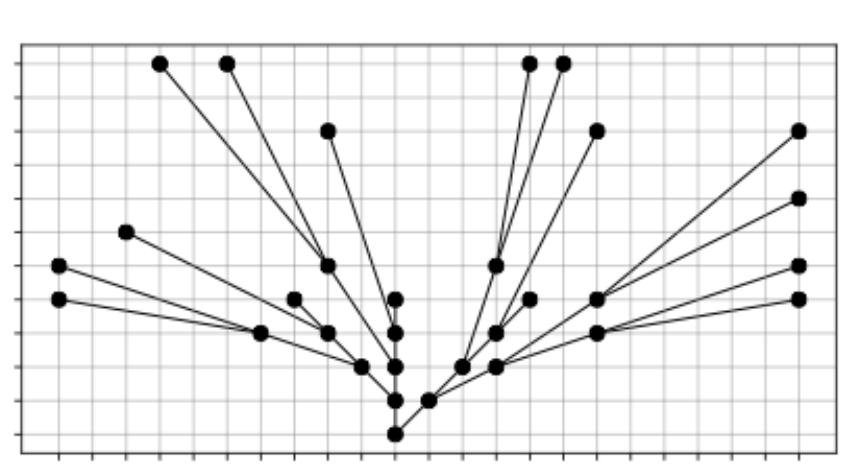}
		\caption{A full binary tree (31 vertices) as drawn by \Cref{algo:2_quadrants}. Grid size: $23 \times 12$.}
		\label{fig:2Q_algo_binTree}
	\end{minipage}
\hfill
	\begin{minipage}[t]{0.30\textwidth}
		\centering
		\includegraphics[scale=0.25]{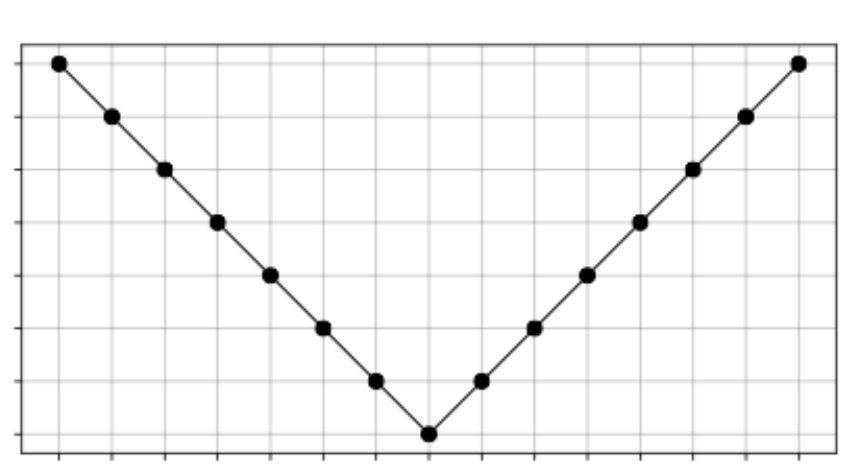}
		\caption{ A path (15 vertices) as drawn by \Cref{algo:2_quadrants}. Grid size: $15 \times 8$.}
		\label{fig:2Q_algo_path}
	\end{minipage}
\hfill
	\begin{minipage}[t]{0.30\textwidth}
		\centering
		\includegraphics[scale=0.20]{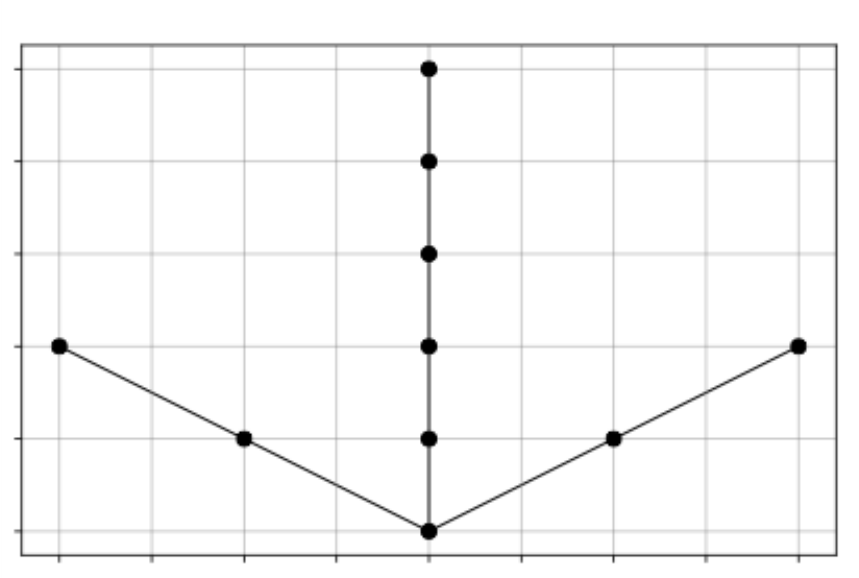}
		\caption{ A non-path tree (10 vertices) with maximum required area when drawn  by \Cref{algo:2_quadrants}. Grid size: $9 \times 6$.}
		\label{fig:2Q_algo_worstTree}
	\end{minipage}
\end{figure}

\begin{comment}
If $a_1 < \frac{\pi}{2} < a_2$ we assign $p$ to $(0,1)$. Let $e$ be the edge that connects the origin to $p$, it is clear that $a_1 < \frac{\pi}{2} < a_2 \Leftrightarrow a_1 < slope(e) < a_2$.

If $a_2 \leq \frac{\pi}{2}$, we find a valid pair $(x,y)$ according to \Cref{lemma:distance2} where $\theta_1 \gets a_1$ and $\theta_2 \gets a_2$ and assign $p=(x,y)$. Let $e$ be the edge that connects the origin to $p$, from \Cref{lemma:distance2} it holds that $\theta_1 < slope(e) < \theta_2 \Leftrightarrow a_1 < slope(e) < a_2$

If $a_1 \geq \frac{\pi}{2}$, we find a valid pair $(x,y)$ according to \Cref{lemma:distance2} where $\theta_1 \gets a_2 - \frac{\pi}{2}$ and $\theta_2 \gets a_1 - \frac{\pi}{2}$ and assign $p=(-x,y)$. Let $e$ be the edge that connects the origin to $p$, from \Cref{lemma:distance2} it holds that $\theta_1 < slope(e) < \theta_2 \Leftrightarrow \frac{\pi}{2} - a_2 < slope(e) < \frac{\pi}{2} - a_1 - \pi$

Now we make a series of observations that help us provide the intuition for the fine-tune version \Cref{algo:2_quadrants}.

\end{comment}

\section{Four-Quadrants Unrooted Monotone Tree Drawing}
\label{sect:4quad_drawing}

In this Section, we provide an algorithm that construct ``four-quadrants'' drawings of good aspect-ratio for unrooted trees.  \Cref{algo:4_quadrants}, which combines \Cref{algo:1_quadrant} and \Cref{algo:2_quadrants},   yields monotone drawings of $n$-vertex trees on an $\floor{\frac{3}{4} \left(n+2\right)} \times \floor{\frac{3}{4} \left(n+2\right)}$ grid. The main idea of the algorithm is that we first locate a gravity root and partition the subtrees rooted at it into  two groups as balanced as possible and, finally,  draw the subtrees in each group into two disjoint areas. We emphasize that we consider ``non-ordered'' trees,  i.e., our algorithm will not respect (if given) the  embedding of the tree.

\begin{lemma}\label{lemma:group}
Let $T$ be an $n$-vertex tree, $n \geq 3$, rooted at a \emph{gravity root} $r$.
Then, we can identify two subtrees $T_1$ and $T_2$ of $T$ of at most $\frac{2  n + 1}{3}$ vertices each, such that $T_1 \cup T_2 = T$ and   
$T_1 \cap T_2 = r$.
\end{lemma}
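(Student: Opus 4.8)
The plan is to reduce the statement to a balanced-partition problem on the sizes of the subtrees hanging off the root. Let $v_1,\dots,v_m$ be the children of $r$ (by Observation~\ref{obs:root} we have $m\ge 2$, since $n>2$), and write $s_i=|T_{v_i}|$, so that $\sum_{i=1}^m s_i = n-1=:N$ and, because $r$ is a gravity root, $s_i \le \lfloor n/2\rfloor$ for every $i$. Any partition of the index set $\{1,\dots,m\}$ into two blocks $A,B$ yields subtrees $T_1=\{r\}\cup\bigcup_{i\in A}T_{v_i}$ and $T_2=\{r\}\cup\bigcup_{i\in B}T_{v_i}$, which are connected, satisfy $T_1\cup T_2=T$ and $T_1\cap T_2=\{r\}$, and have $|T_1| = 1+\sum_{i\in A}s_i$ and $|T_2| = 1+\sum_{i\in B}s_i$. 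Since $1+\tfrac{2N}{3} = \tfrac{2n+1}{3}$, it therefore suffices to split the $s_i$ into two blocks whose sums are each at most $\tfrac{2N}{3}=\tfrac{2(n-1)}{3}$.

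Next I would run a simple prefix/greedy argument to produce such a split. Order the children arbitrarily and consider the prefix sums $P_0=0<P_1<\dots<P_m=N$, and let $k$ be the first index with $P_k\ge N/3$. I would then distinguish two cases according to the size $s_k=P_k-P_{k-1}$ of the subtree that causes the prefix to cross the threshold $N/3$. If $s_k\le N/3$, then $P_{k-1}<N/3$ gives $P_k< \tfrac{2N}{3}$, so taking $A=\{1,\dots,k\}$ makes $\sum_{i\in A}s_i=P_k\in[N/3,\tfrac{2N}{3})$ and $\sum_{i\in B}s_i=N-P_k\le \tfrac{2N}{3}$, as required. If instead $s_k>N/3$, I would isolate this single subtree by taking $A=\{k\}$: then $\sum_{i\in A}s_i=s_k$ and $\sum_{i\in B}s_i=N-s_k<\tfrac{2N}{3}$.

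The only real obstacle is the single large subtree in the second case, where I must verify that isolating $T_{v_k}$ keeps its block within budget, i.e. that $\lfloor n/2\rfloor\le \tfrac{2(n-1)}{3}$. For $n\ge4$ this follows already from $n/2\le \tfrac{2(n-1)}{3}$ (equivalent to $n\ge4$), and for $n=3$ it follows from integrality, since then $\lfloor n/2\rfloor=1\le\tfrac43$; the cases $n\le2$ are trivial and excluded by the $n>2$ hypothesis. This step is precisely where the gravity-root property is indispensable: without the guarantee $s_i\le\lfloor n/2\rfloor$, a single subtree could be too heavy to place in either block. Combining the two cases yields a partition in which each of $T_1,T_2$ has at most $1+\tfrac{2(n-1)}{3}=\tfrac{2n+1}{3}$ vertices, which is exactly the claim.
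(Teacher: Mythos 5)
Your proof is correct, and its skeleton matches the paper's: both reduce the lemma to partitioning the child-subtree sizes $s_i$ (summing to $N=n-1$) into two blocks of sum at most $\tfrac{2N}{3}$, and both use the gravity-root bound $s_i\le n/2$ exactly once, to guarantee that a single heavy subtree (size $>N/3$) still fits in a block by itself. The difference is the mechanism in the balanced case. The paper cases on the \emph{maximum} subtree size $m$: if $m\ge\tfrac{n-1}{3}$ it isolates that largest subtree as $T_1$; if $m<\tfrac{n-1}{3}$ it sorts the subtrees by increasing size and greedily attaches each to the currently smaller block, invoking (without proof) the standard fact that this leaves a final discrepancy of at most $m$. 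Your prefix-sum argument replaces both the sorting and the discrepancy claim with one local observation about the first prefix crossing $N/3$, so it is somewhat more self-contained; it also sidesteps a small arithmetic slip in the paper's greedy case, where the chain $\frac{n-m}{2}+m<\frac{n+m-1}{2}<\frac{2n+1}{3}$ is written although $\frac{n+m}{2}>\frac{n+m-1}{2}$ (the intended bound is $\frac{(n-1)+m}{2}+1<\frac{2(n-1)}{3}+1=\frac{2n+1}{3}$). One caveat applies equally to your proof and the paper's: the lemma as stated has no lower bound on $n$, and for $n=2$ the claim is actually false, since $|T_1|+|T_2|=n+1=3$ forces one block of size $2>\tfrac{5}{3}$; your explicit restriction to $n>2$ (via Observation~\ref{obs:root}) is therefore the right reading of the statement and consistent with how the paper uses the lemma.
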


\begin{proof}
Since we must have that  $T_1 \cup T_2 = T$ and   $T_1 \cap T_2 = r$, it follows that one of the wanted subtrees, say $T_1$, is formed by $r$ and some of the subtrees rooted at its children, while the other, say $T_2$, is formed by  $r$ and the subtrees rooted at its remaining children.  
Given that  $T$ is rooted at a \emph{gravity root},  the size of each subtree rooted at a child of $r$ is bounded by $\frac{n}{2}$.
Let $m$ be the maximum size of a subtree rooted at a child of $r$, where $m \leq \frac{n}{2}$. We consider cases depending on the value of $m$.

\begin{description}
\item[$\boldsymbol{\frac{n-1}{3} \leq m \leq \frac{n}{2}}$:]
$T_1$ is formed  by $r$ and the subtree of size $m$ that is rooted at a child of $r$. $T_1$ is of size $m+1$. Since $m$ is integer and $m \leq \frac{n}{2}$, it follows that $|T_1| \leq  \frac{2  n + 1}{3}$ for $n \geq 3$.
$T_2$ is formed by $r$ and the subtrees rooted at the remaining children of $r$.  $T_2$ is of size  $n-m$.
Since $ \frac{n-1}{3} \leq m \leq \frac{n}{2} \Leftrightarrow \frac{n}{2} \leq n  - m \leq \frac{2  n + 1}{3}$,  it follows that $|T_2| \leq  \frac{2  n + 1}{3}$. Thus, 
the size of each subtree  is bounded by $\frac{2  n + 1}{3}$ .

\item[$\boldsymbol{m < \frac{n - 1}{3}}$:]
In this case, we form $T_1$ and $T_2$ as follows: Initially, both $T_1$ and $T_2$ consist of the gravity root $r$. We then consider the subtrees rooted at the children of $r$ in increasing order of their size. At any given step, we insert the currently examined subtree   to the smaller of $T_1$ or $T_2$ by attaching it to $r$.  
At the end of this procedure,  the difference in size between the $T_1$ and $T_2$ is at most the size of the biggest subtree rooted at a child of $r$, that is, at most  $m$.
Therefore the size of the largest of $T_1$ and $T_2$  is bounded by $\frac{n-m}{2} + m < \frac{n + m +1}{2} < \frac{2  n + 1}{3}$.
\end{description}
\end{proof}

\Cref{algo:4_quadrants} describes at a high level our \emph{four-quadrant} monotone tree drawing algorithm. 
Let $T$ be the input tree with gravity root $r$. Let $T_1$ and $T_2$, $|T_1|\geq |T_2|$, be the two subtrees of $T$ according to  \Cref{lemma:group}. We draw the tree in two steps. In the first step, we  draw $T_1$ according to the ``two-quadrands'' \Cref{algo:2_quadrants}. In doing so, we take special care to   place the path from the gravity root $r^\prime$ of $T_1$  to $r$ on the $X$-axis with the appropriate change in the embedding of $T_1$.
In the second step, we draw $T_2$ according to the traditional ``one-quadrant'' \Cref{algo:1_quadrant}.
Then,  we combine the drawing of $T_1$  with the reflect on the x axis of the drawing of $T_2$.
The way we combine the two drawings  is demonstrated in \Cref{fig:4_quadrands}.
The drawing produced is monotone and its grid size is bounded by
 $\floor{\frac{3}{4} \left(n+2\right)} \times \floor{\frac{3}{4} \left( n+2\right)}$.

\begin{figure}[t]
	\centering
	\begin{minipage}{0.7\textwidth}
		\centering
		\includegraphics[scale=0.45]{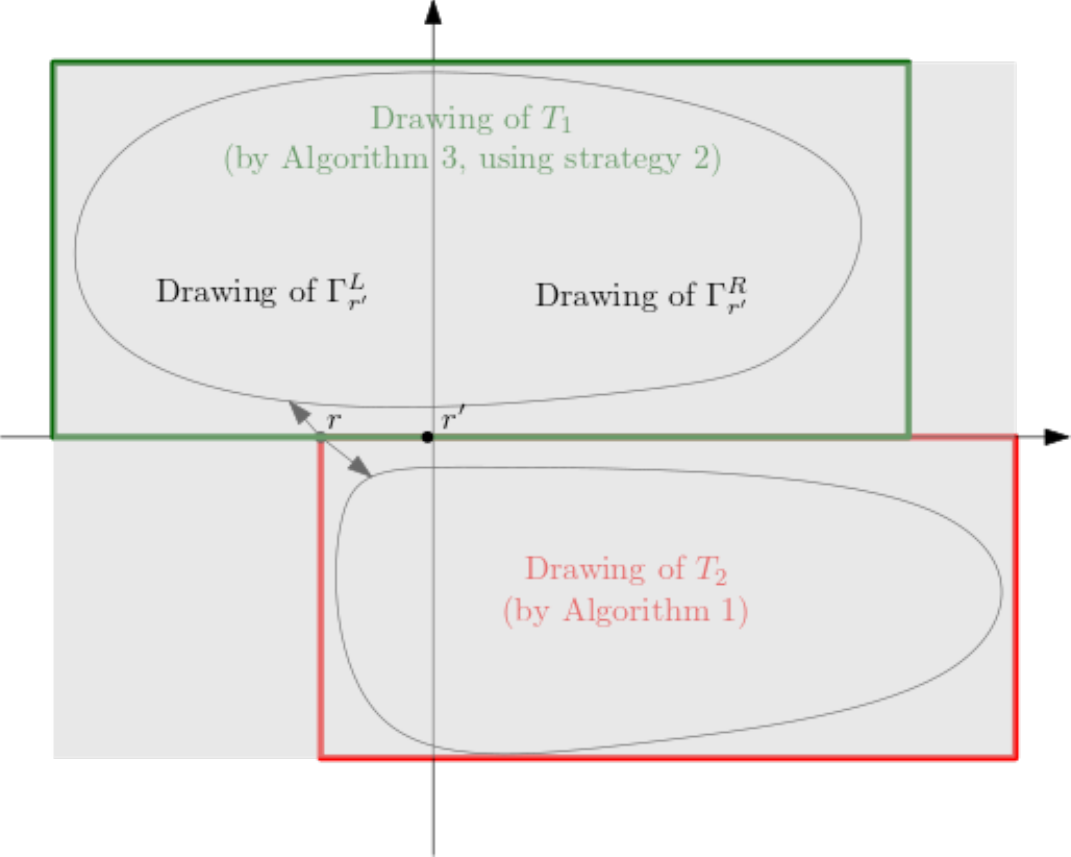}
		\caption{Example of how does \Cref{algo:4_quadrants} places $T_1$ and $T_2$.}
		\label{fig:4_quadrands}
	\end{minipage}
\end{figure}

The next strategy explains how to change the embedding of $T_1$ in order to place  the path from the gravity root $r^\prime$ of $T_1$ to $r$ on the x-axis and to the left of $r^\prime$.

\begin{strategy} \label{strategy:mod}
In  $T_1$,   place each vertex in the path from the gravity root $r^\prime$ of $T_1$ to $r$ as the last child of its parent. In that way, \Cref{strategy_main:angle} (which is employed by \Cref{algo:2_quadrants}) assigns each vertex from $r^\prime$ to  $r$ angle-ranges in the from of $\left< \theta_1, \pi \right>$.
Moreover,  assign slope $\pi$ to each edge in the path. In that way, the whole path from  the gravity root $r^\prime$ of $T_1$ to $r$  is drawn on the $X$-axis.  
\end{strategy}

Observe that, when we apply \Cref{strategy:mod} and draw  tree $T_1$ based on \Cref{algo:2_quadrants}, \Cref{thm:2Q_algo_gridSize} which bounds the drawing area still holds.
This is due to the facts that  (i) each edge $e$  that connects  a vertex $u$ (on the path from $r^\prime$ to $r$)  to its child  with slope $slope(e) = \pi$ is drawn at the boundary of the the angle-range of  $u$,   and (ii) the length of such an edge $e$ is 1, i.e.,  the least length possible. Of course, the drawing remains monotone  and planar as the following lemma indicates.

\begin{algorithm}[ht]
\caption{Four-Quadrants Monotone Tree Drawing algorithm}\label{algo:4_quadrants}
\begin{algorithmic}[1]
\Procedure{4QuadrantTreeMonotoneDraw}{}
\State  \hspace*{-0.5cm}Input: An $n$-vertex unrooted tree $T$.
\State  \hspace*{-0.5cm}Output: A four-quadrant monotone drawing of $T$ on a grid of size  at most $\floor{\frac{3}{4} \left(n+2\right)} \times \floor{\frac{3}{4} \left(n+2\right)}$.
\State 
\State Find $T_1$ and $T_2$ according to \Cref{lemma:group}, where $|T_1| \geq |T_2|$.
\State Draw $T_1$ according to \Cref{algo:2_quadrants} with the modification of \Cref{strategy:mod}.
\State Draw $T_2$ according to \Cref{algo:1_quadrant}.
\State Combine the drawing of $T_1$ with the reflect on the $X$-axis  drawing of $T_2$.
\EndProcedure
\end{algorithmic}
\end{algorithm}

\begin{lemma}
\label{lem:alg_4Q_monotone}
The drawing produced by \Cref{algo:4_quadrants} is planar and monotone.
\end{lemma}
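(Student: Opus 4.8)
The plan is to treat the two pieces of the construction separately and then argue that gluing them at the gravity root $r$ introduces neither crossings nor non-monotone pairs. Throughout I place $r'$ (the gravity root of $T_1$) at the origin, so that by Strategy~\ref{strategy:mod} the path $r'=q_0,q_1,\dots,q_\ell=r$ lies on the negative $X$-axis (each edge horizontal, of slope $\pi$ and length $1$), the rest of $T_1$ occupies the closed upper half-plane, and the reflected drawing of $T_2$ hangs from $r=(-\ell,0)$ strictly inside the fourth quadrant.

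For planarity I would first record the separation property: by Observation~\ref{obs:alg} every non-path vertex of $T_1$ receives a strictly positive $y$-increment, so $T_1$ lies in the closed upper half-plane with only the horizontal path on the line $y=0$, while the reflected one-quadrant drawing of $T_2$ places every vertex other than $r$ strictly below the $X$-axis. Hence the two sub-drawings meet only at $r$ and no $T_1$-edge can cross a $T_2$-edge. Since each sub-drawing is itself planar (a monotone drawing is planar by Theorem~\ref{t:b}), the union is planar.

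Monotonicity splits into three cases according to where the two endpoints lie. If both lie in $T_2$, the connecting path stays in $T_2$; the drawing of $T_2$ is monotone because Algorithm~\ref{algo:1_quadrant} is monotone (Lemma~\ref{lemma:algMonotone}) and reflection across the $X$-axis is an isometry that preserves monotonicity. If both lie in $T_1$, I must certify that the Algorithm~\ref{algo:2_quadrants}-drawing produced under Strategy~\ref{strategy:mod} is monotone; the only obstruction to its being non-strictly slope-disjoint is that the path edges sit exactly on the slope-$\pi$ boundary. I would remove this by rotating the whole $T_1$-drawing clockwise by a tiny $\delta$ smaller than the least edge-slope, which pushes every slope strictly into $(0,\pi)$; keeping the original angle-ranges but resetting the upper boundary of each path vertex back to $\pi$ (an enlargement, so Property-1 cannot be violated and the nesting/disjointness of ranges is preserved) then yields a genuine non-strictly slope-disjoint assignment, and by Lemma~\ref{l:sd} and Theorem~\ref{thm:NSslopeDis_monotonePlanar} the rotated --- hence the original --- drawing is monotone.

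The heart of the argument is the mixed case, $u\in T_1$ and $v\in T_2$, where the unique path is the concatenation of the $u$-to-$r$ path inside $T_1$ and the $r$-to-$v$ path inside $T_2$. I would compute the edge directions in traversal order: climbing from $u$ up to the $X$-axis reverses upper-half-plane edges, giving directions in $(\pi,2\pi)$; running leftward along the $X$-axis to $r$ contributes the single direction $\pi$ (Strategy~\ref{strategy:mod}); and descending into $T_2$ contributes directions in $(\tfrac{3\pi}{2},2\pi)$, since the reflected drawing lives strictly in the fourth quadrant. All of these lie in the half-open semicircle $[\pi,2\pi)$, and in fact in a closed sub-arc $[\pi,\beta]$ with $\beta<2\pi$, which is contained in an open half-circle; projecting onto the bisecting direction makes the entire path strictly monotone. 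The main obstacle is precisely this mixed case together with the slope-$\pi$ boundary edges, and both are tamed by the two design choices --- pinning the $r'$-to-$r$ path to the $X$-axis pointing towards $r$, and reflecting $T_2$ strictly below the axis --- which jointly force every traversal direction of a mixed path into one open half-circle. Combining the three cases gives monotonicity, and together with the planarity argument the lemma follows.
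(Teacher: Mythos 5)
Your proof is correct and takes essentially the same approach as the paper: the same three-case split (both endpoints in $T_2$, both in $T_1$, mixed), the same arbitrarily-small clockwise rotation to repair the slope-$\pi$ violation of Property-1 introduced by Strategy~\ref{strategy:mod}, and the same observation that all traversal directions along a mixed path lie in an arc of angular length less than $\pi$, hence in an open half-circle. The only minor differences: you establish planarity directly via a half-plane separation argument whereas the paper obtains it from monotonicity through Theorem~\ref{t:b}, and your explicit rotation bound (``$\delta$ smaller than the least edge-slope'') should instead be smaller than the minimum of $slope(e)-a_1(u)$ over all edges $e$ and vertices $u$ whose angle-range must contain $slope(e)$ --- exactly the slack the paper's ``arbitrarily small $\epsilon$'' implicitly uses --- but neither point affects correctness.
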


\begin{proof}
We prove the lemma by showing that the unique simple path that connects two arbitrary vertices $u,~v$ of tree $T$ is monotone with respect to some direction. This will imply the monotonicity of the drawing of $T$ and, by \Cref{t:b}, its planarity.

Consider the drawing of an arbitrary tree $T$ produced by \Cref{algo:4_quadrants}, and let $u,~ v$ be two arbitrary vertices of $T$.
\begin{description}
\item[Case 1: $u \in T_2$ and $v \in T_2$.]
By  \Cref{lemma:algMonotone}  the drawing of $T_2$ is monotone. Given that the simple path from $u$ to $v$ is entirely contained in $T_2$, the path is monotone.

\item[Case 2: $u \in T_1$ and $v \in T_1$.]
If $T_1$ was drawn by \Cref{algo:2_quadrants} (as it is described in \Cref{sect:2quad_drawing}) then, by \Cref{lemma:algMonotone2}, the drawing of $T_1$ would be monotone. Thus, the simple path from $u$ to $v$ would also be monotone since it is entirely contained in $T_1$. However,  \Cref{algo:4_quadrants}   additionally applies \Cref{strategy:mod} when drawing $T_1$, and thus, we have to ensure that the changes in the drawing due to \Cref{strategy:mod} do not affect its monotonicity.

Let $r$ and $r^\prime$ be the gravity roots of $T$ and $T_1$, respectively. The only edges that  violate the non-strictly slope-disjoint property of the produced drawing are those that enter nodes on the path from $r^\prime$ to $r$. Recall that, by \Cref{strategy:mod} all  these edges lie on the $X$-axis and have slope $\pi$. Let $e=(u,v)$ be such an  edge in $T_1$ from vertex $u$ to its child $v$  of slope $\pi$.

This is a violation to Property-1 of non-strictly slope disjoint drawings (see \Cref{def:nonStrictlySlopeDisjoint}). Property-1 requires that  every edge $e$ from  a vertex $u$ to any of its children has a slope that falls within the angle-range of $u$ and does not take the boundary values, that is, $a_1(u) < slope(e) < a_2(u)$.  In our case, by \Cref{strategy:mod} we have that $a_2(u) = \pi$ and $slope(e) = a_2(u) = \pi$.
Therefore, for edge $e=(u,v)$ it holds that $a_1(u) < slope(e) \leq a_2(u)$.

We can rotate the whole drawing clockwise around the gravity root $r^\prime$ of $T_1$ by an arbitrarily small amount $\epsilon>0$.
If we denote by $slope(e)$ the slope of the edge $e$ in the original drawing and by $slope'(e)$ the slope of the edge $e$ in the new rotated drawing, it holds that $slope'(e) = slope(e) - \epsilon$.

For any vertex $u$ with angle-range in the form of $\left<a_1(u), a_2(u) \right>$, for any  edge $e= (u,v)$ that connects $u$ to its child $v$, since $\epsilon >0$ is arbitrarily small,  it holds that:
\begin{align*}
& a_1(u)  < slope(e)  \leq a_2(u) \\
\Rightarrow &a_1(u) -\epsilon < slope'(e) \leq a_2(u) - \epsilon \\
\Rightarrow &a_1(u)  < slope'(e)  < a_2(u)
\end{align*}

So, for the slightly rotated drawing, all the properties of non-strictly slope disjoint drawings are satisfied and,  by \Cref{thm:NSslopeDis_monotonePlanar},  the drawing of $T_1$ is monotone and planar.

\item[if $u \in T_1$ and $v \in T_2$:] The simple path from vertex $u$ to vertex $v$  
is the concatenation of the simple path from $u$ to the gravity root $r$  of $T$ and of the simple path from $r$ to $v$.
If we consider  $r$ as the origin, the edges from  $u$ and $r$ lie inside the first two quadrants, with the exception of the edges between $r$ and the gravity root $r^\prime$ of $T_1$ which lie on the 
$X$-axis, while the edges between $r$ and $v$ lie  inside the fourth quadrant.
 
It is easy to observe that the combined path is monotone with respect to a line with slope $\frac{\pi}{2} + \epsilon$ where $\epsilon>0$ is arbitrarily small. Crucial to this observation is that the two drawing do not overlap. Indeeed, the drawing of $T_2$, as it is drawn with \Cref{algo:1_quadrant}, it lies entirely in the fourth quadrant (with respect to $r$) and non of its vertices lies on the $X$-axis.
\end{description}
From the three cases above, we conclude that the produced drawing is monotone and planar. This completes the proof.
\end{proof}

\begin{theorem} \label{thm:4Q_algo_gridSize}
Given an  $n$-vertex Tree $T$, \Cref{algo:4_quadrants} draws $T$ in a grid of size  at most $\floor{\frac{3}{4} \left(n+2\right)} \times \floor{\frac{3}{4} \left( n+2\right)}$.
\end{theorem}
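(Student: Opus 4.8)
The plan is to observe that monotonicity and planarity are already guaranteed by Lemma~\ref{lem:alg_4Q_monotone}, so the theorem reduces entirely to bounding the dimensions of the combined grid drawing. Write $n_1=|T_1|$ and $n_2=|T_2|$. Since $T_1$ and $T_2$ share only the gravity root $r$, we have $n_1+n_2=n+1$; since $|T_1|\ge|T_2|$ we have $n_1\ge\frac{n+1}{2}$; and by Lemma~\ref{lemma:group} we have $n_1\le\frac{2n+1}{3}$. These three relations, together with Theorem~\ref{thm:2Q_algo_gridSize} (for $T_1$) and Theorem~\ref{thm:1Q_algo_gridSize} (for $T_2$), are all I would use. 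The drawing places $T_1$ in the closed upper half-plane and the reflected copy of $T_2$ strictly in the lower half-plane, the two parts meeting only at $r$, so I would bound the total height and the total width separately and finally convert units into grid points.

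For the \emph{height}, the vertical side-length is the height of the $T_1$-part above the $X$-axis plus that of the reflected $T_2$-part below it. By Theorem~\ref{thm:2Q_algo_gridSize} the former is at most $\frac{n_1-odd(n_1)}{2}$ and by Theorem~\ref{thm:1Q_algo_gridSize} the latter is at most $n_2-1$. Substituting $n_2=n+1-n_1$ collapses the sum to $n-\ceil{n_1/2}$, and the bound $n_1\ge\frac{n+1}{2}$ then yields a height of at most $\frac{3n-1}{4}$ units.

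For the \emph{width} I would argue from the layout of Figure~\ref{fig:4_quadrands}. With the gravity root $r^\prime$ of $T_1$ at the origin, Strategy~\ref{strategy:mod} forces $r$ onto a point $(-k,0)$ with $0\le k\le L_1$, where $L_1,R_1$ are the left and right horizontal extents of the $T_1$-drawing, each at most $\frac{n_1-odd(n_1)}{2}$ by Lemma~\ref{lemma:bound3}. The reflected $T_2$ grows rightward from $(-k,0)$ by at most $W_2\le n_2-1$, so the leftmost point of the whole figure is $-L_1$ and the rightmost is $\max(R_1,\,W_2-k)$. Splitting on which term attains this maximum bounds the width by $\max\bigl(n_1-odd(n_1),\ \frac{n_1-odd(n_1)}{2}+n_2-1\bigr)$. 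The first term is controlled by the \emph{upper} bound $n_1\le\frac{2n+1}{3}$ of Lemma~\ref{lemma:group}, while the second term equals the height expression and is again at most $\frac{3n-1}{4}$.

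Finally I would add $1$ to each side-length to pass from units to grid points and verify that both counts are at most $\floor{\frac{3}{4}(n+2)}$, handling the floor and the parity terms $odd(\cdot)$ explicitly. I expect the width analysis to be the main obstacle: unlike the height, the two pieces are not simply stacked, so one must track the position of $r$ on the $X$-axis and the left/right split of the $T_1$-drawing to justify the maximum, and it is exactly here that \emph{both} inequalities of Lemma~\ref{lemma:group} enter — the lower bound $n_1\ge\frac{n+1}{2}$ governs the height and the second width term, while the upper bound $n_1\le\frac{2n+1}{3}$ governs the first width term. The remaining work is the routine floor/parity bookkeeping needed to land on exactly $\floor{\frac{3}{4}(n+2)}$ rather than an asymptotic $\frac34 n$.
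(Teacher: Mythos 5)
Your proposal is correct and takes essentially the same route as the paper: decompose the drawing into the $T_1$-part (drawn by Algorithm~\ref{algo:2_quadrants} in the upper half-plane) and the reflected $T_2$-part, bound the height by $\frac{|T_1|}{2}+|T_2|-1$ and the width by $\max\left(|T_1|,\ \frac{|T_1|}{2}+|T_2|-1\right)$ via Theorems~\ref{thm:2Q_algo_gridSize} and~\ref{thm:1Q_algo_gridSize}, invoke both inequalities tied to Lemma~\ref{lemma:group} ($|T_1|\leq\frac{2n+1}{3}$ for the first term, $|T_1|\geq\frac{n+1}{2}$ for the second), and convert side-lengths to grid points. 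The only differences are organizational and in your favor: the paper splits into cases on whether $\frac{|T_1|}{2}>|T_2|-1$ and, in the first case, pins down $|T_2|=\frac{n+2}{3}$ by an integrality argument, whereas you bound the two terms of the maximum directly; you also justify the worst-case horizontal placement by tracking the offset $k$ of $r$ from $r^\prime$ explicitly, where the paper simply asserts from Figure~\ref{fig:4_quadrands} that the worst case occurs when $r^\prime$ coincides with $r$.
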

\begin{proof}
Let $r$ and $r^\prime$ be the vertices used by \Cref{algo:4_quadrants} as the gravity roots of $T$ and  $T_1$, respectively.  Based on  the modification of the drawing of $T_1$ by \Cref{strategy:mod}, $r$ lies in the second quadrant if we assume $r^\prime$ as the origin node.
Furthermore, $T_2$ is drawn in the fourth quadrant if we assume $r$ as the origin node. From \Cref{fig:4_quadrands}, it is clear that the worst case grid size for the combined drawing is realized
 when  $r^\prime$ coincides with $r$.
%In that case, the grid size of the new drawing is at most the addition of the height of both trees, and how much does $T_2$ exceeds $T_1$ in the x-bound.

By \Cref{thm:2Q_algo_gridSize}, the grid side-length of subdrawings $\Gamma_{r'}^R$ in the first quadrant and $\Gamma_{r'}^L$ in the second quadrant is bounded by $\frac{|T_1|}{2}$ while in  the fourth quadrant,  according to \Cref{thm:1Q_algo_gridSize}, the side-length   is  at most $|T_2|-1$.

Therefore, the grid width is at most $max(|T_1|,  \frac{|T_1|}{2}  + |T_2|-1)$ and the grid height is at most $ \frac{|T_1|}{2} + |T_2|-1$.
We consider two  cases depending on  whether $\frac{|T_1|}{2} > |T_2|-1$.\\

\begin{description}
\item[Case 1: $\boldsymbol{\frac{|T_1|}{2} > |T_2|-1}$.~]
It this case, it is clear that  both the width and the height of the  drawing are bounded by $|T_1|$.
Given that the gravity root $r$ is included in both $T_1$ and $T_2$, we have that:
\begin{equation} \label{eq:f1}
|T_1|+|T_2| = n+1 
\end{equation}
From the assumption, we have:
\begin{align*}
& \frac{|T_1|}{2} > |T_2|-1 \\
~~\Rightarrow~~ & \frac{|T_1| + |T_2|}{2} > \frac{3}{2}  |T_2| - 1 \\
\overset{\left(\ref{eq:f1} \right) }{~~\Rightarrow~~} & \frac{n+1}{2} > \frac{3}{2}  |T_2| - 1 \\
~~\Rightarrow~~ & \frac{1}{3}  n + 1 > |T_2|
\end{align*}

Furthermore, we also have that:
\begin{align*}
|T_2| &=  n+1 - |T_1| \\
&\hspace*{-.65cm}\overset{\left( \Cref{lemma:group} \right)}  {\geq}  n + 1 - \left( \frac{2}{3}  n +\frac{1}{3} \right)\\
& \geq  \frac{1}{3}  n + \frac{2}{3} 
\end{align*}

Therefore,  since $\frac{n}{3} +\frac{2}{3} \leq |T_2| < \frac{n}{3} + 1$, the only integer that satisfies this set of inequalities is $|T_2| = \frac{1}{3}  n + \frac{2}{3}$.
So, by~(\ref{eq:f1}), $|T_1| = n+1 - \left( \frac{1}{3} n +\frac{2}{3} \right) = \frac{2}{3} n + \frac{1}{3}$.
Thus, the required grid is of size\footnote{Recall that, in general, a drawing of length $l$ and width $w$ is drawn in a grid of dimensions $(l+1) \times (w+1)$. } at most:
\begin{equation*}
\left( \frac{2}{3} n + \frac{4}{3}\right) \times \left( \frac{2}{3} n + \frac{4}{3}\right)
\end{equation*}

This grid,  for any $n\geq 1$, fits in  a grid of dimensions: 
\begin{equation*}
\left( \frac{3}{4} (n + 2)\right) \times \left( \frac{3}{4} (n + 2) \right)
\end{equation*}

Therefore, the statement holds.\\

\item[Case 2: $\boldsymbol{ \frac{|T_1|}{2} \leq |T_2|-1}$.~]
In this case, the grid side-length is: 
\begin{align*}
\frac{|T_1|}{2}+|T_2|-1 &=\frac{|T_1|+|T_2|}{2} + \frac{|T_2|}{2} - 1\\
&\hspace*{-.17cm}\overset{\left( \ref{eq:f1} \right)~} {=} \frac{n+1}{2} + \frac{|T_2|}{2} - 1 \\
&\hspace*{-.45cm}\overset{|T_2| \leq |T_1|}{\leq}\enspace  \frac{n+1}{2} + \frac{n+1}{4} - 1 \\
&= \frac{3n-1}{4}
\end{align*}

Therefore, in this case the required grid is of size
\begin{equation*}
\left( \frac{3}{4}  n + \frac{3}{4} \right) \times \left(\frac{3}{4}  n + \frac{3}{4}\right)
\end{equation*}

which, obviously, fits in a grid of size
\begin{equation*}
\left( \frac{3}{4}  (n + 2) \right) \times \left(\frac{3}{4}  (n + 2)\right)
\end{equation*}

\end{description}

Since the bound for the grid size must be integer, the floor of the bound also bounds the grid size. Therefore, as stated in the lemma, the required grid is of size:
\begin{equation*}
\left\lfloor \frac{3}{4}  \left( n+2 \right) \right\rfloor \times \left\lfloor \frac{3}{4}  \left( n + 2 \right) \right\rfloor
\end{equation*}
~\end{proof}

\crefrange{fig:4Q_algo_binTree}{fig:4Q_algo_worstTree}  present  drawings produced by  \Cref{algo:4_quadrants}. In the drawings, we indicate by  a solid square (rhombus)  the gravity root of tree $T$ (resp., $T_1$).  \Cref{fig:4Q_algo_binTree} shows a drawing of aspect-ratio equal to one for  a 5-layer complete binary tree (31 vertices). 
While \Cref{thm:4Q_algo_gridSize} indicates that a grid of size $24 \times 24$ may be required, the binary tree is drawn on a $17 \times 17 $ grid. 
\Cref{fig:4Q_algo_path} shows the drawing of a path (15 vertices). 
Finally, \Cref{fig:2Q_algo_worstTree} shows a drawing of a non-path tree (out of all possible 10-vertex rooted trees) that requires maximum area (when produced by \Cref{algo:2_quadrants}). We have drawn all 10-vertex rooted trees and have identified non-path trees that require  the maximum area. While  \Cref{thm:4Q_algo_gridSize} indicates that an $9 \times 9$ grid may be used for a tree of 10 vertices, the drawing of maximum area uses a grid of size $8 \times 7$.

\begin{figure}[h]
	\centering
	\begin{minipage}[t]{0.48\textwidth}
		\centering
		\includegraphics[scale=0.35]{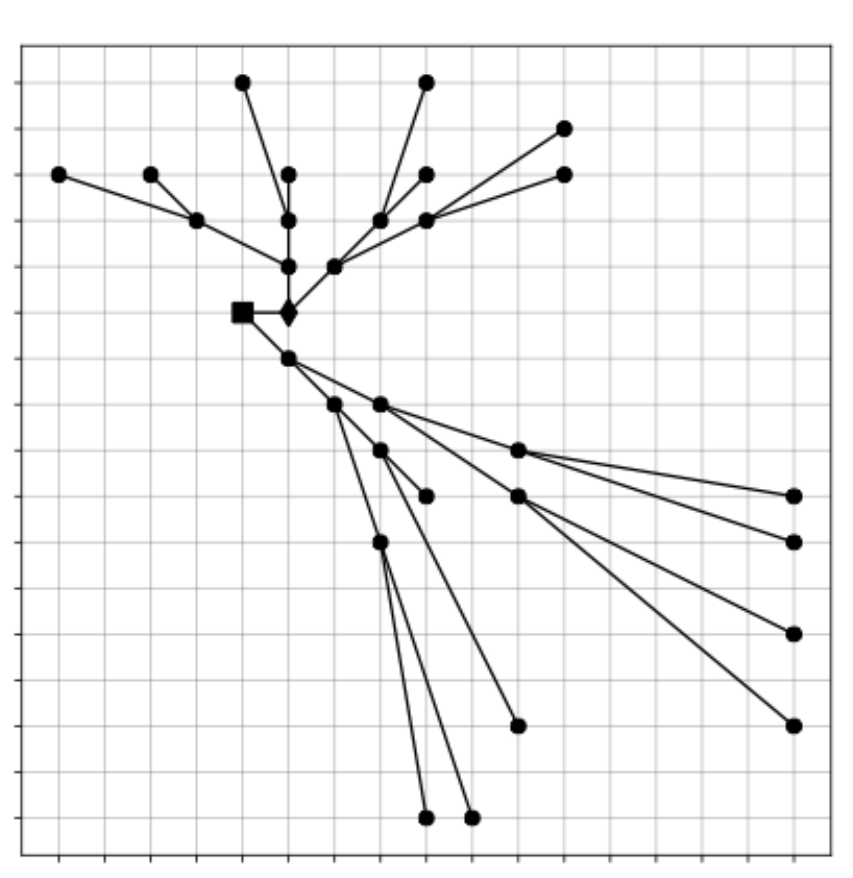}
		\caption{A full binary tree (31 vertices) as drawn by \Cref{algo:4_quadrants}. Grid size: $17 \times 17$.}
		\label{fig:4Q_algo_binTree}
	\end{minipage}
\hfill
	\begin{minipage}[t]{0.45\textwidth}
		\centering
		\includegraphics[scale=0.35]{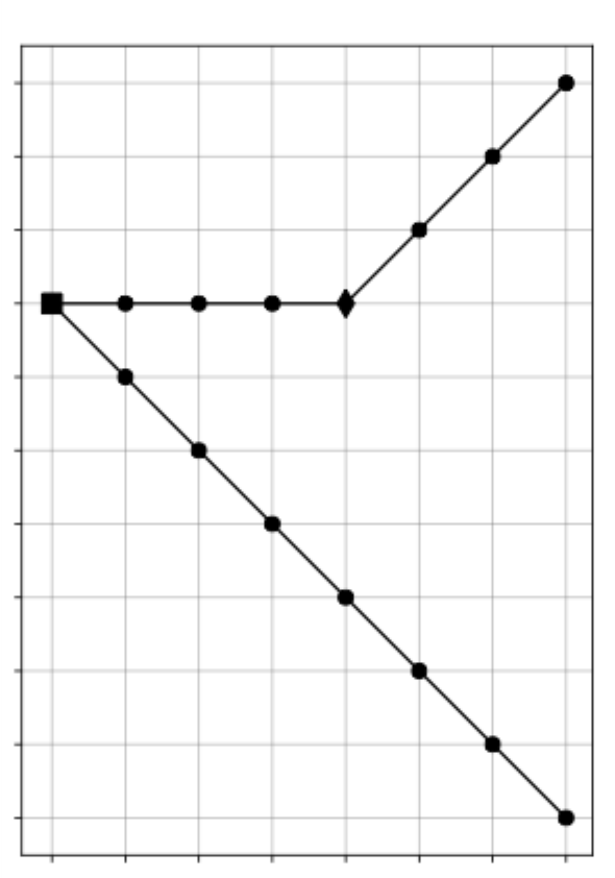}
		\caption{ A path (15 vertices) as drawn by \Cref{algo:4_quadrants}. Grid size: $8 \times 11$.}
		\label{fig:4Q_algo_path}
	\end{minipage}
\end{figure}

\begin{figure}[h]
		\centering
		\includegraphics[scale=0.35]{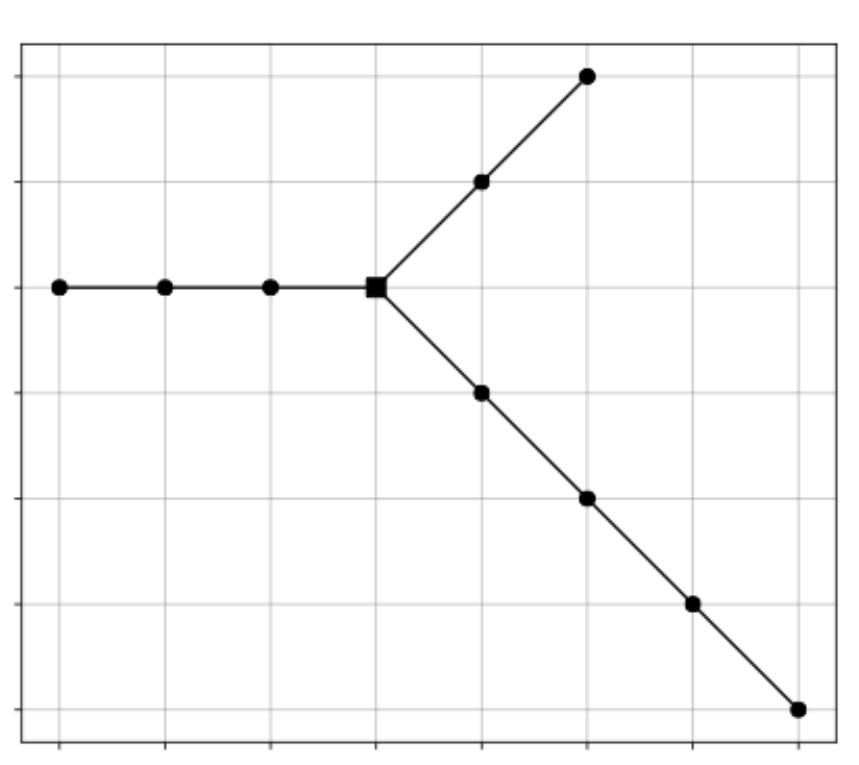}
		\caption{ A non-path tree (10 vertices) with maximum required area when drawn  by \Cref{algo:4_quadrants}. Grid size: $8 \times 7$.}
		\label{fig:4Q_algo_worstTree}
\end{figure}

\begin{comment}
\begin{figure}
	\begin{minipage}[htb]{0.45\textwidth}
		\centering
		\includegraphics[scale=0.35]{figures/L_4Q_worst-area-x_10_vertices.pdf}
		\caption{ A non-path tree (10 vertices) with maximum required area when drawn  by \Cref{algo:4_quadrants}. Grid size: $8 \times 7$.}
		\label{fig:4Q_algo_worstTree}
	\end{minipage}
\end{figure}
    
\end{comment}	
%\newpage

\section{Conclusion and Open Problems}
\label{sect:conclusions}

We have described several algorithms that produce monotone drawings of trees. The algorithm that has the best aspect ratio produces a monotone drawing of  an $n$-vertex tree on a grid of size at most  
$\floor{\frac{3}{4} (n+2)} \times \floor{\frac{3}{4} (n+2)}$.  The following problems on monotone tree drawings are worth studying:
\begin{enumerate}

\item He and He~\cite{mt:4} described a tree that requires for its monotone drawing a grid of size at least $\frac{n}{9} \times \frac{n}{9}$. Can this bound be improved? Is there a tree that requires a larger grid for its monotone drawing?

\item The angular resolution of the produced drawing has not been studied. Is there a trade-off between the angular resolution and the grid size of the monotone drawing? 

\end{enumerate}

\newpage
\bibliography{bib_1_sym}

\bibliographystyle{siamplain}
\end{document}